\def\lst@visiblespace{$\;$}
\newcommand{\expv}{\mathbb{E}}
\lstdefinelanguage{affprob}
{
	morekeywords={angel,demon, choice, prob, if, then, else, fi,
		while, do, od,
		true, false, and, or, skip, sample, assume},
	sensitive = false,
	escapeinside={@:}{:@}
}
\tikzstyle{ang}=[regular polygon, regular polygon sides = 3,draw,inner sep=0pt,minimum size=6mm, yshift = -0.75 mm]
\tikzstyle{dem}=[shape=diamond,draw,inner sep=0pt,minimum size=6mm]
\tikzstyle{ran}=[shape=circle,draw,inner sep=0pt,minimum size=6mm]
\tikzstyle{det}=[shape=rectangle,draw,inner sep=0pt,minimum size=5mm]
\tikzstyle{tran}=[draw,->,>=stealth, rounded corners]
\tikzstyle{pt}=[shape=circle,draw,inner sep=0pt,minimum size=0mm]
\tikzstyle{trannoarrow}=[draw,>=stealth, rounded corners]
\renewcommand{\vec}[1]{\mathbf{#1}}
\newcommand{\term}{Term}
\newcommand{\Term}{\term}
\newcommand{\Termset}{\mathit{Terminates}}
\newcommand{\support}{\mathit{supp}}
\newcommand{\vars}{\mathcal{V}}
\newcommand{\up}{u}
\newcommand{\pCFG}{\mathcal{C}}
\newcommand{\cfg}[2]{\vec{C}^{#1}_{#2}}
\newcommand{\run}{\varrho}
\newcommand{\locinit}{\loc_{\mathit{init}}}
\newcommand{\vecinit}{\vec{x}_{\mathit{init}}}
\newcommand{\natfilt}{\mathcal{R}}
\newcommand{\lem}{\eta}
\newcommand{\stime}{T}
\newcommand{\ttime}{\mathit{Term}}
\newcommand{\updates}{\mathit{Up}}
\newcommand{\Fpath}{\mathit{Fpath}}
\newcommand{\Run}{\mathit{Run}}
\newcommand{\genpath}{\pi}
\newcommand{\minlev}{\mathit{min}\text{-}\mathit{lev}}
\newcommand{\indicator}[1]{{\mathbb{I}}_{#1}}
\newcommand{\locterm}{\loc_\lout}
\newcommand{\OmegaRun}{\Omega_{\mathit{Run}}}
\newcommand{\pvars}{V}
\newcommand{\locs}{\mathit{L}}
\newcommand{\loc}{\ell}
\newcommand{\tran}{\tau}
\newcommand{\E}{\mathbb{E}}
\newcommand{\Rset}{\mathbb{R}}
\newcommand{\lout}{\mathit{out}}
\newcommand{\probm}{\mathbb{P}}
\newcommand{\transitions}{\mapsto}
\newcommand{\guards}{G}
\newcommand{\conf}{c}
\newcommand{\eps}{\varepsilon}
\newcommand{\gentransitions}{\Delta}
\newcommand{\succDist}{\delta}
\newcommand{\Unif}{\mathit{Uni}}
\newcommand{\predicate}{\Psi}
\renewcommand{\epsilon}{\eps}
\renewcommand{\vars}{\pvars}
\newcommand{\rf}{f}
\newcommand{\state}{s}
\newcommand{\tr}{\tau}
\newcommand{\unneg}{\snneg}
\newcommand{\prank}{\textit{P-RANK}}
\newcommand{\unif}{\Unif}
\newcommand{\snneg}{\textit{S-NNEG}}
\newcommand{\pnneg}{\textit{P-NNEG}}
\newcommand{\expneg}{\textit{EXP-NNEG}}
\newcommand{\weakexpneg}{\textit{W-EXP-NNEG}}
\newcommand{\unbounded}{\textit{UNBOUND}}
\newcommand{\normaldist}{\mathit{Norm}}
\newcommand{\cutset}{S}
\begin{document}
	
	\title{On Lexicographic Proof Rules for Probabilistic Termination}
	
	\author{Krishnendu Chatterjee\inst{1} \and Ehsan Kafshdar Goharshady\inst{2}  \and Petr Novotn\'y \inst{3} \and Ji\v{r}\'i Z\'arev\'ucky \inst{3} \and \DJ or\dj e \v{Z}ikeli\'c\inst{1}}
	
	\institute{IST Austria, Klosterneuburg, Austria \\ \email{krishnendu.chatterjee@ist.ac.at, djordje.zikelic@ist.ac.at}\and Ferdowsi University of Mashhad, Mashhad, Iran \\ \email{e.kafshdargoharshady@mail.um.ac.ir} \and Masaryk University, Brno, Czech Republic \\ \email{petr.novotny@fi.muni.cz, xzarevuc@fi.muni.cz}}



\maketitle

\begin{abstract}
	We consider the almost-sure (a.s.) termination problem for probabilistic programs, which are a stochastic extension of classical imperative programs. Lexicographic ranking functions provide a sound and practical approach for termination of non-probabilistic programs, and their extension to probabilistic programs is achieved via lexicographic ranking supermartingales (LexRSMs). However, LexRSMs introduced in the previous work have a limitation that impedes their automation: all of their components have to be non-negative in all reachable states. This might result in LexRSM not existing even for simple terminating programs. Our contributions are twofold: First, we introduce a generalization of LexRSMs which allows for some components to be negative. This standard feature of non-probabilistic termination proofs was hitherto not known to be sound in the probabilistic setting, as the soundness proof requires a careful analysis of the underlying stochastic process. Second, we present polynomial-time algorithms using our generalized LexRSMs for proving a.s. termination in broad classes of linear-arithmetic programs.
	
\end{abstract}

\keywords{Probabilistic programs \and Termination \and Martingales}

\section{Introduction}
\label{sec:intro}

The extension of classical imperative programs with randomization gives rise to probabilistic programs (PPs)~\cite{GHNR14:prob-programming}, which are used
in multitude of applications, including 
stochastic network protocols~\cite{BaierBook,prism,netkat,netkat2},
randomized algorithms~\cite{RandBook,RandBook2}, 
security~\cite{BGGHS16:diff-privacy-coupling,BGHP16:diff-privacy-siglog} ,
machine learning, and planning~\cite{LearningSurvey,G15,roy2008stochastic,gordon2013model,scibior2015practical,claret2013bayesian,thrun2002probabilistic}.
The analysis of PPs is an active research area in formal methods \cite{SriramCAV,pmaf,pldi18,AgrawalC018,CFNH16:prob-termination,CFG16,EsparzaGK12,KKMO16:wp-expected-runtime,OKKM16:recursive-prob-wp-calculus,kaminski2018hardness}.
PPs can be extended with nondeterminism to allow 
over-approximating program parts that are too complex for static analysis~\cite{DBLP:conf/popl/CousotC77,MM05}.


For non-probabilistic programs, the \emph{termination} problem asks whether
a given program {\em always} terminates. While the problem is well-known to be undecidable over Turing-complete programs, many sound automated techniques that work well for practical programs have been developed~\cite{CPR06:terminator-journal,CPR11:termination-cacm}. Such techniques typically seek a suitable \emph{certificate} of termination.
Particularly relevant certificates are \emph{ranking functions (RFs)}~\cite{rwfloyd1967programs,DBLP:conf/cav/BradleyMS05,DBLP:conf/tacas/ColonS01,DBLP:conf/vmcai/PodelskiR04,DBLP:conf/pods/SohnG91,PR04:transition-invariants} mapping program states into a well-founded domain, forcing a strict decrease of the function value in every step. The basic ranking functions are 1-dimensional, which is 
often insufficient for complex control-flow structures.
Lexicographic ranking functions (LexRFs) are multi-dimensional extensions 
of RFs that provide an effective approach to 
termination analysis~\cite{CSZ13,ADFG10:lexicographic,GMR15:rank-extremal,BCIKP16:T2,DBLP:conf/cav/BradleyMS05,BCF13}.  The literature typically restricts to linear LexRFs for linear-arithmetic (LA) programs, as LA reasoning can be more efficiently automated compared to non-linear arithmetic.

For probabilistic programs, the termination problem considers 
aspects of the probabilistic behaviors as well. The most fundamental 
is the \emph{almost-sure (a.s.)} termination problem, which asks whether a given PP terminates with probability~1.
One way of proving a.s.\ termination is via \emph{ranking supermartingales (RSMs)},
a probabilistic analogue of ranking functions named so due to the connection with (super)martingale stochastic  processes~\cite{Williams:book}.
There is a rich body of work on 1-dimensional RSMs, while the work~\cite{AgrawalC018} introduces lexicographic RSMs. 
In probabilistic programs, a transition $ \tr $ available in some state $ \state $
yields a probability distribution over the successor states. The conditions defining RSMs are formulated in terms of the expectation operator $ \expv^\tr $ of this distribution. In particular,  \emph{lexicographic ranking supermartingales} (LexRSMs) of \cite{AgrawalC018} are functions $ f $ mapping program states to $ \Rset^d $, such that for each transition $ \tran $ there exists a component $ 1 \leq i \leq d $, satisfying, for any reachable state $s$ at which $\tau$ is enabled, the following conditions $ \prank $ and $ \unneg $ (with $ \rf_i $ the $ i $-component of $ \rf $ and $ \state \models\guards(\tr) $ denoting the fact that $ \state $ satisfies the guard of $ \tr $):
\begin{enumerate}
\item $ \prank(f,\tran) \equiv 
\state \models\guards(\tr) \Rightarrow \big(\expv^\tran[f_i(\state')] \leq f_i(\state) -1 $ and $ \expv^\tran[f_j(\state')] \leq f_j(\state)$ for all $ 1 \leq j < i \big)$.
\item $ \unneg(f,\tau) \equiv \state \models\guards(\tr) \Rightarrow  \big(f_j(\state)\geq 0$ for all $ 1 \leq j \leq d \big)$. 
\end{enumerate}
(We use the standard primed notation from program analysis, i.e.\ $ \state' $ is the probabilistically chosen successor of $ \state $ when performing $ \tr $.)
The $ \prank $ condition enforces an expected decrease in lexicographic ordering, while $ \snneg $ stands for ``strong non-negativity''.
Proving the soundness of LexRSMs for proving a.s.\ termination is highly non-trivial and requires reasoning about complex stochastic processes~\cite{AgrawalC018}.
Apart from the soundness proof,~\cite{AgrawalC018} also presents an algorithm for the synthesis of linear LexRSMs.

While LexRSMs improved the applicability of a.s.\ termination proving, their usage is impeded by the \emph{restrictiveness of strong non-negativity}
due to which a linear LexRSM might not exist even for simple a.s.\ terminating programs. This is a serious drawback from the automation perspective, since even if such a program admits a non-linear LexRSM, efficient automated tools that restrict to linear-arithmetic reasoning would not be able to find it. 


\lstset{language=affprob}
\lstset{tabsize=5,showspaces}

\newsavebox{\intromotcut}
\begin{lrbox}{\intromotcut}

\end{lrbox}

\newsavebox{\intromot}
\begin{lrbox}{\intromot}

\end{lrbox}

\begin{figure}[t]
	\centering
\begin{subfigure}{0.49\textwidth}
\centering
\begin{lstlisting}[mathescape]
$ \ell_0 $: while $ y\geq 0 $ do
$ \phantom{\loc_0:}$     $ x:=y; $
$ \ell_1 $:	 while $ x\geq 0 $ do
$ \phantom{\loc_0:}$       $ x:=x-1+\normaldist(0,1) $
$ \phantom{\loc_0:} $     od;
$ \phantom{\loc_0:}$     $ y:=y-1 $
$ \phantom{\loc_0:} $ od
\end{lstlisting}
\caption{}
\label{fig:intromot:a}
\end{subfigure}
\hfill
\begin{subfigure}{0.49\textwidth}
\centering
\begin{lstlisting}[mathescape]
$ \ell_0 $: while $ x\geq 0 $ do
$ \phantom{\loc_0:} $   if $ y\geq 0 $ then
$ \phantom{\loc_0:} $     $ y:=y+\unif[-7,1] $
$ \phantom{\loc_0:} $   else
$ \phantom{\loc_0:} $     $ x:=x+\unif[-7,1] $;
$ \ell_1 $:     $ y:= y  + \unif[-7,1] $
$ \phantom{\loc_0:} $   fi od
\end{lstlisting}
\caption{}
\label{fig:intromot:b}
\end{subfigure}
\caption{Motivating examples. 
$ \normaldist(\mu,\sigma) $ samples from the normal distribution with mean $ \mu $ and std.\ deviation $ \sigma $. 
$ \unif[a,b] $ samples uniformly from the interval $ [a,b] $. Location labels are the ``$ \loc_i $'': one location per loop head and one additional location in (b) so as to have one assignment per transition (a technical requirement for our approach). A formal representation of the programs via \emph{probabilistic control flow graphs} is presented later, in Section~\ref{sec:prog-prelim}.}
\label{fig:intromot}
\end{figure}

Consider the program in Figure~\ref{fig:intromot:a}. 
By employing simple random-walk arguments, we can manually prove that the 
program terminates a.s. A linear LexRSM proving this needs to have a component containing a positive multiple of $ x $ at the head of the inner while-loop ($ \loc_1 $).
However, due to the sampling from the normal distribution, which has unbounded support, the value of $ x $ inside the inner loop cannot be bounded from below. Hence, the program does not admit a linear LexRSM. In general, LexRSMs with strong non-negativity do not handle well programs with unbounded-support distributions.

Now consider the program in Figure~\ref{fig:intromot:b}. It can be again shown that this PP terminates a.s.; however, this cannot be witnessed by a linear LexRSM: to rank the ``if-branch'' transition, there must be a component with a positive multiple of $ y $ in $ \loc_0 $. But $ y $ can become arbitrarily negative within the else branch, and cannot be bounded from below by a linear function of $ x $.

\smallskip\noindent{\em Contribution: Generalized Lexicographic RSMs.}
In the non-probabilistic setting, strong non-negativity can be relaxed to \emph{partial non-negativity} ($ \pnneg $), where only the components which are to the left of the ``ranking component'' $ i $ (inclusive) need to be non-negative (Ben-Amram--Genaim RFs~\cite{BG15:lexicographic-complexity}). We show that in the probabilistic setting, the same relaxation is possible under additional \emph{expected leftward non-negativity} constraint $ \expneg $. Formally, we say that  $ \rf $ is a \emph{generalized lexicographic ranking supermartingale} (GLexRSM) if for any transition $ \tr $ there is $1 \leq  i \leq d $ such that for any reachable state $s$ at which $\tau$ is enabled we have $ \prank(\rf,\tr) \wedge \pnneg(\rf,\tr)\wedge \expneg(\rf,\tr) $, where
\begin{align*}
\pnneg(\rf,\tau) \quad&\equiv\quad\state \models\guards(\tr) \Rightarrow  \big( \rf_j(\state)\geq 0 \text{ for all } 1 \leq j \leq i \big)\\
\expneg(\rf,\tau) \quad&\equiv\quad\state \models\guards(\tr) \Rightarrow \big(\expv^\tr[\rf_j(s')\cdot \indicator{<j}(s')] \geq 0 \text{ for all } 1 \leq j \leq i \big),
\end{align*}
with $ \indicator{< j} $ being the indicator function of the set of all states in which a transition ranked by a component $ < j $ is enabled.

We first formulate GLexRSMs as an abstract proof rule for general stochastic processes. We then instantiate them into the setting of probabilistic
 programs and define \emph{GLexRSM maps,} which we prove to be sound for proving a.s.\ termination. These results are general and \emph{not specific} to linear-arithmetic programs. 
\smallskip\noindent{\em Contribution: Polynomial Algorithms for Linear GLexRSMs.} 
\begin{compactenum}
    \item For linear arithmetic PPs in which sampling instructions use bounded-sup\-port distributions we show that the problem \textsc{LinGLexPP} of deciding whether a given PP with a given set of \emph{linear invariants} admits a linear GLexRSM is decidable in polynomial time. 
    Also, our algorithm computes the witnessing linear GLexRSM whenever it exists. In particular, our approach proves the a.s. termination of the program in Fig.~\ref{fig:intromot:b}.
    \item Building on results of item 1, we construct a sound polynomial-time algorithm for a.s.~termination proving in PPs that \emph{do perform} sampling from \emph{un\-bounded-support} distributions. 
    In particular, the algorithm proves a.s.~termination for our motivating example in Fig.~\ref{fig:intromot:a}.
\end{compactenum}


\smallskip\noindent{\em Related work.} Martingale-based termination literature mostly focused on 1-di\-men\-sional RSMs~\cite{SriramCAV,CFNH16:prob-termination,CFG16,CNZ17,HolgerPOPL,MM16:proofrule-arxiv,MMKK18,HFC18,CF17,MBKK:21:amber-esop,GieslGH:19:constant-prob-programs}. RSMs themselves can be seen as generalizations of Lyapunov ranking functions from control theory~\cite{BG05,Foster53}. Recently, the work~\cite{Huang0CG19} pointed out the unsoundess of the 1-dimensional RSM-based proof rule in~\cite{HolgerPOPL} due to insufficient lower bound conditions and provided a corrected version.
On the multi-dimensional front, it was shown in~\cite{HolgerPOPL} that requiring components of (lexicographic) RSMs to be nonnegative only at points where they are used to rank some enabled transition (analogue of Bradley-Manna-Sipma LexRFs~\cite{DBLP:conf/cav/BradleyMS05}) is unsound for proving a.s.\ termination.
This illustrates the intricacies of dealing with lower bounds in the design of a.s.\ termination certificates.
Lexicographic RSMs with strong non-negativity were introduced in~\cite{AgrawalC018}. The work~\cite{ChenH20}
	produces an $\omega$-regular decomposition of program's control-flow graph, with each program component ranked by a different RSM. This approach does not require a lexicographic ordering of RSMs, but each component in the decomposition must be ranked by a single-dimensional non-negative RSM. 
RSM approaches were also used for cost analysis~\cite{pldi18,WFGCQS19,AvaMS:2020:mod-prob-cost} and additional liveness and safety properties~\cite{CVS16:martingale-recurrence-persistence,BEFH16:doob,CNZ17}.

Logical calculi for reasoning about properties of 
	probabilistic programs (including termination) were studied 
	in~\cite{Kozen:prob-semantics,FH:prdl,Kozen:probabilistic-PDL,Feldman:propositional-probdl}
	and extended to programs with non-determinism 
	in~\cite{MM04,MM05,KKMO16:wp-expected-runtime,OKKM16:recursive-prob-wp-calculus,GKI14:prob-semantics}. In particular~\cite{MM04,MM05,MMKK18} formalize RSM-like proof certificates within the \emph{weakest pre-expectation (WPE)} calculus~\cite{morgan1996probabilistic,morgan1999pgcl}.
	 The power of this calculus allows for reasoning about complex programs ~\cite[Section 5]{MMKK18}, but the proofs typically require a human input. Theoretical connections between martingales and the WPE calculus were recently explored in~\cite{HarkKGK20:aiming-low-journal}. There is also a rich body of work on analysis of probabilistic functional programs, where the aim is typically to obtain a general type system~\cite{LagoG17:prob-term-monadic-size,AvanziniLG19:type-based-prob-complexity,KobayashiLG19:prob-termination-higher-order,dLFR21:intersection-types-past} for reasoning about termination properties (automation for discrete probabilistic term rewrite systems was shown in~\cite{AvanzinidLY20:prob-term-rewriting}).

As for other approaches to a.s.\ termination, for \emph{finite-state programs} with nondeterminism 
a sound and complete method was given in~\cite{EsparzaGK12}, while~\cite{DBLP:conf/sas/Monniaux01} considers a.s.\ termination proving through abstract interpretation. The work~\cite{kaminski2018hardness} shows that proving a.s.\ termination is harder (in terms of arithmetical hierarchy) than proving termination of non-probabilistic programs.
%
%
%

	The computational complexity of the construction of 
	lexicographic ranking functions in non-probabilistic programs was studied 
	in~\cite{BG13:integer-ranking,BG15:lexicographic-complexity}.

\smallskip\noindent{\em Paper organization.} The paper is split in two parts: the first one is ``abstract'', with mathematical preliminaries~(Section~\ref{sec:prelim}) and definition and soundness proof of abstract GLexRSMs (Section~\ref{sec:glexrsm-pruning}). We also present an example showing that ``GLexRSMs'' without the expected leftward non-negativity constraint are not sound. The second part covers application to probabilistic programs: preliminaries on the program syntax and semantics~(Section~\ref{sec:prog-prelim}), a GLexRSM-based proof rule for a.s. termination (Section~\ref{sec:glexrsm-progs}), and the outline of our algorithms~(Section~\ref{sec:algo}).

\section{Mathematical Preliminaries}\label{sec:prelim}

\newsavebox{\exapp}
\begin{lrbox}{\exapp}
\begin{lstlisting}[mathescape]
while $x\geq 0$ and $y\geq 0$ do
	if $\mathbf{\star}$ then
		$x := x + \textbf{sample(}\Unif\{-3,1\}\textbf{)}$
	else 
		$x := \textbf{ndet}[0,\infty)$
		if $\textbf{prob(}0.5\textbf{)}$ then
			$ y:= y - 4$
		else
			$ y:= y + 2 $
		fi
	fi
od
\end{lstlisting}
\end{lrbox}

We use boldface notation for
vectors, e.g.\ $\vec{x}$, $\vec{y}$, etc., and we denote an $i$-th component of a
vector $\vec{x}$ by $\vec{x}[i]$. For an 
$n$-dimensional vector 
$\vec{x}$, index $1 \leq i\leq n$, and number $a$ we denote by $\vec{x}(i\leftarrow a)$ 
a 
vector $\vec{y}$ such that $\vec{y}[i]=a$ and $\vec{y}[j]=\vec{x}[j]$ for all 
$1\leq j \leq n$, $j\neq i$.
For two real numbers $a$ and $b$, we use $a \cdot b$ to denote their product.

	We assume familiarity with basics of probability theory~\cite{Williams:book}. A \emph{probability space} is a triple
	$(\Omega,\mathcal{F},\probm)$, where $\Omega$ is a
	\emph{sample space}, $\mathcal{F}$ is a \emph{sigma-algebra} of measurable
	sets over $\Omega$, and
	$\probm$ is a \emph{probability measure} on $\mathcal{F}$. A \emph{random variable (r.v.)} $ R:\Omega\rightarrow \mathbb{R}\cup\{\pm\infty\} $ is an $ \mathcal{F} $-\emph{measurable} real-valued function (i.e. $  \{\omega\mid R(\omega)\leq x\} \in \mathcal{F}$ for all $ x\in \Rset $) and we denote by $\expv[R]$ its \emph{expected value}.
	A \emph{random vector} is a vector whose every component is a random
	variable. We denote by $\vec{X}[j]$ the $j$-component of a random vector $\vec{X}$. A (discrete time) \emph{stochastic process} in a
	probability space $(\Omega,\mathcal{F},\probm)$ is an infinite sequence of
	random vectors in this space.
	We will also use random variables of the form $R\colon\Omega \rightarrow A$ for some finite or countable set $A$, which easily translates to the real-valued variables.

Let $(\Omega,\mathcal{F},\mathbb{P})$ be a probability space and let $X$ be a random variable. A \emph{conditional expectation} of $ X $ given a sub-sigma algebra $ \mathcal{F}' \subseteq \mathcal{F} $ is any real-valued random variable $Y$ s.t.: i) $ Y $ is $\mathcal{F}'$-measurable; and ii) for each set $A\in 
\mathcal{F}'$ it holds that $ \E[X\cdot \mathbb{I}(A)] = \E[Y\cdot \mathbb{I}(A)] $. Here, $\mathbb{I}(A) \colon \Omega\rightarrow \{0,1\}$ is an \emph{indicator function} of 
$A$, i.e. function returning $1$ for 
each $\omega\in A$ and $0$ for each $\omega\in \Omega\setminus A$.

It is known~\cite{Ash:book} that a random variable satisfying the properties of conditional expectation exists whenever a) $\E[|X|]<\infty$, i.e.~$X$ is {\em integrable}, or b) $X$ is real-valued and nonnegative (though these two conditions are not necessary). Moreover, whenever the conditional expectation exists it is also known to be a.s.\ unique. We denote this a.s.\ unique conditional expectation by $ \E[X|\mathcal{F}'] $. It holds that for any $ \mathcal{F}' $-measurable r.v.\ $ Z $ we have $\E[X\cdot Z|\mathcal{F}'] = \E[X|\mathcal{F}']\cdot Z$, whenever the former conditional expectation exists~\cite[Theorem 9.7(j)]{Williams:book}.

A \emph{filtration} in $(\Omega,\mathcal{F},\mathbb{P})$ is an increasing (w.r.t.\ set inclusion) sequence  $\{\mathcal{F}_t \}_{t=0}^{\infty} $ of sub-sigma-algebras of $\mathcal{F}$. A \emph{stopping time} w.r.t.\ a filtration $\{\mathcal{F}_t \}_{t=0}^{\infty} $ is a random variable $\stime$ taking values in $\mathbb{N}\cup \{\infty\}$ s.t.\ for every $t$ the set $\{\stime = t\} =  \{\omega \in \Omega \mid \stime(\omega) = t \}$ belongs to $\mathcal{F}_t$. Intuitively, $\stime$ returns a time step in which some process should be ``stopped'', and the decision to stop is made solely on the information available at the current step. 


\section{Generalized Lexicographic Ranking Supermartingales}\label{sec:glexrsm-pruning}

In this section, we introduce {\em generalized lexicographic ranking supermartingales (GLexRSMs):} an abstract concept that is not necessarily connected to PPs, but which is crucial for the soundness of our new proof rule for a.s. termination.

\begin{definition}[Generalized Lexicographic Ranking Supermartingale]\label{def:genlexrsm}
	Let $(\Omega,\mathcal{F},\mathbb{P})$ be a probability space and let $(\mathcal{F}_t)_{t=0}^{\infty}$ be a filtration of $\mathcal{F}$. Suppose that $T$ is a stopping time w.r.t.\ $ \mathcal{F} $. An $n$-dimensional real valued stochastic process $(\mathbf{X}_t)_{t=0}^{\infty}$ is a {\em generalized lexicographic ranking supermartingale for $T$} (GLexRSM) if:
	\begin{enumerate}
		\item For each $t\in\mathbb{N}_0$ and $1\leq j\leq n$, the random variable $\mathbf{X}_t[j]$ is $\mathcal{F}_t$-measurable.
		\item For each $t\in\mathbb{N}_0$, $1\leq j\leq n$, and $A\in\mathcal{F}_{t+1}$, the conditional expectation $\mathbb{E}[\mathbf{X}_{t+1}[j]\cdot \mathbb{I}(A)\mid \mathcal{F}_t]$ exists.
		\item For each $t\in\mathbb{N}_0$, there exists a partition of the set $\{T>t\}$ into $n$ subsets $L^t_1,\dots,L^t_n$, all of them $\mathcal{F}_t$-measurable (i.e., belonging to $ \mathcal{F}_t $), such that for each $1\leq j\leq n$
		\begin{itemize}
			\item $\mathbb{E}[\mathbf{X}_{t+1}[j] \mid \mathcal{F}_t](\omega) \leq \mathbf{X}_t[j](\omega)$ for each $\omega\in \cup_{j'=j}^n L^t_{j'}$,
			\item $\mathbb{E}[\mathbf{X}_{t+1}[j] \mid \mathcal{F}_t](\omega) \leq \mathbf{X}_t[j](\omega) - 1$ for each $\omega\in L^t_j$,
			\item $\mathbf{X}_t[j](\omega)\geq 0$ for each $\omega \in\cup_{j'=j}^n L^t_{j'}$,
			\item $\mathbb{E}[\mathbf{X}_{t+1}[j]\cdot \mathbb{I}(\cup_{j'=0}^{j-1}L^{t+1}_{j'}) \mid \mathcal{F}_i](\omega)\geq 0$ for each $\omega \in \cup_{j'=j}^n L^t_{j'}$, with $L^{t+1}_0=\{T\leq t+1\}$.
		\end{itemize}
	\end{enumerate}
\end{definition}

Intuitively, we may think of each $\omega\in\Omega$ as a trajectory of process that evolves over time (in the second part of our paper, this will be a probabilistic program run). Then, $\mathbf{X}_t$ is a vector function depending on the first $t$ time steps (each $\mathbf{X}_t[j]$ is $\mathcal{F}_t$-measurable), while $T$ is the time at which the trajectory is stopped. Then in point 3 of the definition, the first two items encode the expected (conditional) lexicographic decrease of $\mathbf{X}_t$, the third item encodes non-negativity of components to the left (inclusive) of the one which ``ranks'' $ \omega $ in step $ t $, and the last item encodes the expected leftward non-negativity (sketched in Section~1). For each $1\leq j\leq n$ and time step $t\geq 0$, the set $L^t_j$ contains all $\omega\in \{T> t\}$ which are ``ranked'' by the component $j$ at time $ t $.
 An \emph{instance} of an $n$-dimensional GLexRSM $\{\vec{X}_t \}_{t=0}^{\infty}$ is a tuple $(\vec{X}_{t=0}^{\infty}, \{L_1^t,\dots,L_n^t\}_{t=0}^{\infty})$,
where the second component is a sequence of partitions of $\Omega$ satisfying the condition in Definition~\ref{def:genlexrsm}. We say that $\omega\in \Omega$ has {\em level} $j$ in step $t$ of the instance $((\mathbf{X}_t)_{t=0}^{\infty},(L^t_1,\dots,L^t_n)_{t=0}^{\infty})$ if $T(\omega)>t$ and $\omega\in L^t_j$. If $T(\omega)\leq t$, we say that the level of $\omega$ at step $t$ is $0$.

We now state the main theorem of this section, which underlies the soundness of our new method for proving almost-sure termination.

\begin{theorem}\label{thm:genlexrsm}
	Let $(\Omega,\mathcal{F},\mathbb{P})$ be a probability space, $(\mathcal{F}_t)_{t=0}^{\infty}$ a filtration of $\mathcal{F}$ and $T$ a stopping time w.r.t.\ $ \mathcal{F} $. If there is an instance $((\mathbf{X}_t)_{t=0}^{\infty},(L^t_1,\dots,L^t_{n})_{t=0}^{\infty})$ of a GLexRSM over $(\Omega,\mathcal{F},\mathbb{P})$ for $T$, then $\mathbb{P}[T<\infty]=1$.
\end{theorem}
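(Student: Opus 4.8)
The plan is to adapt the classical \emph{minimal recurrent level} argument for lexicographic ranking supermartingales; the novelty will be that the expected leftward non-negativity (fourth item of point~3 of Definition~\ref{def:genlexrsm}) is exactly the hypothesis that lets me convert the $j$-th component of the process into a genuine \emph{non-negative} supermartingale even though only \emph{partial} non-negativity is assumed. Fix an instance. For $\omega$ with $T(\omega)>t$ write $\mathrm{lev}_t(\omega)=j$ for the unique $j$ with $\omega\in L^t_j$, and put $\mathrm{lev}_t(\omega)=0$ when $T(\omega)\leq t$. On $\{T=\infty\}$ the sequence $(\mathrm{lev}_t)_{t\geq 0}$ takes values in the finite set $\{1,\dots,n\}$, so $j^\ast:=\liminf_t\mathrm{lev}_t\in\{1,\dots,n\}$, the level is $\geq j^\ast$ from some step on, and it equals $j^\ast$ infinitely often; hence $\{T=\infty\}\subseteq\bigcup_{j=1}^{n}\bigcup_{m\geq 0}A_{j,m}$, where $A_{j,m}$ is the event ``$T=\infty$, $\mathrm{lev}_t\geq j$ for all $t\geq m$, and $\mathrm{lev}_t=j$ for infinitely many $t$''. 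So it suffices to show $\mathbb{P}[A_{j,m}]=0$ for every fixed $j$ and $m$.

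Fix $j$ and $m$, and let $\sigma:=\inf\{t\geq m:\mathrm{lev}_t<j\}$ (with $\inf\emptyset=\infty$). Since $\{\mathrm{lev}_t<j\}=\{T\leq t\}\cup\cup_{j'=1}^{j-1}L^t_{j'}\in\mathcal{F}_t$, $\sigma$ is a stopping time, and $A_{j,m}\subseteq\{\sigma=\infty\}\subseteq B:=\cup_{j'=j}^{n}L^m_{j'}\in\mathcal{F}_m$. I would then study the compensated process
\[
M_t\ :=\ \mathbb{I}(\sigma>t)\cdot\mathbf{X}_t[j]\ +\ \sum_{s=m}^{t-1}\mathbb{I}(\sigma>s)\cdot\mathbb{I}(L^s_j)\qquad(t\geq m),
\]
and show it is a non-negative supermartingale w.r.t.\ $(\mathcal{F}_t)_{t\geq m}$. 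Non-negativity is immediate: on $\{\sigma>t\}$ one has $\mathrm{lev}_t\geq j$, hence $\mathbf{X}_t[j]\geq 0$ by the third item of point~3, and the sum is $\geq 0$. On $\{\sigma\leq t\}$ the process is frozen at its value at step $\sigma$. On $\{\sigma>t\}$ one uses $\mathbb{I}(\sigma>t+1)\mathbf{X}_{t+1}[j]=\mathbb{I}(\sigma>t)\cdot\mathbb{I}(\cup_{j'=j}^{n}L^{t+1}_{j'})\cdot\mathbf{X}_{t+1}[j]$ and the decomposition
\[
\mathbb{E}\big[\mathbb{I}(\cup_{j'=j}^{n}L^{t+1}_{j'})\,\mathbf{X}_{t+1}[j]\mid\mathcal{F}_t\big]\ =\ \mathbb{E}[\mathbf{X}_{t+1}[j]\mid\mathcal{F}_t]\ -\ \mathbb{E}\big[\mathbf{X}_{t+1}[j]\cdot\mathbb{I}(\cup_{j'=0}^{j-1}L^{t+1}_{j'})\mid\mathcal{F}_t\big],
\]
each term existing by point~2 of Definition~\ref{def:genlexrsm}. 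On $\{\sigma>t\}$ the first term on the right is $\leq\mathbf{X}_t[j]$ by the first item of point~3 (and $\leq\mathbf{X}_t[j]-1$ when additionally $\omega\in L^t_j$, by the second item), while the subtracted term is $\geq 0$ by the fourth item --- this is exactly where the expected leftward non-negativity is used. Plugging in and recombining with the compensator (the extra $+\mathbb{I}(L^t_j)$ in the sum absorbs precisely the $-1$ available on $L^t_j$) gives $\mathbb{E}[M_{t+1}\mid\mathcal{F}_t]\leq M_t$.

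Finally I would extract the contradiction. As $\mathbf{X}_m[j]$ is real-valued, $B=\bigcup_{K\in\mathbb{N}}B_K$ with $B_K:=B\cap\{\mathbf{X}_m[j]\leq K\}\in\mathcal{F}_m$, and for each $K$ the process $(\mathbb{I}(B_K)M_t)_{t\geq m}$ is a non-negative supermartingale with $\mathbb{E}[\mathbb{I}(B_K)M_m]=\mathbb{E}[\mathbb{I}(B_K)\mathbf{X}_m[j]]\leq K<\infty$; by the supermartingale convergence theorem~\cite{Williams:book} it converges a.s., so $\sup_{t\geq m}M_t<\infty$ a.s.\ on $B_K$, and letting $K\to\infty$, a.s.\ on $B$. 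But $M_t\geq\sum_{s=m}^{t-1}\mathbb{I}(\sigma>s)\mathbb{I}(L^s_j)$, and on $A_{j,m}$ we have $\sigma=\infty$ and $\omega\in L^s_j$ for infinitely many $s$, so this lower bound diverges along $A_{j,m}$; hence $A_{j,m}\subseteq\{\sup_t M_t=\infty\}$, a $\mathbb{P}$-null subset of $B\supseteq A_{j,m}$, so $\mathbb{P}[A_{j,m}]=0$. Summing over $j,m$ yields $\mathbb{P}[T=\infty]=0$.

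The step I expect to be the main obstacle is the supermartingale property of $(M_t)$: one must realise that the right object is $\mathbf{X}_{t+1}[j]$ \emph{masked} by $\mathbb{I}(\cup_{j'=j}^{n}L^{t+1}_{j'})$, which keeps it $\mathcal{F}_{t+1}$-measurable and non-negative, and then see that the fourth item of point~3 is exactly what prevents this masking from pushing the one-step conditional expectation above $\mathbf{X}_t[j]$ (if $\mathbf{X}_{t+1}[j]$ could drop far below $0$ whenever the level falls below $j$, a ``GLexRSM'' without that item would fail here, matching the unsoundness example promised in the introduction). Under \emph{strong} non-negativity the subtracted term is automatically $\geq 0$, so the argument specialises to the soundness proof of~\cite{AgrawalC018}. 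A secondary technicality is that the $\mathbf{X}_t[j]$ need not be integrable; point~2 of Definition~\ref{def:genlexrsm} makes all the conditional expectations above well defined, and the localisation over $\{\mathbf{X}_m[j]\leq K\}$ supplies the $L^1$ bound needed to invoke supermartingale convergence.
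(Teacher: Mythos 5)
Your proof is correct, and its second half takes a genuinely different route from the paper's. The first part coincides: the paper also reduces (inside a proof by contradiction) to an event of the form ``$T=\infty$, level $\geq k$ from some step $s$ on, level $=k$ infinitely often, $\mathbf{X}_s[k]$ bounded by a constant $M$'', obtained by the same liminf-of-levels and union-bound reasoning; your localization over $\{\mathbf{X}_m[j]\leq K\}$ at the end plays exactly the role of the constant $M$ there. From that point the paper freezes the process at the first time $F$ at which the level drops below $k$, \emph{keeping} the possibly negative frozen value $\mathbf{X}_F[k]$: it derives $\mathbb{E}[Y_{t+1}]\leq\mathbb{E}[Y_t]-\mathbb{P}[L^t_k\cap D\cap\{F>t\}]$ using only the drift and partial non-negativity conditions, then invokes $\expneg$ separately --- via a decomposition over the times $\{F=r\}$ and the tower property --- to show $\mathbb{E}[Y_t]\geq 0$, deduces summability of $\sum_r\mathbb{P}[L^r_k\cap D\cap\{F>r\}]$, and finishes with Borel--Cantelli. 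You instead zero the process out at $\sigma$, add the compensator $\sum_{s=m}^{t-1}\mathbb{I}(\sigma>s)\mathbb{I}(L^s_j)$, and push $\expneg$ into the one-step computation, where it shows precisely that masking by $\mathbb{I}(\cup_{j'=j}^{n}L^{t+1}_{j'})$ cannot raise the conditional drift; you then conclude directly (no contradiction) from the non-negative supermartingale convergence theorem, since the compensator counts visits to level $j$ and would diverge on $A_{j,m}$. The paper's route is more elementary (plain expectations plus Borel--Cantelli, no appeal to Doob convergence), while yours makes the role of $\expneg$ more transparent and packages the conclusion as a statement about a single non-negative supermartingale. One caveat worth a sentence in a polished write-up: your splitting $\mathbb{E}[\mathbb{I}(\cup_{j'=j}^{n}L^{t+1}_{j'})\mathbf{X}_{t+1}[j]\mid\mathcal{F}_t]=\mathbb{E}[\mathbf{X}_{t+1}[j]\mid\mathcal{F}_t]-\mathbb{E}[\mathbf{X}_{t+1}[j]\,\mathbb{I}(\cup_{j'=0}^{j-1}L^{t+1}_{j'})\mid\mathcal{F}_t]$ relies on additivity of the generalized (possibly non-integrable) conditional expectations whose existence point~2 guarantees; this is the same level of rigor at which the paper itself manipulates these objects, so it is acceptable, but it is an implicit step rather than a quoted fact.
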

    

In~\cite{AgrawalC018}, a mathematical notion of LexRSMs is defined and a result for LexRSMs analogous to our Theorem~\ref{thm:genlexrsm} is established. Thus, the first part of our proof mostly resembles the proof of Theorem 3.3.\ in~\cite{AgrawalC018}, up to the point of defining the stochastic process $(Y_t)_{t=0}^{\infty}$ in eq.~\eqref{eq:processy}. After that, the proof of~\cite{AgrawalC018} crucially relies on nonnegativity of each $\mathbf{X}_t[j]$ and $Y_t$ at every $\omega\in \Omega$ that is guaranteed by LexRSMs, and it cannot be adapted to the case of GLexRSMs. Below we first show that, for GLexRSMs, $\mathbb{E}[Y_t]\geq 0$ for each $t\geq 0$, and then we present a very elegant argument via the Borel-Cantelli lemma~\cite[Theorem 2.7]{Williams:book} which shows that this boundedness of expectation is sufficient for the theorem claim to hold.


\begin{proof}[Sketch of proof of Theorem~\ref{thm:genlexrsm}] We proceed by contradiction. Suppose that there exists an instance of a GLexRSM but that $\mathbb{P}[T=\infty]>0$. First, we claim that there exists $1\leq k\leq n$ and $s,M\in\mathbb{N}_0$ such that the set $B$ of all $\omega\in \Omega$ for which the following properties hold has positive measure, i.e.~$\mathbb{P}[B]>0$: (1)~$T(\omega)=\infty$, (2)~$\mathbf{X}_s[k](\omega)\leq M$, (3)~for each $t\geq s$, the level of $\omega$ at step $t$ is at least $k$, and (4) the level of $\omega$ equals $k$ infinitely many times. The claim is proved by several applications of the union bound, see Appendix~\ref{app:genlglexrsm}.

Since $B$ is defined in terms of tail properties of $\omega$ (``level is at least $k$ {\em infinitely many times}'') it is not necessarily $\mathcal{F}_t$-measurable for any $t$. Hence, we define a stochastic process $(Y_t)_{t=0}^{\infty}$ such that each $Y_t$ is $\mathcal{F}_t$-measurable, and which satisfies the desirable properties of $(\mathbf{X}_t[k])_{t=0}^{\infty}$ on $B$.

Let $D=\{\omega\in \Omega \mid \mathbf{X}_s[k](\omega)\leq M\land\omega\in \cup_{j=k}^n L^s_j \}$. Note that $D$ is $\mathcal{F}_t$-measurable for $t\geq s$. We define a stopping time $F$ w.r.t.~$(\mathcal{F}_t)_{t=0}^{\infty}$ via $F(\omega) = \inf\{t\geq s \mid \omega\not\in\cup_{j'=k}^n L^t_{j'}\}$; then a stochastic process $(Y_t)_{t=0}^{\infty}$ via
\begin{equation}\label{eq:processy}
Y_t(\omega) = \begin{cases}
0, &\mbox{if } \omega\not\in D,\\
M, &\mbox{if } \omega\in D,\, \text{ and }t<s,\\
\mathbf{X}_t[k](\omega), &\mbox{if } \omega\in D, \text{ } t\geq s \text{ and } F(\omega) > t,\\
\mathbf{X}_{F(\omega)}[k](\omega), &\mbox{else}.
\end{cases}
\end{equation}
A straightforward argument (presented in Appendix~\ref{app:genlglexrsm}) shows that for each $t\geq s$ we have $\mathbb{E}[Y_{t+1}] \leq \mathbb{E}[Y_t] - \mathbb{P}[L^t_k\cap D\cap\{F>t\}]$. By a simple induction we obtain:
\begin{equation}\label{eq:ind}
\mathbb{E}[Y_s] \geq \mathbb{E}[Y_t] + \sum_{r=s}^{t-1}\mathbb{P}[L^{r}_k\cap D\cap\{F>r\}].
\end{equation}
Now, we show that $\mathbb{E}[Y_t]\geq 0$ for each $t\in\mathbb{N}_0$. The claim is clearly true for $t<s$, so suppose that $t\geq s$. We can then expand $\mathbb{E}[Y_t]$ as follows
\begin{align*}
&\mathbb{E}[Y_t] = \mathbb{E}[Y_t\cdot \mathbb{I}(F=s)] + \sum_{r=s+1}^t\mathbb{E}[Y_t\cdot \mathbb{I}(F=r)] + \mathbb{E}[Y_t\cdot \mathbb{I}(F>t)]\\
&\hspace{0.5cm} \text{($Y_s\geq 0$ as $D\subseteq \cup_{j=k}^n L^s_j$ and $Y_t(\omega)\geq 0$ whenever $F(\omega)>t$)}\\
&\geq \sum_{r=s+1}^t\mathbb{E}[Y_t\cdot \mathbb{I}(F=r)] = \sum_{r=s+1}^t\mathbb{E}[Y_t\cdot \mathbb{I}(\{F=r\}\cap D)]\\
&\hspace{0.5cm} \text{($Y_t(\omega) = \mathbf{X}_{F(\omega)}[k](\omega)$ whenever $\omega\in D$, $t\geq s$ and $F(\omega)\leq t$)}
\end{align*}
\begin{align*}
&= \sum_{r=s+1}^t\mathbb{E}[\mathbf{X}_r[k]\cdot \mathbb{I}(\cup_{j=0}^{k-1}L^r_j)\cdot \mathbb{I}(\{F>r-1\}\cap D)] \\
&\hspace{0.5cm} \text{(properties of cond. exp. \& $\mathbb{I}(\{F>r-1\}\cap D)$ is $\mathcal{F}_{r-1}$-measurable)}\\
&= \sum_{r=s+1}^t\mathbb{E}\Big[\mathbb{E}[\mathbf{X}_r[k]\cdot \mathbb{I}(\cup_{j=0}^{k-1}L^r_j)\mid\mathcal{F}_{r-1}]\cdot \mathbb{I}(\{F>r-1\}\cap D)\Big] \geq 0\\
&\hspace{0.5cm} \text{($\mathbb{E}[\mathbf{X}_r[k]\cdot \mathbb{I}(\cup_{j=0}^{k-1}L^r_j)\mid\mathcal{F}_{r-1}](\omega)\geq 0$ for $\omega\in \{F>r-1\}\subseteq \cup_{j=k}^n L^{r-1}_j$)}.
\end{align*}
Plugging into eq.~\eqref{eq:ind} that $\mathbb{E}[Y_t]\geq 0$, we get $\mathbb{E}[Y_s] \geq \sum_{r=s}^{t-1}\mathbb{P}[L^r_k\cap D\cap\{F>r\}]$ for each $t\geq s$. By letting $t\rightarrow\infty$, we conclude
$\mathbb{E}[Y_s] \geq \sum_{r=s}^{\infty}\mathbb{P}[L^r_k\cap D\cap\{F>r\}].$
As $Y_s\leq M$ and $Y_s=0$ outside $D$, we know that $\mathbb{E}[Y_s]\leq M\cdot\mathbb{P}[D]$. We get
\begin{equation*}
\sum_{r=s}^{\infty}\mathbb{P}[L^r_k\cap D\cap\{F=\infty\}] \leq \sum_{r=s}^{\infty}\mathbb{P}[L^r_k\cap D\cap\{F>r\}] \leq M\cdot\mathbb{P}[D] < \infty.
\end{equation*}
By the Borel-Cantelli lemma, $\mathbb{P}[L^r_k \cap D\cap \{F=\infty\}\text{ for infinitely many $r$}] = 0$.
But the event $\{L^r_k \cap D\cap \{F=\infty\}\text{ for infinitely many } r\}$ is precisely the set of all runs $\omega\in\Omega$ for which (1)~$T(\omega)=\infty$ (as $ \omega $ never has level zero by $ \omega \in L^r_k $ for inf.\ many $ k $), (2)~$\mathbf{X}_s[k](\omega)\leq M$, (3)~for each $r\geq s$ the level of $\omega$ at step $t$ is at least $k$, and (4) the level of $\omega$ is $k$ infinitely many times. Hence, $B=\{L^r_k \cap D\cap \{F=\infty\}\text{ for infinitely many } r\}$ and $\mathbb{P}[B]=0$, a contradiction.\qed
\end{proof}

GLexRSMs would be unsound without the expected leftward nonnegativity.

\begin{example}
Consider a one-dimensional stochastic process $ (Y_t)_{t=0}^{\infty} $ s.t.\ $ Y_0 = 1 $ with probability 1 and then the process evolves as follows: in every step $ t $, if $ Y_t\geq 0 $, then with probability $p_t = \frac{1}{4}\cdot\frac{1}{2^t} $ we put $ Y_{t+1} = Y_t - \frac{2}{p_t} $ and with probability $ 1 - p_t $ we put $ Y_{t+1} = Y_t + \frac{1}{1 - p_t} $. If $ Y_t < 0 $, we put $ Y_{t+1} = Y_{t} $. The underlying probability space can be constructed by standard techniques and we consider the filtration $ (\mathcal{F}_t)_{t=0}^{\infty} $ s.t. $ \mathcal{F}_t $ is the smallest sub-sigma-algebra making $ Y_t $ measurable. Finally, consider the stopping time $ \stime $ returning the first point in time when $ Y_t<0 $. Then $ \stime < \infty $ if and only if the process ever performs the update $ Y_{t+1} = Y_t - \frac{2}{p_t} $, but the probability that this happens is bounded by $ \frac{1}{4} +\frac{3}{4}\cdot\frac{1}{8}+\frac{3}{4}\cdot \frac{7}{8}\cdot\frac{1}{16}+\cdots < \frac{1}{4}\sum_{t=0}^\infty \frac{1}{2^t} = \frac{1}{2} <1 $. At the same time, putting $ L^t_1 = \{Y_t\geq 0\} $ we get that the tuple $ ((Y_t)_{t=0}^{\infty},(L^t_1)_{t=0}^{\infty}) $ satisfies all conditions of Definition~\ref{def:genlexrsm} apart from the last bullet of point 3. 
\end{example}

\section{Program-Specific Preliminaries}\label{sec:prog-prelim}

Arithmetic \emph{expressions} in our programs are 
built from constants, program variables and 
standard Borel-measurable~\cite{Billingsley:book} arithmetic operators. We also allow sampling instructions to appear on right-hand sides of variable assignments as linear terms. 
An expression with no such terms is called \emph{sampling-free}. We allow sampling from both discrete and continuous distributions. We denote by $\mathcal{D}$ the set of distributions appearing in the program with each $d\in\mathcal{D}$ assumed to be {\em integrable}, i.e.~$\mathbb{E}_{X\sim d}[|X|]<\infty$. This is to ensure that expected value of each $ d $ over any measurable set is well-defined and finite.


A \emph{predicate} over a set of variables $ V $ is a Boolean combination of 
\emph{atomic predicates} of the form $E\leq E'$, where $E$, $E'$ are 
sampling-free expressions whose all variables are from $ V $. We denote by $\vec{x}\models \predicate$ the fact that the predicate $\predicate$ is satisfied by 
substituting values of $\vec{x}$ for the corresponding variables in $\predicate$.

\begin{figure}[t]
	\centering
	\begin{tikzpicture}
	\node[ran] (l0) at (0,0)  {$\loc_0$};
	\node[ran, below = 1.6cm of l0] (l1) {$\loc_1$};
	\node[ran, left = 1.5cm of l0] (term) {$ \locterm $};
	\draw[tran] (l0) to node[font=\scriptsize,draw, fill=white, 
	rectangle,pos=0.5] {$y<0$} (term);
	\draw[tran] (l0) to node[font=\scriptsize,draw, fill=white, 
	rectangle,pos=0.4] {$y\geq 0$} node[auto, font = \scriptsize, right, pos=0.7] {$(x,u_1)$} (l1);
	\node[right = 1cm of l1, circle, minimum size = 3mm] (dum) {};
	\node[above=1.5cm of dum.east, coordinate] (dumx) {};
	\draw[tran, rounded corners] (l1) -- (dum.east|-l1) --  node[font=\scriptsize,draw, fill=white, 
	rectangle,pos=0.3] {$x< 0$} node[auto, font = \scriptsize, right, pos=0.7] {$(y,\up_3)$} (dumx) -- (l0.-45);
	\node[above left = 1.5cm of l1, circle, minimum size = 3mm] (dum1) {};
	\node[below left = 1.5cm of l1, circle, minimum size = 3mm] (dum2) {};
	\draw[tran, rounded corners] (l1) -- (dum1.west|-l1) -- node[font=\scriptsize,draw, fill=white, 
	rectangle,pos=0.3] {$x\geq 0$} node[auto, font = \scriptsize, left, pos=0.7] {$(x,\up_2)$} (dum2.west|-dum1) -- (l1);
	\begin{scope}[xshift=6cm]
	\node[ran] (l0) at (0,0)  {$\loc_0$};
	\node[below = 1.6cm of l0, circle, minimum size = 3mm] (dum1) {};
	\node[ran, left = 1.6cm of dum1] (l1) {$\loc_1$};
	\node[right = 1.6cm of dum1, circle, minimum size = 3mm] (dum2) {};
	\node[ran, left = 1.5cm of l0] (term) {$ \locterm $};
	\draw[tran] (l0) to node[font=\scriptsize,draw, fill=white, 
	rectangle,pos=0.5] {$x<0$} (term);
	\draw[tran] (l0) -- node[font=\scriptsize,draw, fill=white, 
	rectangle,pos=0.5] {$x\geq 0 \land y\geq 0$}  (dum2.east) -- node[auto, font = \scriptsize, above, pos=0.5] {$(y,u_1)$} (dum1.east) -- (l0);
	\draw[tran] (l1) -- node[auto, font = \scriptsize, above, pos=0.5] {$(y,u_3)$} (dum1.west) -- (l0.-100);
	\draw[tran] (l0) to node[font=\scriptsize,draw, fill=white, 
	rectangle,pos=0.55] {$x\geq 0 \land y< 0$} node[ font = \scriptsize, left, pos=0.3] {$(x,u_2)$} (l1);
	\end{scope}
	\end{tikzpicture}
	\caption{The pCFGs of the programs presented in Figure~\ref{fig:intromot}. Guards are shown in the rounded boxes, (absence of a  box = guard is $ \mathit{true} $). The update tuples are shown using variable aliases instead of indexes for better readability. On the left, we have $ \up_1 = y, \up_2 = x-1+\normaldist(0,1)$, and $ \up_3 = y - 1  $. On the right, we have $ \up_1 = y+\Unif[-7,1], \up_2 = x+\Unif[-7,1]$, and $ \up_3 = y +\Unif[-7,1]  $}
	\label{fig:running}
\end{figure}
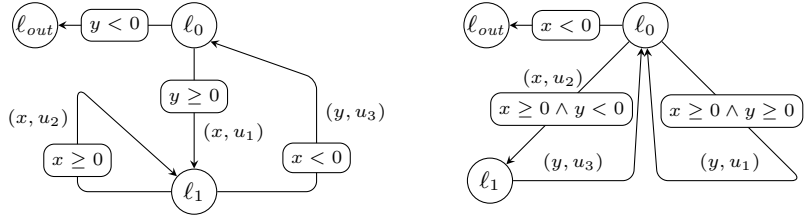


\label{sec:semantics}

We represent probabilistic programs (PPs) via the standard concept of \emph{probabilistic control flow graphs (pCFGs)}~\cite{AgrawalC018,CNZ17,CFNH16:prob-termination}. Formally, a (pCFG) is a tuple $\pCFG=(\locs,\pvars,\gentransitions,\updates,\guards)$ 
	where
	$\locs$ is a finite set of \emph{locations};
	$\pvars=\{x_1,\dots,x_{|\pvars|}\}$ is a finite set of \emph{program 
	variables};
	$\Delta$ is a finite set of \emph{transitions}, 
	i.e.~tuples of the form $\tau = (\loc,\succDist)$, where $\loc$ is a location and $\succDist$ is a distribution over \emph{successor locations}. $\Delta$ is partitioned into two disjoint sets: $\Delta_{PB}$ of probabilistic branching transitions for which $|\support(\succDist)|=2$, and $\Delta_{NPB}$ of remaining transitions for which $|\support(\succDist)|=1$.
%
%
	Next, 
	$\updates$ is a function assigning to each transition in $\Delta_{NPB}$ either the element $\bot$ (representing no variable update) or a  tuple $(i,\up)$, where $1\leq i \leq |\pvars|$ is a 
	\emph{target 
		variable index} and $\up$ 
	is an 
	\emph{update element}, which can 
	be either 
	an expression (possibly involving a single sampling instruction), or
	a bounded interval $R\subseteq \Rset$ representing a nondeterministic update.
	Finally, $\guards$ is a function assigning a predicate
	(a \emph{guard}) over $\pvars$ to each transition in $\Delta_{NPB}$. 
	Figure~\ref{fig:running} presents the pCFGs of our two motivating examples in Figure~\ref{fig:intromot}.
	
	Transitions in $\Delta_{PB}$ correspond to the ``probabilistic branching'' specified by the $\text{\textbf{if} }
\text{\textbf{prob($ p $)} } \text{\textbf{then}} \ldots$ $ \text{\textbf{else}}\ldots$ construct in imperative-style source code~\cite{AgrawalC018}. 
A program (pCFG) is \emph{linear} (or \emph{affine}) if all its expressions are \emph{linear,} i.e.\ of the form $ b + \sum_{i=1}^n a_i\cdot {Z_i}$ for constants $ a_1,\ldots,a_n,b $ and program variables/sampling instructions $Z_i$. we assume that parameters of distributions are constants, so they do not depend on program variable values, a common assumption in  martingale-based automated approaches to a.s. termination proving~\cite{SriramCAV,CFNH16:prob-termination,AgrawalC018,Huang0CG19,ChenH20}.

	
	A \emph{state} of a pCFG $\pCFG$ is a tuple $(\ell,\vec{x})$,
	where $\ell$ is a location of $\pCFG$ and $\vec{x}$ is a 
	$|\pvars|$-dimensional vector of \emph{variable valuations}. A transition $\tau$ is \emph{enabled} in $(\ell,\vec{x})$ if $\tau$ is outgoing from $\ell$ and $\vec{x}\models \guards(\tau)$. A state $\conf'=(\loc',\vec{x}')$ is a \emph{successor} of a 
	state $\conf=(\loc,\vec{x})$ if it can result from $\conf$ by performing a transition $\tau$ enabled in $\conf$ (see Appendix~\ref{app:succstates} for a formal definition).
	
	A \emph{finite path} of length $k$ in $\pCFG$ is a finite sequence $(\ell_0,\vec{x}_0)\cdots(\ell_k,\vec{x}_k)$  of
	states such that 
	$\loc_0=\locinit$ and 
	for each
	$0 \leq i < k$ the state $(\loc_{i+1},\vec{x}_{i+1})$ is a successor of 
	$(\loc_i,\vec{x}_i)$.
	A \emph{run} in
	$\pCFG$ is an infinite sequence of states whose every finite
	prefix is a finite path. We denote by $\Fpath_\pCFG$ and $\Run_\pCFG$ the 
	sets of all finite paths and runs in $\pCFG$, respectively. A state $ (\loc,\vec{x}) $ is \emph{reachable}  if there is, for some $ \vecinit $, a finite path starting in $ (\locinit,\vecinit) $ and ending in $ (\loc,\vec{x}). $ 
	
	
	The nondeterminism is resolved via schedulers. A \emph{scheduler} is a function $\sigma$
	assigning: i) to every finite path ending in a state 
		$\state$, a probability distribution over transitions enabled in $ \state $; and
		ii) to every finite path that ends in a state in which a transition $\tau$ with a nondeterministic update $ \updates(\tau) = (i,R) $ is enabled, an integrable probability distribution over $ R $.
	To make the program dynamics under a given scheduler well-defined, we restrict to \emph{measurable} schedulers. This is standard in probabilistic settings~\cite{DBLP:conf/fossacs/NeuhausserSK09,NK:CTMDP-bisimulation} and hence we omit the formal definition.
	

	
	
	We use the standard  Markov Decision Process (MDP) semantics of pCFGs \cite{KKMO16:wp-expected-runtime,CFNH16:prob-termination,AgrawalC018}. Each pCFG $\pCFG$ induces a sample space $\Omega_{\pCFG} = \Run_\pCFG$ and the standard \emph{Borel} sigma-algebra $\mathcal{F}_{\pCFG}$ over $\Omega_{\pCFG}$. Moreover, a pCFG $\pCFG$ together
		with a
		scheduler $\sigma$, initial location $\locinit$,
		and initial variable valuation
		$\vecinit$ uniquely determine a
				probability measure $\probm^{\sigma}_{\locinit,\vecinit}$ in the probability space  $(\Omega_\pCFG,\mathcal{F}_{\pCFG},\probm^{\sigma}_{\locinit,\vecinit})$ capturing the rather intuitive dynamics of the programs execution: we start in state $ (\locinit,\vecinit) $ and in each step, a transition $ \tr $ enabled in the current state is selected (using $ \sigma $ if multiple transitions are enabled). If $ \updates(\tr) = (i,u) $, then the value of variable $ x_i $ is changed according to $ u $. 
		The formal construction
		of $\probm^{\sigma}_{\locinit,\vecinit}$ proceeds via the standard 
		\emph{cylinder
			construction}~\cite[Theorem 2.7.2]{Ash:book}. We denote by
		$\E^\sigma_{\locinit,\vecinit}$ the expectation operator in the probability
		space
		$(\Omega_\pCFG,\mathcal{F}_{\pCFG},\probm^{\sigma}_{\locinit,\vecinit})$. 

We stipulate that each pCFG has a special \emph{terminal location} $ \locterm $ whose all outgoing transitions must be self-loops. We say that a run $\run$ \emph{terminates} if it contains a configuration whose first component is $\locterm$. We denote by $\Termset$ the set of all terminating runs in $\Omega_\pCFG$. We say that a program represented by a pCFG $\pCFG$ terminates \emph{almost-surely (a.s.)} if for each measurable scheduler $\sigma$ and each initial variable valuation $\vecinit$ it holds that $\probm^{\sigma}_{\locinit,\vecinit}[\Termset] = 1$. 

\section{GLexRSMs for Probabilistic Programs}\label{sec:glexrsm-progs}


In this section, we define a syntactic proof rule for a.s.\ termination of PPs, showing its soundness via  Theorem~\ref{thm:genlexrsm}. In what follows, let $\pCFG$ be a pCFG.

\begin{definition}[Measurable map]
	An $n$-dimensional measurable map (MM) is a vector 
	$\boldsymbol{\lem}=(\lem_1,\dots,\lem_n)$, where each $ \lem_i$ is a function mapping each location $\loc$ to a real-valued Borel-measurable function $\lem_i(\loc)$ over program variables. 
%
	We say that $ \boldsymbol{\lem} $ is a \emph{linear expression map} (LEM) if each $ \lem_i $ is  representable by a linear expression over program variables.
\end{definition}

The notion of pre-expectation was introduced in~\cite{Kozen:probabilistic-PDL}, was made syntactic in the Dijkstra wp-style in~\cite{morgan1999pgcl}, and was extended to programs with continuous distributions in~\cite{SriramCAV}. It formalizes the ``one-step'' expectation operator $ \expv^\tr $ we used on an intuitive level in the introduction.
%
%
In Appendix~\ref{app:preexpectation}, we generalize the definition of pre-expectation presented in~\cite{SriramCAV} in order to allow taking expectation over subsets of successor states $\pCFG$ (a necessity for handling the $ \expneg $ constraint). We say that a set $S$ of states in $\pCFG$ is {\em measurable}, if for each location $\loc$ in $\pCFG$ we have that $\{\mathbf{x}\in\mathbb{R}^{|\vars|}\mid (\loc,\mathbf{x})\in S\}\in \mathcal{B}(\mathbb{R}^{|\vars|})$, i.e.\ it is in the Borel sigma-algebra of $\mathbb{R}^{|\vars|}$. Furthermore, we also differentiate between the {\em maximal} and {\em minimal pre-expectation}, which may differ in the case of non-deterministic assignments in programs and intuitively are equal to the maximal resp.~minimal value of the next-step expectation over all non-deterministic choices. Let $\lem$ be a 1-dimensional MM, $ \tau = (\loc, \succDist) $ a transition and $S$ be a measurable set of states in $\pCFG$.
We denote by $ \text{max-pre}_{\lem,S}^\tau(s) $ the \emph{maximal pre-expectation} of $ \lem $ in $ \tau $ given $S$ (i.e. the maximal expected value of $ \lem $ after making a step from $ s $ computed over successor states belonging to $ S $), and similarly we denote by $ \text{min-pre}_{\lem,S}^\tau $ the \emph{minimal pre-expectation} of $ \lem $ in $ \tau $ given $S$.

As in the case of non-probabilistic programs, termination certificates are supported by program invariants over-approximating the set of reachable states. An {\em invariant} in $\pCFG$ is a function $I$ which to each location $\loc$ of $\pCFG$ assigns a Borel-measurable set $I(\loc)\subseteq \mathbb{R}^{|\vars|}$ such that for any state $(\loc,\mathbf{x})$ reachable in $\pCFG$ it holds that $\mathbf{x}\in I(\loc)$. If each $I(\loc)$ is given by a conjunction of linear inequalities over program variables, we say that $I$ is a {\em linear invariant}.



\paragraph{GLexRSM-Based Proof Rule for Almost-Sure Termination.}

Given $n\in \mathbb{N}$, we call a map $\mathsf{lev}:\gentransitions\rightarrow \{0,1,\dots,n\}$ a {\em level map}. For  $\tau\in\gentransitions$ we say that $\mathsf{lev}(\tau)$ is its level. The level of a state is the largest level of any transition enabled at that state. We denote by $S_{\mathsf{lev}}^{\leq j}$ the set of states  with level $\leq j$.



\begin{definition}[GLexRSM Map]\label{def:genlexrsm-map}
	Let $\boldsymbol{\lem}$ be an $n$-dimensional MM and $I$ an invariant in $\pCFG$.
	We say that $ \boldsymbol{\lem} $ is a \emph{generalized lexicographic ranking supermartingale map (GLexRSM map)} supported by $I$, if there is a level map $\mathsf{lev}:\gentransitions\rightarrow \{0,1,\dots,n\}$ such that $\mathsf{lev}(\tau)=0$ iff $\tau$ is a self-loop transition at $\locterm$, and for any transition $\tau=(\loc,\delta)$ with $\loc\neq \locterm$ the following conditions hold:
	\begin{compactenum}
		\item $ \prank(\boldsymbol{\lem},\tran) \equiv \mathbf{x}\in I(\loc)\cap \guards(\tau) \Rightarrow \big(\text{max-pre}_{\lem_{\mathsf{lev}(\tau)}}^\tau(\loc,\vec{x}) \leq \lem_{\mathsf{lev}(\tau)}(\loc,\mathbf{x}) -1 \wedge 
		 \text{max-pre}_{\lem_{j}}^\tau(\loc,\vec{x}) \leq \lem_{j}(\loc,\mathbf{x})$ for all $ 1 \leq j < \mathsf{lev}(\tau) \big)$; 
		\item $\pnneg(\boldsymbol{\lem},\tau) \equiv \mathbf{x}\in I(\loc)\cap\guards(\tau) \Rightarrow  \big(\lem_{j}(\loc,\mathbf{x})\geq 0 \text{ for all } 1 \leq j \leq \mathsf{lev}(\tau)\big)$;
		\item $\expneg(\boldsymbol{\lem},\tau) \equiv \mathbf{x}\in I(\loc)\cap\guards(\tau) \Rightarrow \text{min-pre}_{\lem_j,S^{\leq j-1}_{\mathsf{lev}}}^\tau(\loc,\vec{x}) \geq 0$ for all $ 1 \leq j \leq \mathsf{lev}(\tau)$.
	\end{compactenum}
	A GLexRSM map $ \boldsymbol{\lem} $ is \emph{linear} (or LinGLexRSM map) if it is also an LEM.
\end{definition}

\begin{theorem}[Soundness of GLexRSM-maps for a.s.~termination]\label{thm:genlexrsmmap}
	Let $\pCFG$ be a pCFG and $I$ an invariant in $\pCFG$. Suppose that $\pCFG$ admits an $n$-dimensional GLexRSM map $\boldsymbol{\lem}$ supported by $I$, for some $n\in\mathbb{N}$. Then $\pCFG$ terminates a.s.
\end{theorem}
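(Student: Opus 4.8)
The plan is to instantiate the abstract soundness result, Theorem~\ref{thm:genlexrsm}, in the MDP semantics of $\pCFG$. Fix an arbitrary measurable scheduler $\sigma$ and initial valuation $\vecinit$ and work in $(\Omega_\pCFG,\mathcal{F}_\pCFG,\probm^\sigma_{\locinit,\vecinit})$, equipped with the canonical filtration $(\mathcal{F}_t)_{t=0}^{\infty}$ where $\mathcal{F}_t$ is generated by the cylinders determined by length-$t$ prefixes of runs (so $\mathcal{F}_t$ records the first $t$ states of a run). Take $T(\run)=\inf\{t\mid\run_t\text{ is at }\locterm\}$; since $\locterm$ carries only self-loops, $\{T\le t\}=\{\run\mid\run_t\text{ is at }\locterm\}\in\mathcal{F}_t$, so $T$ is a stopping time with $\{T<\infty\}=\Termset$. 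Define the process $\mathbf{X}_t(\run):=\boldsymbol{\lem}(\run_t)$ by evaluating the given GLexRSM map along the run, and, using its level map $\mathsf{lev}$, let $L^t_j$ be the set of runs $\run$ with $T(\run)>t$ whose current state $\run_t$ has level $j$. A non-terminal state enables only transitions of level $\ge 1$ (level $0$ being reserved for $\locterm$-self-loops), so $L^t_1,\dots,L^t_n$ form an $\mathcal{F}_t$-measurable partition of $\{T>t\}$.

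The core of the argument is to verify that $((\mathbf{X}_t)_{t=0}^\infty,(L^t_1,\dots,L^t_n)_{t=0}^\infty)$ is an instance of a GLexRSM for $T$ in the sense of Definition~\ref{def:genlexrsm}. The measurability requirements (item~1 and existence of the conditional expectations in item~2) are routine given Borel-measurability of $\boldsymbol{\lem}$ and integrability of the distributions in $\mathcal{D}$. For item~3, fix $\run\in L^t_j$, so that $\run_t=(\loc,\vec{x})$ with $\loc\ne\locterm$, $\vec{x}\in I(\loc)$ by reachability, and the enabled transition $\tau$ has $\vec{x}\models\guards(\tau)$ and $\mathsf{lev}(\tau)=j$. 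The key ``dictionary'' is: for $1\le i\le n$, the one-step conditional expectation $\E[\mathbf{X}_{t+1}[i]\mid\mathcal{F}_t](\run)$ is a pre-expectation of $\lem_i$ at $(\loc,\vec{x})$ along $\tau$, lying between $\text{min-pre}^\tau_{\lem_i}$ and $\text{max-pre}^\tau_{\lem_i}$ (the scheduler's randomization over a nondeterministic-update interval only moves it within this range), and the event $\bigcup_{j'=0}^{i-1}L^{t+1}_{j'}$ with $L^{t+1}_0=\{T\le t+1\}$ is exactly $\{\run_{t+1}\in S^{\leq i-1}_{\mathsf{lev}}\}$. Under this correspondence the first two bullets of item~3 become $\prank(\boldsymbol{\lem},\tau)$, the third becomes $\pnneg(\boldsymbol{\lem},\tau)$, and the fourth becomes $\expneg(\boldsymbol{\lem},\tau)$ — all invoked only for the indices $i\le j=\mathsf{lev}(\tau)$ that the bullets actually concern. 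Having checked all four conditions, Theorem~\ref{thm:genlexrsm} yields $\probm^\sigma_{\locinit,\vecinit}[T<\infty]=1$, i.e.\ $\probm^\sigma_{\locinit,\vecinit}[\Termset]=1$; as $\sigma$ and $\vecinit$ were arbitrary, $\pCFG$ terminates a.s.

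I expect the main obstacle to be exactly this ``dictionary'' step, i.e.\ rigorously matching the abstract conditional-expectation conditions with the syntactic pre-expectation operators. Two points require care. First, a pre-expectation blends the scheduler's demonic choices (over enabled transitions and over nondeterministic-update intervals) with the program's probabilistic choices (branching and sampling from possibly continuous, unbounded-support distributions); one must verify $\E[\mathbf{X}_{t+1}[i]\mid\mathcal{F}_t](\run)\le\text{max-pre}^\tau_{\lem_i}(\loc,\vec{x})$ and $\E[\mathbf{X}_{t+1}[i]\cdot\mathbb{I}(\run_{t+1}\in S^{\leq i-1}_{\mathsf{lev}})\mid\mathcal{F}_t](\run)\ge\text{min-pre}^\tau_{\lem_i,S^{\leq i-1}_{\mathsf{lev}}}(\loc,\vec{x})$, which is precisely what the generalized pre-expectation over a measurable subset of successor states (Appendix~\ref{app:preexpectation}) and integrability of the $d\in\mathcal{D}$ are designed to guarantee. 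Second, one must justify that the level of $\run_t$ coincides with $\mathsf{lev}(\tau)$ for the transition $\tau$ actually taken; this is immediate when every reachable non-terminal state enables a single transition — the case for pCFGs obtained from programs, and more generally whenever all transitions enabled at a state share a level — which is the regime in which the GLexRSM-map conditions are meant to be applied.
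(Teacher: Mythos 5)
Your overall strategy is the paper's: evaluate $\boldsymbol{\lem}$ along runs, take the canonical filtration and the termination time, define the levels $L^t_j$, verify Definition~\ref{def:genlexrsm}, and invoke Theorem~\ref{thm:genlexrsm}. But there is a genuine gap exactly at the point you flag and then dismiss at the end. The theorem quantifies over \emph{all} measurable schedulers for pCFGs that may contain nondeterministic branching, so reachable non-terminal states can enable several transitions of \emph{different} levels, and the scheduler may randomize among them. Your ``dictionary'' presupposes a single transition $\tau$ taken at $\run_t$ with $\mathsf{lev}(\tau)$ equal to the level of the state; under a randomizing scheduler, $\E[\mathbf{X}_{t+1}[j]\mid\mathcal{F}_t]$ is a mixture of pre-expectations over transitions of possibly lower level, for which $\prank$ and $\expneg$ impose no constraint on component $j$, so the strict-decrease and leftward-nonnegativity bullets of Definition~\ref{def:genlexrsm} simply do not follow. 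Your escape clause (``immediate when every reachable non-terminal state enables a single transition \dots the regime in which the GLexRSM-map conditions are meant to be applied'') restricts the theorem to programs without nondeterministic branching, which is not what is claimed. The paper closes this hole with a dedicated reduction (Proposition~\ref{poposition:detsch}): if some measurable scheduler gives positive non-termination probability, then so does some \emph{deterministic} measurable scheduler, constructed step by step by always picking the transition maximizing the non-termination probability; this proposition (including its measurability argument) is a substantial part of the proof and is absent from your proposal. Even for deterministic schedulers one must still reconcile the level of the chosen transition with the level used to define $L^t_j$, which your argument does not do beyond the unsupported single-transition assumption.

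A second, smaller gap: you call the existence of the conditional expectations in item~2 of Definition~\ref{def:genlexrsm} ``routine.'' The components $\lem_j$ are general measurable maps that may be negative and non-integrable along the process, so neither of the standard sufficient conditions (nonnegativity or integrability of $\mathbf{X}_{t+1}[j]\cdot\mathbb{I}(A)$) applies; the paper explicitly constructs these conditional expectations as pre-expectations with respect to the scheduler (Definition~\ref{def:preexpsch}) and verifies the defining property of conditional expectation directly, and it highlights in Section~\ref{sec:glexrsm-progs} that this is one of the non-straightforward parts of the instantiation. To make your proof complete you would need both this construction and the deterministic-scheduler reduction (or an alternative argument handling probabilistic mixing of transitions).
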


The previous theorem, proved in Appendix~\ref{app:map-soundness}, instantiates Theorem~\ref{thm:genlexrsm} to probability spaces of pCFGs. The instantiation is \emph{not} straightforward.
To ensure that a scheduler cannot ``escape'' ranking by intricate probabilistic mixing of transitions, we prove that it is sufficient to consider \emph{deterministic} schedulers, which do not randomization among transitions. Also, previous martingale-based certificates of a.s.~termination~\cite{HolgerPOPL,CFNH16:prob-termination,CF17,AgrawalC018} often impose either nonnegativity or integrability of random variables defined by measurable maps in programs to ensure that their conditional expectations exist. We show that these conditional expectations exist even without such assumptions and in the presence of nondeterminism.
This generalizes the result of~\cite{SriramCAV} to PPs with nondeterminism.

\begin{remark}[Comparison to~\cite{Huang0CG19}]
	The work~\cite{Huang0CG19} considers a modular approach. Given a loop whose body has already been proved a.s.\ terminating, they show that the loop terminates a.s.\ if it admits a 1-dimensional MM satisfying $\prank$ for each transition in the loop, $\pnneg$ for the transition entering the loop, and the ``{\em bounded expected difference}'' property for all transitions. Hence, their approach is suited mainly for programs with incremental variable updates.
	
	Modularity is also a feature of the approaches based on the weakest pre-expectation calculus~\cite{MM04,MM05,MMKK18}.
%
\end{remark}

\section{Algorithm for Linear Probabilistic Programs}\label{sec:algo}


We now present two algorithms for proving a.s.\ termination in linear probabilistic programs (LinPPs). The first algorithm considers LinPPs with sampling from bounded-support distributions, and we show that the problem of deciding the existence of LinGLexRSM maps for such LinPPs is decidable. Our second algorithm extends the first algorithm into a sound a.s. termination prover for general LinPPs. In what follows, let $\pCFG$ be a LinPP and $I$ a linear invariant in $\pCFG$.

\subsection{Linear Programs with Distributions of Bounded Support}\label{sec:boundeddis}

Restricting to linear arithmetic is standard in automated a.s.~termination proving, allowing to encode the existence of the termination certificate into systems of linear constraints~\cite{SriramCAV,CFNH16:prob-termination,AgrawalC018,ChenH20}. In the case of LinGLexRSM maps, the difficulty lies in encoding the $\expneg$ condition, as it involves integrating distributions in variable updates which cannot always be done analytically. We show, however, that for LinPPs with bounded-support sampling, we can define another condition which is easier to encode and which can replace $\expneg$. Formally, we say that a distribution $d\in\mathcal{D}$ has a {\em bounded support}, if there exists $N(d)\geq 0$ such that $\mathbb{P}_{X\sim d}[|X|>N(d)]=0$. Here, we use $\mathbb{P}_{X\sim d}$ to denote the probability measure induced by a random variable $X$ with the probability distribution $d$. We say that a LinPP has the {\em bounded support property (BSP)} if all distributions in the program have bounded support. For instance, the program in Fig.~\ref{fig:intromot:b} has the BSP, whereas the program in Fig.~\ref{fig:intromot:a} does not.
%
%
Using the same notation as in Definition~\ref{def:genlexrsm-map}, we put:
\begin{equation*}
\weakexpneg(\boldsymbol{\lem},\tau) \equiv \mathbf{x}\in I(\loc)\cap\guards(\tau) \Rightarrow \forall 1 \leq j \leq \mathsf{lev}(\tau)\; \text{min-pre}_{\lem_j}^\tau(\loc,\vec{x}) \geq 0.
\end{equation*}
(The 'W' stands for ``weak.'')
Intuitively, $\expneg$ requires nonnegativity of the expected value of $\lem_j$ when integrated over successor states of level smaller than $j$, whereas the condition $\weakexpneg$ requires nonnegativity of the expected value of $\lem_j$ when integrated over all successor states. Since $\lem_j$ is nonnegative at successor states of level at least $j$, this new condition is weaker than $\expneg$. Nevertheless, the following lemma shows that in order to decide existence of LinGLexRSM maps for programs with the BSP, we may w.l.o.g.~replace $\expneg$ by $\weakexpneg$ for all transitions but for those of probabilistic branching. The proof of the lemma is deferred to Appendix~\ref{sec:appboundedtechnical}.


\begin{lemma}\label{lemma:technical}
Let $\pCFG$ be a LinPP with the BSP and $I$ be a linear invariant in $\pCFG$. If a LEM $\boldsymbol{\lem}$ satisfies conditions $\prank$ and $\pnneg$ for all transitions, $\expneg$ for all transitions in $\Delta_{PB}$ and $\weakexpneg$ for all other transitions, then $\boldsymbol{\lem}$ may be increased pointwise by a constant value in order to obtain a LinGLexRSM map.
\end{lemma}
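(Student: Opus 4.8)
The plan is to prove something slightly stronger than stated: one and the same additive constant $W\ge 0$, added to \emph{every} component of $\boldsymbol{\lem}$, upgrades $\weakexpneg$ to $\expneg$ at all non-branching transitions while leaving $\prank$, $\pnneg$ and $\expneg$ (the latter for transitions in $\Delta_{PB}$) intact. The level map $\mathsf{lev}$ is kept unchanged throughout.

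First I would establish a \emph{state-independent lower bound} on the components at successor states. Fix a non-branching transition $\tau=(\loc,\delta)$ with $\loc\neq\locterm$, let $\loc'$ be its unique successor location, and fix $1\le j\le\mathsf{lev}(\tau)$. The goal is a constant $W^j_\tau\ge 0$, depending only on $\boldsymbol{\lem}$ and $\tau$ and not on any state, such that $\lem_j(\loc',\vec{x}')\ge -W^j_\tau$ for every successor state $(\loc',\vec{x}')$ of $\tau$ obtained from a state $(\loc,\vec{x})$ with $\vec{x}\in I(\loc)\cap\guards(\tau)$. When $\updates(\tau)=\bot$ or $\updates(\tau)=(i,R)$ for a bounded interval $R$, the assumption $\weakexpneg(\boldsymbol{\lem},\tau)$ says precisely that $\text{min-pre}^\tau_{\lem_j}(\loc,\vec{x})\ge 0$, which already forces $\lem_j\ge 0$ at every successor, so $W^j_\tau=0$ suffices. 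When $\updates(\tau)=(i,\up)$ and $\up$ contains a sampling term with distribution $d\in\mathcal{D}$, write $\up$ in the linear form $b+\sum_k a_k x_k + a_s\cdot v$ with $v$ the sampled value, and let $\beta$ be the coefficient of $x_i$ in the linear expression representing $\lem_j(\loc')$. A successor state has the form $(\loc',\vec{x}(i\leftarrow\up))$ with $v\in\support(d)$, and $\lem_j$ evaluated there equals $\gamma(\vec{x})+\beta a_s v$ for a quantity $\gamma(\vec{x})$ independent of $v$. Since $d$ has bounded support, $|v|\le N(d)$ and $|\mathbb{E}[d]|\le N(d)$; and $\weakexpneg(\boldsymbol{\lem},\tau)$ gives $\gamma(\vec{x})+\beta a_s\,\mathbb{E}[d]\ge 0$, whence $\gamma(\vec{x})\ge -|\beta a_s|\,N(d)$ and therefore $\lem_j(\loc',\vec{x}(i\leftarrow\up))\ge -2|\beta a_s|\,N(d)$. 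Thus $W^j_\tau = 2|\beta a_s|\,N(d)$ works, and it is finite exactly because $\pCFG$ has the BSP. I then set $W=\max\{\,W^j_\tau \mid \tau\text{ non-branching},\ 1\le j\le\mathsf{lev}(\tau)\,\}<\infty$.

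Next I would define the LEM $\boldsymbol{\lem}'$ by $\lem'_j(\loc)=\lem_j(\loc)+W$ for all $\loc$ and all $1\le j\le n$, and verify that $\boldsymbol{\lem}'$ is a GLexRSM map supported by $I$ with the same $\mathsf{lev}$. Since the maximal pre-expectation $\text{max-pre}^\tau_{\lem_j}$ is taken over the entire successor distribution (a probability measure, or a supremum over an interval, or a plain evaluation), adding a constant to $\lem_j$ adds the same constant to it; hence $\text{max-pre}^\tau_{\lem'_j}(\loc,\vec{x})=\text{max-pre}^\tau_{\lem_j}(\loc,\vec{x})+W$ while $\lem'_j(\loc,\vec{x})=\lem_j(\loc,\vec{x})+W$, so every inequality in $\prank(\boldsymbol{\lem},\tau)$ carries over verbatim to $\prank(\boldsymbol{\lem}',\tau)$; also $\lem'_j\ge\lem_j\ge 0$ under the invariant-and-guard premise by $\pnneg(\boldsymbol{\lem},\tau)$, giving $\pnneg(\boldsymbol{\lem}',\tau)$. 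For $\tau\in\Delta_{PB}$ the shift changes $\text{min-pre}^\tau_{\lem_j,S^{\leq j-1}_{\mathsf{lev}}}$ only by adding $W$ times the probability that the successor has level $\le j-1$, a nonnegative quantity, so $\expneg(\boldsymbol{\lem},\tau)$ implies $\expneg(\boldsymbol{\lem}',\tau)$. Finally, for a non-branching $\tau$ and $1\le j\le\mathsf{lev}(\tau)$, Step~1 gives $\lem'_j=\lem_j+W\ge W-W^j_\tau\ge 0$ at every successor of $\tau$ reachable from a state satisfying $I(\loc)\cap\guards(\tau)$; since $\text{min-pre}^\tau_{\lem'_j,S^{\leq j-1}_{\mathsf{lev}}}(\loc,\vec{x})$ is the minimum over the possible nondeterministic choices of an expectation of $\lem'_j$ restricted to a measurable subset of those successor states, it is nonnegative, i.e.\ $\expneg(\boldsymbol{\lem}',\tau)$ holds. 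The clause linking $\mathsf{lev}$ to the self-loops at $\locterm$ is unaffected, so $\boldsymbol{\lem}'$ is a LinGLexRSM map.

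I expect Step~1 to be the real obstacle: Step~2 is bookkeeping on how a constant shift commutes with the various pre-expectations, but the uniform lower bound genuinely needs the two standing hypotheses of the lemma working together. Linearity of the map makes the value of a component at a successor an \emph{affine} function of a bounded random variable (a single sample), so its deviation from its mean is controlled by a constant independent of the state, and bounded support makes that constant finite; $\weakexpneg$, i.e.\ nonnegativity of the mean, then pins the minimum down to at least the negative of that constant. A minor point requiring care is that the bound must be checked separately for the three shapes of a variable update (no update, nondeterministic interval, sampling expression), and that for probabilistic branching no analogous argument is available — which is exactly why $\expneg$, rather than $\weakexpneg$, has to be assumed on $\Delta_{PB}$ from the outset.
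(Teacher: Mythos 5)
Your proposal is correct and follows essentially the same route as the paper's own proof: both add a uniform constant of order $2N\cdot\text{max-coeff}(\boldsymbol{\lem})$ to every component, note that $\prank$, $\pnneg$ and $\expneg$ on $\Delta_{PB}$ are trivially preserved under a constant shift, and use $\weakexpneg$ together with the bounded-support bound to control how far $\lem_j$ can fall below zero at successor states, which makes the restricted minimal pre-expectation of the shifted map nonnegative. The only cosmetic differences are that in the sampling case the paper bounds the restricted expectation directly via a mean-plus-deviation decomposition where you first derive an almost-sure pointwise lower bound at successors, and that the case $\updates(\tau)=(i,u)$ with $u$ sampling-free, omitted from your case split, is handled exactly like your $\bot$ case.
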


\noindent{\em Algorithmic Results.} Let $\textsc{LinGLexPP}^{\textsc{bounded}}$ be the set of pairs $(\pCFG,I)$ of a pCFG $\pCFG$ representing a LinPP with the BSP and a linear invariant $I$ in $\pCFG$, such that $\pCFG$ admits a LinGLexRSM map supported by $I$.

\begin{theorem}
\label{thm:algo}
There is a polynomial-time algorithm deciding if a tuple $ (\pCFG,I) $ belongs to $\textsc{LinGLexPP}^{\textsc{bounded}}$. Moreover, if the answer is yes, the algorithm outputs a witness in the form of a LinGLexRSM map of minimal dimension.
\end{theorem}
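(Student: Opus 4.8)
The plan is to first reduce, via Lemma~\ref{lemma:technical}, to searching for a linear expression map together with a level map that satisfies only the ``linearly encodable'' conditions, and then to run the classical greedy construction for (lexicographic) linear ranking functions in the style of~\cite{AgrawalC018}, each step of which is a linear-programming feasibility query. Concretely, deciding $\textsc{LinGLexPP}^{\textsc{bounded}}$ is equivalent to deciding the existence of an LEM $\boldsymbol{\lem}$ and a level map satisfying $\prank$ and $\pnneg$ for all transitions, $\expneg$ for transitions in $\Delta_{PB}$, and $\weakexpneg$ for the remaining transitions; moreover the minimal dimensions of the two kinds of objects coincide. The ``relaxed solution $\Rightarrow$ LinGLexRSM map'' direction is exactly Lemma~\ref{lemma:technical}, and increasing the map by a constant changes neither its dimension nor the level map. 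For the converse, a LinGLexRSM map already satisfies $\prank$, $\pnneg$, $\expneg$ everywhere, and applying $\pnneg$ at successor states of level $\geq j$ (together with boundedness of support, which makes the pre-expectations finite) shows that $\weakexpneg$ then holds at the non-branching transitions as well; so a LinGLexRSM map is itself a relaxed solution of the same dimension.

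Next, the greedy algorithm. Maintain a set $U\subseteq\gentransitions$ of not-yet-ranked transitions, initialised to all transitions except the self-loop at $\locterm$, and a counter $i$ (the current level). In round $i$, let $T_U$ be the set of states enabling no transition of $U$ --- equivalently $S_{\mathsf{lev}}^{\leq i-1}$ for the levels assigned so far --- and compute the set $\mathcal{R}\subseteq U$ of transitions $\tau=(\loc,\delta)$ for which there exists a $1$-dimensional LEM $\lem$ such that: (i) $\lem$ is unaffected by every $\tau'=(\loc',\cdot)\in U$, i.e.\ $\text{max-pre}_{\lem}^{\tau'}\leq\lem$ on $I(\loc')\cap\guards(\tau')$; (ii) $\lem\geq 0$ on $I(\loc')\cap\guards(\tau')$ for every $\tau'=(\loc',\cdot)\in U$; (iii) $\text{min-pre}_{\lem,T_U}^{\tau'}\geq 0$ on $I(\loc')\cap\guards(\tau')$ for $\tau'\in U\cap\Delta_{PB}$ and $\text{min-pre}_{\lem}^{\tau'}\geq 0$ on $I(\loc')\cap\guards(\tau')$ for $\tau'\in U\setminus\Delta_{PB}$; and (iv) $\lem$ strictly ranks $\tau$, i.e.\ $\text{max-pre}_{\lem}^{\tau}\leq\lem-1$ on $I(\loc)\cap\guards(\tau)$. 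If $\mathcal{R}=\emptyset$ while $U\neq\emptyset$, report that no LinGLexRSM map exists; otherwise choose, by one further feasibility query, a single $\lem_i$ satisfying (i)--(iii) and ranking all of $\mathcal{R}$, set $\mathsf{lev}(\tau):=i$ for $\tau\in\mathcal{R}$, remove $\mathcal{R}$ from $U$, and repeat. When $U$ becomes empty, output $\boldsymbol{\lem}=(\lem_1,\dots,\lem_i)$ after increasing it by the constant of Lemma~\ref{lemma:technical}.

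Each round runs in polynomial time. Since the program is a LinPP with bounded support, for a linear expression the pre-expectation under a transition is finite and piecewise linear: deterministic and sampling updates are handled by substituting the expression, resp.\ its (constant) mean; for a bounded nondeterministic interval the extremum is attained at an endpoint, so ``$\text{max-pre}\leq\cdot$'' and ``$\text{min-pre}\geq 0$'' become conjunctions of two linear inequalities; and the $T_U$-restricted pre-expectation splits the domain into finitely many polyhedra, on each of which the inequality is linear. Thus every condition (i)--(iv) has the form ``for all $\mathbf{x}$ in a polyhedron a conjunction of linear inequalities holds whose coefficients are affine in the unknown coefficients of $\lem$'', and by the affine form of Farkas' lemma each is equivalent to the existence of nonnegative multipliers satisfying constraints that are \emph{jointly} linear in (the coefficients of $\lem$, the multipliers). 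Hence each test ``$\tau\in\mathcal{R}$?'' and the choice of $\lem_i$ are LP feasibility problems of size polynomial in $\lvert\pCFG\rvert+\lvert I\rvert$; there are at most $\lvert\gentransitions\rvert$ rounds and at most $\lvert\gentransitions\rvert+1$ LPs per round.

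Finally, correctness and minimality rest on two facts, which are the main obstacle. First, $\mathcal{R}$ is the \emph{largest} subset of $U$ rankable simultaneously: conditions (i)--(iii) cut out a convex cone of $1$-dimensional LEMs (using that $\text{max-pre}$ is subadditive and positively homogeneous and $\text{min-pre}$ superadditive), so if $\lem^{(1)}$ ranks $\tau_1$ and $\lem^{(2)}$ ranks $\tau_2$, both satisfying (i)--(iii), then $\lem^{(1)}+\lem^{(2)}$ ranks both and still satisfies (i)--(iii), and scaling upgrades a strict decrease to decrease by at least $1$. Second, the minimality induction: if $\pCFG$ admits a LinGLexRSM map of dimension $n'$ with level map $\mathsf{lev}'$, then at the start of round $i$ every transition still in $U$ has $\mathsf{lev}'$-level at least $i$, and the component $\lem'_m$ of that map with $m=\min_{\tau\in U}\mathsf{lev}'(\tau)$ satisfies (i)--(iii) for $U$ and ranks every transition of $U$ whose $\mathsf{lev}'$-level is $m$; hence $\mathcal{R}\neq\emptyset$ and the minimum $\mathsf{lev}'$-value over $U$ strictly increases each round, so $U$ empties within $n'$ rounds. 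Consequently the produced map has dimension $K\leq n'$ for every admissible $n'$, and being itself a LinGLexRSM map of dimension $K$, it has minimal dimension. The delicate point is verifying that $\lem'_m$ satisfies condition (iii) at round $i$ although $T_U$ is tied to the algorithm's own partial level map rather than to $\mathsf{lev}'$; this is again settled by applying $\pnneg$ at successor states and by the bounded-support hypothesis (via Lemma~\ref{lemma:technical}, which licenses passing between $\expneg$ and $\weakexpneg$).
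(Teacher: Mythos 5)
Your proposal is correct and takes essentially the same route as the paper: reduce via Lemma~\ref{lemma:technical} to the relaxed conditions ($\prank$, $\pnneg$, $\expneg$ only on $\Delta_{PB}$, $\weakexpneg$ elsewhere), synthesize components greedily with Farkas'-lemma LP encodings (including the polyhedral case split for the level-restricted pre-expectation at probabilistic-branching transitions), and justify completeness and minimal dimension by the closure of the solution set under sums. The only differences are implementation-level: you determine the maximal rankable set by per-transition feasibility LPs plus a conic-sum argument and prove completeness/minimality by a direct induction on the minimum level of a hypothetical witness, whereas the paper uses a single LP maximizing the number of $1$-ranked transitions per iteration and an exchange/contradiction argument against that maximality; both arguments gloss the same delicate point (transferring $\expneg$/$\pnneg$ of a given LinGLexRSM map to the algorithm's own level sets) to the same degree.
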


 The algorithm behind Theorem~\ref{thm:algo} is a generalization of algorithms in~\cite{ADFG10:lexicographic,AgrawalC018} finding LinLexRFs in non-probabilistic programs and LinLexRSM maps in PPs, respectively. Suppose that we are given a LinPP $\pCFG=(\locs,\pvars,\gentransitions,\updates,\guards)$ with the BSP and a linear invariant $I$. Our algorithm stores a set $\mathcal{T}$ initialized to all transitions in $\pCFG$. It then proceeds in iterations to compute new components of the witness. In each iteration it searches for a LEM $\lem$ which is required to
\begin{compactenum}
	\item be nonnegative on each $\tau=(\loc,\delta)\in\mathcal{T}$, i.e.\ $\forall \mathbf{x}.\, \mathbf{x}\in I(\loc)\cap G(\tau) \Rightarrow \lem(\loc,\mathbf{x})\geq 0$;
	\item be unaffecting on each $\tau=(\loc,\delta)\in\mathcal{T}$, i.e.\ $\forall \mathbf{x}.\, \mathbf{x}\in I(\loc)\cap G(\tau) \Rightarrow \text{max-pre}_{\lem}^\tau(\loc,\vec{x})$ $\leq \lem(\loc,\mathbf{x})$;
	\item have nonnegative minimal pre-expectation for each $\tau=(\loc,\delta)\in\mathcal{T}\, \backslash \Delta_{PB}$, i.e.\ $\forall \mathbf{x}.\, \mathbf{x}\in I(\loc)\cap G(\tau) \Rightarrow \text{min-pre}_{\lem}^\tau(\loc,\vec{x}) \geq 0$;
	\item if $S$ is the set of states in $\pCFG$ whose all enabled transitions have been removed from $\mathcal{T}$ in the previous algorithm iterations, $\forall\tau=(\loc,\delta)\in\mathcal{T}\cap\Delta_{PB}$, $\forall \mathbf{x}.\, \mathbf{x}\in I(\loc)\cap G(\tau) \Rightarrow \text{pre}_{\lem,S}^\tau(\loc,\vec{x}) \geq 0$; and
	\item $1$-rank the maximal number of transitions in $\tau\in\mathcal{T}$, i.e.\ $\forall \mathbf{x}.\, \mathbf{x}\in I(\loc)\cap G(\tau) \Rightarrow \text{max-pre}_{\lem}^\tau(\loc,\vec{x}) \leq \lem(\loc,\mathbf{x})-1$ for as many $\tau=(\loc,\delta)$ as possible.
\end{compactenum}
This is done by fixing an LEM template for each location $\loc$ in $\pCFG$, and converting the above constraints to an equivalent linear program $\mathcal{LP}_{\mathcal{T}}$ in template variables via Farkas' lemma (FL). The FL conversion (and its extension to strict inequalities~\cite{CFNH16:prob-termination}) is
standard in termination proving and encoding conditions 1-3 and 5 above is analogous to~\cite{ADFG10:lexicographic,AgrawalC018}, hence we omit the details.
We show how condition 4 can be encoded via linear constraints in Appendix~\ref{app:algo}, along with the algorithm pseudocode and the proof of its correctness. In each algorithm iteration, all transitions that have been $1$-ranked are removed from $\mathcal{T}$ and the algorithm proceeds to the next iteration. If all transitions are removed from $\mathcal{T}$, the algorithm concludes that the program admits a LinGLexRSM map (obtained by increasing the constructed LEM by a constant defined in the proof of Lemma~\ref{lemma:technical}). If in some iteration a new component which $1$-ranks at least $1$ transition in $\mathcal{T}$ cannot be found, the program does not admit a LinGLexRSM map.

%

We conclude by showing that our motivating example in Fig.\ \ref{fig:intromot:b} admits a LinGLexRSM map supported by a very simple linear invariant. Thus, by completeness, our algorithm is able to prove its a.s.\ termination.

\begin{example}
	\label{ex:motivation-2}
	Consider the program in Figure~\ref{fig:intromot:b} with a linear invariant $I(\loc_0)=\textit{true}$, $I(\loc_1)=x\geq -7$. Its a.s.~termination is witnessed by a LEM $ \boldsymbol{\lem}(\loc_0,(x,y)) = (1,x+7,y+7) $, $ \boldsymbol{\lem}(\loc_1,(x,y)) = (1,x+8,y+7) $ and $ \boldsymbol{\lem}(\locterm,(x,y)) = (0,x+7,y+7) $. Since $ \Delta_{PB}=\emptyset $ here, and since $\prank$, $\pnneg$ and $\weakexpneg$ are satisfied by $\boldsymbol{\lem}$, by Lemma~\ref{lemma:technical}, $\pCFG$ admits a LinGLexRSM map supported by $I$.
\end{example}

\subsection{Algorithm for general LinPPs}

While imposing $\weakexpneg$ lets us avoid integration in LinPPs with the BSP, this is no longer the case if we discard the BSP.

Intuitively, the problem in imposing the condition $\weakexpneg$ instead of $\expneg$ for LinPPs without the BSP, is that the set of states of smaller level over which $\expneg$ performs integration might have a very small probability, however the value of the LinGLexRSM component on that set is negative and arbitrarily large in absolute value. Thus, a naive solution for general LinPPs would be to ``cut off'' the tail events where the LinGLexRSM component can become arbitrarily negative and over-approximate them by a constant value in order to obtain a piecewise linear GLexRSM map.
However, this might lead to the jump in maximal pre-expectation and could violate $\prank$.

In what follows, we consider a slight restriction on the syntax of LinPPs that we consider, and introduce a new condition on LEMs that allows the over-approximation trick mentioned above while ensuring that the $\prank$ condition is not violated. We consider the subclass LinPP$^{\ast}$ of LinPPs in which no transition of probabilistic branching and a transition with a sampling instruction share a target location. This is a very mild restriction (satisfied, e.g. by our motivating example in Fig.~\ref{fig:intromot:b}) which is enforced for technical reasons arising in the proof of Lemma~\ref{lemma:technical2}. Each LinPP can be converted to satisfy this property by adding a \textbf{skip} instruction in the program's source code where necessary. Second, using the notation of Definition~\ref{def:genlexrsm-map}, we define the new condition $\unbounded$ as follows:
\begin{equation*}
\begin{split}
    &\unbounded(\boldsymbol{\lem},\tau) \equiv \text{ if $\updates(\tau)=(i,u)$ with $u$ containing a sampling from a dis-}\\
    &\hspace{0.5cm}\text{ tribution of unbounded support, and $\loc'$ is the target location of $\tau$, then the }\\
    &\hspace{0.5cm}\text{ coefficient of the variable with index $i$ in $\lem_j(\loc')$}\text{ is $0$ for all $1\leq j<\mathsf{lev}(\tau)$ }.
\end{split}
\end{equation*}
The following technical lemma is an essential ingredient in the soundness proof of our algorithm for programs in LinPP$^{\ast}$. Its proof can be found in Appendix~\ref{app:technical2}.

\newlength{\textfloatseptemp}
\setlength{\textfloatseptemp}{\textfloatsep}
\setlength{\textfloatsep}{\baselineskip}


\begin{lemma}\label{lemma:technical2}
Let $\pCFG$ be a LinPP$^{\ast}$ and $I$ be a linear invariant in $\pCFG$. If a LEM $\boldsymbol{\lem}$ satisfies $\prank$ and $\pnneg$ for all transitions, $\expneg$ for all transitions of probabilistic branching, $\weakexpneg$ for all other transitions, as well as $\unbounded$, then $\pCFG$ admits a piecewise linear GLexRSM map supported by $I$.
\end{lemma}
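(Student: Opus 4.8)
The plan is to turn the hypothesized LEM $\boldsymbol{\lem}$ into a piecewise-linear map $\hat{\boldsymbol{\lem}}$ with the \emph{same} level map $\mathsf{lev}$ and check the three conditions of Definition~\ref{def:genlexrsm-map}. I would use three operations. Two of them — multiplying the whole map by a constant $K\ge 1$, and adding to each component $\lem_j$ the same constant $\beta_j\ge 0$ at every location — preserve $\prank$, $\pnneg$, $\weakexpneg$, $\expneg$ on $\Delta_{PB}$, and $\unbounded$ (each a routine check, e.g.\ $\text{max-pre}_{K(\lem_j+\beta_j)}^\tau=K(\text{max-pre}_{\lem_j}^\tau+\beta_j)$). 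The third, the \emph{clamp}, replaces $\lem_j(\loc,\cdot)$ by $0$ on the part of $I(\loc)$ where the state has level $<j$; it is applied exactly at locations that are targets of sampling transitions, which by the LinPP$^{\ast}$ restriction are disjoint from targets of $\Delta_{PB}$-transitions. Then $\hat{\boldsymbol{\lem}}$ is obtained by scaling by a large $K$, shifting each component by a large $\beta_j$, and clamping; its pieces are cut out by the invariant and by the (Boolean combinations of) guards that determine state levels.

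The conditions $\pnneg$ and $\expneg$ for $\hat{\boldsymbol{\lem}}$ are then straightforward. At a state where $\tau$ is enabled the level is $\ge\mathsf{lev}(\tau)\ge j$, so there $\hat\lem_j$ equals $K(\lem_j+\beta_j)\ge 0$, which is $\pnneg$. For $\expneg$, the minimal pre-expectation integrates $\hat\lem_j$ over successors of level $<j$ only: at a clamped target the integrand is $0$; at a non-clamped target one uses the hypothesis $\expneg(\boldsymbol{\lem},\tau)$ for $\Delta_{PB}$-transitions, the fact that $\weakexpneg$ already forces the successor value of $\lem_j$ to be $\ge 0$ for deterministic transitions, and the fact that $\weakexpneg$ plus bounded noise makes it bounded below by a constant absorbed by $\beta_j$ for bounded-support sampling transitions (the latter is essentially the argument of Lemma~\ref{lemma:technical}). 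For $\prank$, the inequalities $\text{max-pre}_{\hat\lem_j}^\tau\le\hat\lem_j$ for the left components $j<\mathsf{lev}(\tau)$ reduce to those of $\boldsymbol{\lem}$: by $\unbounded$, at the target of an unbounded-support sampling transition the components of index $<\mathsf{lev}(\tau)$ do not depend on the sampled variable, so along $\tau$ the clamp only switches such a component to $0$ at successors of level $<j$, and since $\hat\lem_j\ge 0$ at the enabling state this preserves the inequality; for other update kinds the clamp/shift bookkeeping is immediate.

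The main obstacle is $\prank$ for the ranking component $j=\mathsf{lev}(\tau)$ at a clamped target, where the clamp genuinely changes the component in the unbounded-sampling direction and tends to \emph{increase} the maximal pre-expectation. Expanding $\text{max-pre}_{\hat\lem_{\mathsf{lev}(\tau)}}^\tau(\loc,\mathbf{x})$ and using $\prank(\boldsymbol{\lem},\tau)$, the required inequality reduces to showing that $1+\mathbb{E}^\tau[\lem_{\mathsf{lev}(\tau)}(s')\cdot\mathbb{I}(\mathrm{level}(s')<\mathsf{lev}(\tau))]+\beta_{\mathsf{lev}(\tau)}\cdot\mathbb{P}^\tau[\mathrm{level}(s')<\mathsf{lev}(\tau)]$ is bounded below by a fixed positive constant over $\mathbf{x}\in I(\loc)\cap\guards(\tau)$, after which any large enough $K$ finishes the job. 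This rests on two facts about an integrable distribution $\xi$: since $\weakexpneg$ forces $\mathbb{E}^\tau[\lem_{\mathsf{lev}(\tau)}(s')]\ge 0$, the negative part of $\lem_{\mathsf{lev}(\tau)}(s')$ is pointwise dominated by $|\alpha|\,|\xi-\mathbb{E}[\xi]|$ (with $\alpha$ the coefficient of the sampled variable in the component at the target), whose expectation is finite; and $\mathbb{E}[(t+\xi)^-]\to 0$ as $t\to\infty$, so for states far from the loop boundary the level-$<\mathsf{lev}(\tau)$ successor set has tiny probability and a negligible $\lem_{\mathsf{lev}(\tau)}$-contribution, while for states in any bounded region it has probability bounded away from $0$ and is handled by a large $\beta_{\mathsf{lev}(\tau)}$. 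Combining these yields $\prank$, so $\hat{\boldsymbol{\lem}}$ is a piecewise-linear GLexRSM map supported by $I$; the $\unbounded$ hypothesis and the LinPP$^{\ast}$ restriction are precisely what prevent the clamp from damaging the left components and the $\Delta_{PB}$ part.
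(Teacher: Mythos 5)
Your construction is sound, but it is genuinely different from the paper's. The paper also rescales and shifts (it first re-applies the argument of Lemma~\ref{lemma:technical} to get $\expneg$ for every transition without unbounded-support sampling, which is the role your $\beta_j$ play), but its clamp is \emph{value}-based rather than \emph{successor-level}-based: it sets the new component to $0$ only at pairs $(j,\loc')$ where $\loc'$ is the target of an unbounded-support sampling transition of level $j$, and only on the region where $\lem_j(\loc',\cdot)<-C$, taking $2\lem_j+2C$ elsewhere, with $C=(2N+K)\cdot\text{max-coeff}(\boldsymbol{\lem})+1$. The payoff of the value threshold is that the loss term in the $\prank$ check for the ranking component of an unbounded transition is automatically a tail integral: $\lem_j(s')<-C$ forces $|\xi-\mathbb{E}[\xi]|>C/\text{max-coeff}(\boldsymbol{\lem})$, so integrability alone makes the loss less than $1$, with no probability-compensation term needed; and $\expneg$ at such targets is immediate because the clamped component is nonnegative everywhere there. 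Your level-based clamp instead makes $\expneg$ trivial (the integrand vanishes on the level-$<j$ successor set) at the price of the compensation term $\beta_j\cdot\mathbb{P}^\tau[\mathrm{level}(s')<j]$; both constructions use $\unbounded$ and the LinPP$^{\ast}$ restriction for the same purposes (protecting the left components and the probabilistic-branching targets), and both deliver a piecewise linear map with the same level map.

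One step of yours is stated imprecisely, though it is fixable with the ingredients you already cite. The dichotomy ``states far from the loop boundary'' versus ``states in a bounded region, where $\mathbb{P}^\tau[\mathrm{level}(s')<\mathsf{lev}(\tau)]$ is bounded away from $0$'' is not a valid case split: $I(\loc)\cap\guards(\tau)$ is in general unbounded, and the level-$<\mathsf{lev}(\tau)$ successor event can have probability near (or equal to) $1$ arbitrarily far away, e.g.\ when it is triggered by guards on variables unaffected by the sampling. The correct argument is uniform in the state and needs no case split on states: writing $\lem_{\mathsf{lev}(\tau)}(s')=\text{min-pre}^{\tau}_{\lem_{\mathsf{lev}(\tau)}}(\loc,\mathbf{x})+\alpha(\xi-\mathbb{E}[\xi])\ \ge\ \alpha(\xi-\mathbb{E}[\xi])$ (by $\weakexpneg$ and linearity), one gets for \emph{any} measurable bad set $A$ that $\mathbb{E}\big[\lem_{\mathsf{lev}(\tau)}(s')\cdot\mathbb{I}(A)\big]\ \ge\ -|\alpha|\,M\,\mathbb{P}[A]-|\alpha|\,\mathbb{E}\big[|\xi-\mathbb{E}[\xi]|\cdot\mathbb{I}(|\xi-\mathbb{E}[\xi]|>M)\big]$; choose $M$ by integrability so that the second term is at most $1/2$ (uniformly over the finitely many distributions and coefficients), and then $\beta_{\mathsf{lev}(\tau)}\ge M\cdot\text{max-coeff}(\boldsymbol{\lem})$ absorbs the first, giving the uniform lower bound $1/2$ for your quantity $1+\mathbb{E}^\tau[\lem_{\mathsf{lev}(\tau)}(s')\cdot\mathbb{I}(\mathrm{level}(s')<\mathsf{lev}(\tau))]+\beta_{\mathsf{lev}(\tau)}\cdot\mathbb{P}^\tau[\mathrm{level}(s')<\mathsf{lev}(\tau)]$, after which $K\ge 2$ suffices. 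This truncation is essentially the same estimate the paper performs with its constant $K$ (the threshold on $|X-\mathbb{E}[X]|$), so with this repair your proof goes through.
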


\noindent{\em Algorithm.} The new algorithm shares an overall structure with the algorithm from Section~\ref{sec:boundeddis}. Thus, we only give a high level overview and focus on novel aspects. The algorithm pseudocode is presented in Algorithm~\ref{algo:generalprogs}.

\begin{algorithm}[t]

	\SetKwInOut{Input}{input}\SetKwInOut{Output}{output}
	\DontPrintSemicolon

	\Input{A LinPP$^{\ast}$ $\pCFG$, linear invariant $I$.}
	\Output{An LEM satisfying the conditions of Lemma~\ref{lemma:technical2}, if it exists}
	$\mathcal{T}$ $\longleftarrow$ all transitions in $\pCFG$; $d$ $\longleftarrow$ $0$\\
	\While{$\mathcal{T}$ is non-empty}{
		construct $\mathcal{LP}_{\mathcal{T}}^{\text{unb}}$\label{aline:reduce}\\
		\If{$\mathcal{LP}_{\mathcal{T}}^{\text{unb}}$ is feasible}{
		    $d$ $\longleftarrow$ $d+1$; $\lem_d$ $\longleftarrow$ LEM defined by the optimal solution of $\mathcal{LP}_{\mathcal{T}}^{\text{unb}}$\\
			$\mathcal{T}$ $\longleftarrow$ $\mathcal{T}\backslash \{\tau\in\mathcal{T}\mid \tau \text{ is 1-ranked by } \lem_d\}$}\label{aline:prune}
		\Else{
		    found $\longleftarrow$ false\\
		    \For{$\tau_0\in\transitions^{\text{unb}}\cap\,\mathcal{T}$}{
		        construct $\mathcal{LP}_{\mathcal{T}}^{\tau_0,\text{unb}}$\\
		        \If{$\mathcal{LP}_{\mathcal{T}}^{\tau_0,\text{unb}}$ is feasible}{
		            $d$ $\longleftarrow$ $d+1$; found $\longleftarrow$ true\\
			        $\lem_d$ $\longleftarrow$ LEM defined by the optimal solution of $\mathcal{LP}_{\mathcal{T}}^{\tau_0,\text{unb}}$\\
			        $\mathcal{T}$ $\longleftarrow$ $\mathcal{T}\backslash \{\tau\in\mathcal{T}\mid \tau \text{ is 1-ranked by } \lem_d\}$}}
			 \lIf{not found}{\Return No LEM as in Lemma~\ref{lemma:technical2}}}
	}
	\Return{$(\lem_1,\dots,\lem_d)$}
	\caption{Algorithm for proving a.s.~termination in LinPP$^{\ast}$.}
	\label{algo:generalprogs}
\end{algorithm}

The condition $\unbounded$ is encoded by modifying the templates for the new LEM components. Let $\mathord{\mapsto}^{\text{unb}}$ be the set of transitions in $\pCFG$ containing sampling from unbounded support distributions, and for any such transition $\tau$ let $\loc'_{\tau}$ be its target location. Then for any set of transitions $\mathcal{T}$, construct a linear program $\mathcal{LP}_{\mathcal{T}}^{\text{unb}}$ analogously to $\mathcal{LP}_{\mathcal{T}}$ in Section~\ref{sec:boundeddis}, additionally enforcing that for each $\tau\in\mathord{\mapsto}^{\text{unb}}\cap\,\mathcal{T}$, the coefficient of the variable updated by $\tau$ in the LEM template at $\loc'_{\tau}$ is $0$. Algorithm~\ref{algo:generalprogs} first tries to prune as many transitions as possible by repeatedly solving $\mathcal{LP}_{\mathcal{T}}^{\text{unb}}$ and removing ranked transitions from $\mathcal{T}$, see lines 3-6. Once no more transitions can be ranked, the algorithm tries to rank new transitions by allowing non-zero template coefficients previously required to be $0$, while still enforcing $\unbounded$. For a set of transitions $\mathcal{T}$ and for $\tau_0\in \mathord{\mapsto}^{\text{unb}}\cap\,\mathcal{T}$, we construct a linear program $\mathcal{LP}_{\mathcal{T}}^{\tau_0,\text{unb}}$ analogously to $\mathcal{LP}_{\mathcal{T}}^{\text{unb}}$ but allowing a non-zero coefficient of the variable updated by $\tau_0$ at $\loc'_{\tau_0}$. However, we further impose that the new component $1$-ranks any other transition in $\mathord{\mapsto}^{\text{unb}}\cap\,\mathcal{T}$ with the target location $\loc'_{\tau_0}$. 
This new linear program is solved for all $\tau_0\in \mathord{\mapsto}^{\text{unb}}\cap\,\mathcal{T}$ and all $1$-ranked transitions are removed from $\mathcal{T}$, as in Algorithm~\ref{algo:generalprogs}, lines 7-15. The process continues until all transitions are pruned from $ \mathcal{T} $ or until no remaining transition can be 1-ranked, in which case no LEM as in Lemma~\ref{lemma:technical2} exists. 

\begin{theorem}\label{thm:soundness2}
Algorithm~\ref{algo:generalprogs} decides in polynomial time if a LinPP$^{\ast}$ $\pCFG$ admits an LEM which satisfies all conditions of Lemma~\ref{lemma:technical2} and which is supported by $I$. Thus, if the algorithm outputs an LEM, then $\pCFG$ is a.s.\ terminating and admits a piecewise linear GLexRSM map supported by $I$.
\end{theorem}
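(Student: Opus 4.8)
The statement splits into three parts: the implication ``output $\Rightarrow$ a.s.~termination'', the polynomial time bound, and the correctness of the decision procedure. The first part is immediate once the others are in place: an LEM produced by Algorithm~\ref{algo:generalprogs} satisfies, by the way its components are built, the hypotheses of Lemma~\ref{lemma:technical2} --- $\prank$ and $\pnneg$ for all transitions, $\expneg$ for $\Delta_{PB}$-transitions, $\weakexpneg$ for the rest, and $\unbounded$ --- so Lemma~\ref{lemma:technical2} hands us a piecewise linear GLexRSM map supported by $I$, and Theorem~\ref{thm:genlexrsmmap} then gives almost-sure termination. For the time bound I would check that each linear program solved by the algorithm, $\mathcal{LP}_{\mathcal{T}}^{\text{unb}}$ and $\mathcal{LP}_{\mathcal{T}}^{\tau_0,\text{unb}}$, has size polynomial in $|\pCFG|$ and in the representation of $I$: for a LinPP the maps $\text{max-pre}_{\lem}^\tau$, $\text{min-pre}_{\lem}^\tau$ and the $S$-restricted pre-expectation of a linear template are again linear expressions in the template coefficients (using $\mathbb{E}[d]$ and the support bounds where relevant, and a case split over the finitely many polyhedral pieces of $S$ for condition~4), and each entailment ``$\mathbf{x}\in I(\loc)\cap\guards(\tau)\Rightarrow(\text{linear inequality})$'' converts via Farkas' lemma and its strict-inequality variant~\cite{CFNH16:prob-termination} into polynomially many linear constraints; the $\unbounded$ requirement only zeroes some of these variables. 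The outer loop runs at most $|\gentransitions|$ times since every successful pass deletes at least one transition from $\mathcal{T}$, the inner loop performs at most $|\transitions^{\text{unb}}|$ LP solves per pass, and the ``maximize the number of $1$-ranked transitions'' objective is solved in polynomial time by the method of~\cite{ADFG10:lexicographic}.

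Soundness of the decision procedure is a bookkeeping argument. Suppose the algorithm halts with $\mathcal{T}=\emptyset$, having produced $\lem_1,\dots,\lem_d$ in passes $1,\dots,d$; set $\mathsf{lev}(\tau)$ to the index of the pass that removed $\tau$ (the terminal self-loop gets level $0$). Then $\prank$ holds because $\lem_j$ is unaffecting on every transition still in $\mathcal{T}$ at pass $j$ and $1$-ranks exactly those removed at pass $j$; $\pnneg$ holds because the LP of pass $j$ forces $\lem_j$ nonnegative on all transitions then in $\mathcal{T}$; $\weakexpneg$ (resp.\ $\expneg$ for $\Delta_{PB}$, where one checks that the set $S$ of condition~4 at pass $j$ is exactly $S_{\mathsf{lev}}^{\le j-1}$ and that the pre-expectation of a branching transition carries no nondeterminism) follows identically; and $\unbounded$ holds because, for an unbounded-support transition $\tau$ with $\mathsf{lev}(\tau)=m$, at each pass $j<m$ the LP solved --- whether $\mathcal{LP}_{\mathcal{T}}^{\text{unb}}$ or $\mathcal{LP}_{\mathcal{T}}^{\tau_0,\text{unb}}$ with $\loc'_{\tau_0}\ne\loc'_{\tau}$ --- zeroes the coefficient of $\tau$'s updated variable in $\lem_j(\loc'_{\tau})$ ($\tau$ cannot be an activated transition at such a pass, else it would be $1$-ranked and removed there). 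Hence $(\lem_1,\dots,\lem_d)$ is an LEM of the kind required by Lemma~\ref{lemma:technical2}, supported by $I$.

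The heart of the proof, and the step I expect to be the main obstacle, is completeness: if some LEM $\boldsymbol{\lem}$ with level map $\mathsf{lev}$ satisfying the Lemma~\ref{lemma:technical2} hypotheses exists, the algorithm never returns ``No''. I would argue by induction on $|\mathcal{T}|$ that, as long as $\mathcal{T}\ne\emptyset$, the LP of the current pass is feasible and ranks at least one transition, so $|\mathcal{T}|$ strictly decreases; since restricting a valid LEM to a subset of transitions preserves $\prank$, $\pnneg$, $\weakexpneg$, $\unbounded$, and --- by monotonicity of pre-expectation together with $\pnneg$, using $S_{\mathsf{lev}}^{\le m-1}\subseteq S$ and the fact that the relevant component is nonnegative on the extra states of $S$ that carry an enabled transition of level $\ge m$ --- also the $S$-restricted $\expneg$, the induction hypothesis is maintained. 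For feasibility, let $m=\min\{\mathsf{lev}(\tau)\mid\tau\in\mathcal{T}\}$: the component $\lem_m$ is unaffecting on all of $\mathcal{T}$, $1$-ranks the nonempty set of level-$m$ transitions in $\mathcal{T}$, and satisfies conditions~1--4. The only obstruction to $\lem_m$ witnessing feasibility of $\mathcal{LP}_{\mathcal{T}}^{\text{unb}}$ is that $\lem_m$ uses a nonzero coefficient for a variable updated, at a target $\loc'$, by an unbounded-support transition $\tau_0\in\mathcal{T}$; but then $\mathsf{lev}(\tau_0)\le m$ by $\unbounded(\boldsymbol{\lem},\tau_0)$, hence $\mathsf{lev}(\tau_0)=m$, so $\lem_m$ $1$-ranks $\tau_0$, and the same reasoning for the remaining unbounded transitions targeting $\loc'$ shows $\lem_m$ $1$-ranks those too --- exactly the extra constraint built into $\mathcal{LP}_{\mathcal{T}}^{\tau_0,\text{unb}}$. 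Thus the inner \textbf{for} loop finds a feasible $\mathcal{LP}_{\mathcal{T}}^{\tau_0,\text{unb}}$ and progress is made. The LinPP$^{\ast}$ restriction enters precisely here: because no probabilistic-branching transition shares $\loc'$ with the sampling transition $\tau_0$, activating a coefficient at $\loc'$ cannot disturb the $\expneg$ obligations of any $\Delta_{PB}$-transition, which is what keeps the modified LP satisfiable. Combining the three parts yields the theorem.
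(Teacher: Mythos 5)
Your soundness bookkeeping and the complexity accounting are fine and agree with the paper's proof, which likewise reads the level map off the iteration in which a transition is pruned and leaves the time bound to the Farkas-based LP encodings (your parenthetical that the activated transition $\tau_0$ is itself $1$-ranked at the pass that frees its coefficient is needed for $\unbounded$ and matches the intended semantics of $\mathcal{LP}_{\mathcal{T}}^{\tau_0,\text{unb}}$). The gap is in completeness, exactly at the step you call the heart of the proof. Your feasibility witness for the current pass is $\lem_m$, the minimal-level component of the hypothesised LEM over the transitions still in $\mathcal{T}$, and you dismiss the zero-coefficient side constraints by arguing that a nonzero coefficient for the variable updated by some $\tau_0\in\transitions^{\text{unb}}\cap\mathcal{T}$ at $\loc'_{\tau_0}$ forces $\mathsf{lev}(\tau_0)=m$, so that $\mathcal{LP}_{\mathcal{T}}^{\tau_0,\text{unb}}$ is feasible. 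Two steps there are not justified. First, $\mathcal{LP}_{\mathcal{T}}^{\tau_0,\text{unb}}$ additionally requires the new component to $1$-rank \emph{every} other $\tau_1\in\transitions^{\text{unb}}\cap\mathcal{T}$ with target location $\loc'_{\tau_0}$; your ``same reasoning'' yields $\mathsf{lev}(\tau_1)=m$ only when $\tau_1$ updates the \emph{same} variable as $\tau_0$, because only then does $\unbounded(\boldsymbol{\lem},\tau_1)$ speak about the coefficient that is nonzero. If $\tau_1$ updates a different variable, nothing prevents $\mathsf{lev}(\tau_1)>m$, in which case $\lem_m$ is merely unaffecting on $\tau_1$ and violates the imposed $1$-ranking constraint, so it is not a feasible point of $\mathcal{LP}_{\mathcal{T}}^{\tau_0,\text{unb}}$. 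Second, you implicitly assume a single offending coefficient: if $\lem_m$ has nonzero coefficients for the updated variables of two distinct $\tau_0,\tau_0'\in\transitions^{\text{unb}}\cap\mathcal{T}$ at their respective target locations (both then of level $m$), then no single program $\mathcal{LP}_{\mathcal{T}}^{\tau,\text{unb}}$ frees both coefficients, and $\lem_m$ is feasible for none of the LPs the algorithm tries. In either situation your induction step ``the LP of the current pass is feasible and ranks at least one transition'' is not established, so the direction ``witness exists $\Rightarrow$ the algorithm does not return No'' remains open in your write-up.

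For comparison, the paper does not argue completeness by exhibiting a witness component directly: it first records that LEMs satisfying the hypotheses of Lemma~\ref{lemma:technical2} are closed under pointwise sum, and then reruns the maximality-of-ranked-transitions contradiction from the proof of Theorem~\ref{thm:algorithm}, claiming that whenever a witness exists and $\mathcal{T}\neq\emptyset$, one of $\mathcal{LP}_{\mathcal{T}}^{\text{unb}}$ or $\mathcal{LP}_{\mathcal{T}}^{\tau,\text{unb}}$ admits a solution ranking a new transition. That argument is itself left at the level of ``analogous to Theorem~\ref{thm:algorithm}'', but it is the route you would need to follow (or else strengthen your witness construction to show how the extra $1$-ranking constraints and the one-coefficient-at-a-time relaxation can always be satisfied, e.g.\ by handling the two cases above explicitly) before your completeness claim can stand.
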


The proof of Theorem~\ref{thm:soundness2} can be found in Appendix~\ref{app:soundess2}. We conclude by showing that Algorithm~\ref{algo:generalprogs} can prove a.s.~termination of our motivating example in Fig.~\ref{fig:intromot:a}.

\begin{example}
	\label{ex:mot1} Consider the program in Figure~\ref{fig:intromot:a} with a linear invariant $I(\loc_0)=\textit{true}$, $I(\loc_1)=y\geq 0$. The LEM defined via $ \boldsymbol{\lem}(\loc_0,(x,y)) = (1,2y+2,x+1)$, $ \boldsymbol{\lem}(\loc_1,(x,y)) = (1,2y+1,x+1)$ and $ \boldsymbol{\lem}(\locterm,(x,y)) = (0,2y+2,x+1) $ satisfies $\prank$, $\pnneg$ and $\weakexpneg$, which is easy to check. Furthermore, the only transition containing a sampling instruction is the self-loop at $\loc_1$ which is ranked by the third component of $\boldsymbol{\lem}$. As the coefficients of $x$ of the first two components at $\loc_1$ are equal to $0$, $\boldsymbol{\lem}$ also satisfies $\unbounded$. Hence, $\boldsymbol{\lem}$ satisfies all conditions of Lemma~\ref{lemma:technical2} and Algorithm~\ref{algo:generalprogs} proves a.s.~termination.
\end{example}

	\section{Conclusion}

	In this work we present new lexicographic termination certificates for probabilistic programs. We also show how to automate the search for the new certificate within a wide class of probabilistic programs.
	An interesting direction of future work would be automation beyond linear arithmetic programs.

	\section*{Acknowledgements}
	
	This research was partially supported by the ERC CoG 863818 (ForM-SMArt), the Czech Science Foundation grant No.~GJ19-15134Y, and the European Union’s Horizon 2020 research and innovation programme under the Marie Skłodowska-Curie Grant Agreement No.~665385.

\bibliography{popl20}

\clearpage
\appendix

\begin{center}
	{\Large Appendix}
\end{center}

\section{Missing Parts in the Proof of Theorem~\ref{thm:genlexrsm}}
\label{app:genlglexrsm}

\noindent{\em Claim.} We first prove the claim at the beginning of the proof, i.e.~that there exist $1\leq k\leq n$ and $s,M\in\mathbb{N}_0$ such that the set $B$ of all $\omega\in \Omega$ for which the following properties hold has positive measure, i.e.~$\mathbb{P}[B]>0$: (1)~$T(\omega)=\infty$, (2)~$\mathbf{X}_s[k](\omega)\leq M$, (3)~for each $t\geq s$, the level of $\omega$ at step $t$ is at least $k$, and (4) the level of $\omega$ equals $k$ infinitely many times.

\smallskip For $\omega\in \Omega$, we define $\mathsf{minlev}(\omega)$ to be the smallest $0\leq j\leq n$ such that the level of $\omega$ is equal to $j$ in infinitely many steps $i$. Then to prove the Claim, let $B_k =\{\omega\in\Omega \mid T(\omega)=\infty \land \minlev(\omega)=k\}$ for each $1\leq k\leq n$. Then
\[ \{\omega\in\Omega \mid T(\omega)=\infty\} = \cup_{k=1}^n B_k, \]
and thus, by the union bound, there exists $1\leq k\leq n$ for which $\mathbb{P}[B_k]>0$. We may express $B_k$ as a union of events over the time of the last visit to some $L^i_j$ with $j<k$. If we write $B_k^s = \{\omega \in \Omega \mid T(\omega) = \infty \land (i\geq s\Rightarrow \omega \in \cup_{j=k}^n L^i_{j}\}$ for each $s\in\mathbb{N}_0$, we have that $B_k = \cup_{s\geq 0}B_k^s$. As by the union bound $\mathbb{P}[B_k]\leq \sum_{s=0}^{\infty}\mathbb{P}[B^s_k]$, there exists $s\in\mathbb{N}_0$ for which $\mathbb{P}[B_k^s]>0$. Now, for each $M\in\mathbb{N}_0$, let $B_k^{s,M}$ be defined via
\begin{equation*}
B_k^{s,M} = \{\omega\in B_k^s \mid \mathbf{X}_s[k]\leq M\}.
\end{equation*}
Then $B_k^s=\cup_{M=0}^{\infty}B_k^{s,M}$. By the union bound we have $\mathbb{P}[B_k^s]\leq\sum_{M=0}^{\infty}\mathbb{P}[B_k^{s,M}]$, and there exists $M\in\mathbb{N}_0$ such that $\mathbb{P}[B_k^{s,M}]>0$. The set $B=B_k^{s,M}$ satisfies the conditions of the Claim.

\medskip\noindent{\em Inequality} Now we derive the inequality that was used in the proof but whose proof was deferred to the Appendix, i.e.~that for each $t\geq s$ we have
\begin{equation*}
\mathbb{E}[Y_{t+1}] \leq \mathbb{E}[Y_t] - \mathbb{P}[L^t_k\cap D\cap\{F>t\}],
\end{equation*}
where notation is as in the proof of Theorem~\ref{thm:genlexrsm}.

\medskip\noindent To see this, note that
\begin{equation}\label{eq:sum}
\mathbb{E}[Y_{t+1}] = \mathbb{E}[Y_{t+1}\cdot \mathbb{I}(\Omega\backslash D)] + \mathbb{E}[Y_{t+1}\cdot \mathbb{I}(D\cap \{F\leq t\})] + \mathbb{E}[Y_{t+1}\cdot \mathbb{I}(D\cap \{F>t\})],
\end{equation}
where
\begin{itemize}
	\item $\mathbb{E}[Y_{t+1}\cdot \mathbb{I}(\Omega\backslash D)]=0$ since $Y_{t+1}=0$ on $\Omega\backslash D$;
	\item $\mathbb{E}[Y_{t+1}\cdot \mathbb{I}(D\cap \{F\leq t\})] = \mathbb{E}[Y_t\cdot \mathbb{I}(D\cap \{F\leq t\})]$, as $Y_{t+1}=Y_t$ on $D\cap \{F\leq t\}$;
	\item $\mathbb{E}[Y_{t+1}\cdot \mathbb{I}(D\cap \{F>t\})] =\mathbb{E}[\mathbf{X}_{t+1}[k]\cdot \mathbb{I}(D\cap \{F>t\})]$, as $Y_{t+1}=\mathbf{X}_{t+1}[k]$ on $D\cap \{F>t\}$.
\end{itemize}

\medskip\noindent Now, by the theorem assumptions the conditional expectation $\mathbb{E}[\mathbf{X}_{t+1}[k]\mid \mathcal{F}_t]$ exists. Hence, as $D\cap \{F>t\}$ is $\mathcal{F}_t$-measurable for $t\geq s$, we have that
\begin{equation*}
\mathbb{E}\Big[\mathbf{X}_{t+1}[k]\cdot \mathbb{I}(D\cap \{F>t\})\Big] = \mathbb{E}\Big[\mathbb{E}[\mathbf{X}_{t+1}[k]\mid \mathcal{F}_t]\cdot \mathbb{I}(D\cap \{F>t\})\Big].
\end{equation*}
By the theorem assumptions, we also know that $\mathbb{E}[\mathbf{X}_{t+1}[k]\mid \mathcal{F}_t]\leq \mathbf{X}_t[k]-\mathbb{I}(L^t_k)$ on $\cup_{j=k}^n L^t_{j}$. Thus, as  $D\cap\{F>t\}\subseteq \cup_{j=k}^n L^t_{j}$ for $t\geq s$, by plugging this into the previous equation we conclude
\begin{equation*}
\begin{split}
\mathbb{E}\Big[\mathbf{X}_{t+1}[k]\cdot \mathbb{I}(D\cap \{F>t\})\Big] &\leq  \mathbb{E}\Big[(\mathbf{X}_t[k]-\mathbb{I}(L^t_k))\cdot \mathbb{I}(D\cap \{F>t\})\Big] \\
&= \mathbb{E}\Big[(Y_t-\mathbb{I}(L^t_k))\cdot \mathbb{I}(D\cap \{F>t\}\Big],
\end{split}
\end{equation*}
where in the last row we used that $\mathbf{X}_t[k]=Y_t$ on $D\cap\{F>t\}$.

\medskip\noindent Summing up our upper bounds on each term on the RHS of~\eqref{eq:sum}, we conclude that
\begin{equation*}
\begin{split}
\mathbb{E}[Y_{t+1}] &\leq 0 + \mathbb{E}[Y_t\cdot \mathbb{I}(D\cap \{F\leq t\})] + \mathbb{E}\Big[(Y_t-\mathbb{I}(L^t_k))\cdot \mathbb{I}(D\cap \{F>t\})\Big] \\
&= \mathbb{E}[Y_t\cdot \mathbb{I}(D)] - \mathbb{P}[L^t_k\cap D\cap \{F>t\}], \\
&= \mathbb{E}[Y_t] - \mathbb{P}[L^t_k\cap D\cap \{F>t\}].
\end{split}
\end{equation*}
where in the last row we used that $Y_t=0$ on $\Omega\backslash D$. This proves the desired inequality.

\section{Defining Successor States}
\label{app:succstates}

We say that $ (\loc',\vec{x}') $ is a successor of $ (\loc,\vec{x}) $ under transition $ \tau $ if $ \loc' $ is the successor location of $ \loc $ under $ \tau $ and $ \vec{x}' $ relates to $ \vec{x} $ in one of these ways, depending on $ (i,u) = \updates(\tran) $:


%
\begin{itemize}
	\item If $\up = \bot$, then $ \vec{x}' = \vec{x} $.
	\item If $\up$ is an expression, then for the sampling instruction in the expression (if it exists) we sample from the respective distribution and replace the instruction with the sampled value. We then evaluate the resulting expression into a number $ A $ an put $ \vec{x}' = \vec{x}(i \leftarrow A) $.
	\item If $ \up $ is an interval $ R $, then we sample a value $A$ is from the distribution prescribed by $\sigma$ for $\genpath_i$ and $\tau$; we then put $ \vec{x}' = \vec{x}(i \leftarrow A) $.
\end{itemize}

\section{Generalized Definition of Pre-expectation}\label{app:preexpectation}

	In what follows, we generalize the definition presented in~\cite{SriramCAV} in order to allow taking expectation over subsets of states in $\pCFG$ (a necessity for handling the $ \expneg $ constraint). We say that a set $S$ of states in $\pCFG$ is {\em measurable}, if for each location $\loc$ in $\pCFG$ we have that $\{\mathbf{x}\in\mathbb{R}^{|\vars|}\mid (\loc,\mathbf{x})\in S\}\in \mathcal{B}(\mathbb{R}^{|\vars|})$, i.e.\ it is in the Borel sigma-algebra of $\mathbb{R}^{|\vars|}$. Furthermore, we also differentiate between the {\em maximal} and {\em minimal pre-expectation}, which may differ in the case of non-deterministic assignments in programs and intuitively are equal to the maximal resp.~minimal value of the next-step expectation.
	
	\begin{definition}[Pre-expectation]
	\label{def:preexp}
	Let $\lem$ be a 1-dimensional MM, $ \tau = (\loc, \succDist) $ a transition and $S$ be a measurable set of states in $\pCFG$.
	A \emph{maximal pre-expectation} of $ \lem $ in $ \tau $ given $S$ is the function $ \text{max-pre}_{\lem,S}^\tau $ assigning to each state $ (\loc,\vec{x}) $ the following number:
	\begin{itemize}
	\item if $\tau\in\gentransitions_{PB}$ is a transition of probabilistic branching, then $ \text{max-pre}_{\lem,S}^\tau(\loc,\vec{x}) = \sum_{\loc' \in \locs} \succDist(\loc')\cdot \lem(\loc',\vec{x})\cdot \mathbb{I}(S)(\loc',\mathbf{x}) $;
	\item otherwise $ \text{max-pre}_{\lem,S}^\tau(\loc,\vec{x}) =  A_{\loc'}$ where $\loc'$ is the unique successor location of $\tau$ 
	and $ A_{\loc'} $ is defined as follows:
	\begin{enumerate}
	\item if $\updates(\tau)=\bot$, then $A_{\loc'}=\lem(\loc',\mathbf{x})\cdot \mathbb{I}(S)(\loc',\mathbf{x})$
	\item if $\updates(\tau)=(j,\up)$ and $ \up $ is an expression over program variables and sampling instructions, then $ A_{\loc'}= \lem(\loc',\vec{x}(j\leftarrow A))$, where $ A $ is the expected value of $ \up\cdot \mathbb{I}(S)(\loc',u) $ under valuation $ \vec{x} $ (i.e.~integration of $u$ is performed only over the set of states at $\loc'$ that belong to $S$; the integral is well-defined and finite since $\{\mathbf{x}\in\mathbb{R}^{|\vars|}\mid (\loc',\mathbf{x})\in S\}\in \mathcal{B}(\mathbb{R}^{|\vars|})$ and any distribution appearing in variable updates is integrable); 
	\item if $\updates(\tau)=(j,\up)$ and $ \up $ is an interval $ I $ denoting a nondeterministic assignment, then $ A_{\loc'} = \sup_{y \in I\land (\loc',\vec{x}(j\leftarrow 
	y))\in S}  \lem(\loc',\vec{x}(j\leftarrow 
	y)) $.
	\end{enumerate}
	\end{itemize}
	A \emph{minimal pre-expectation} is denoted by $ \text{min-pre}_{\lem,S}^\tau $ and is defined analogously, with the only difference being that in the point 3 we use $ \inf $ instead of $ \sup $.
	We omit the subscript $S$ when $ S $ is the set of all states of $ \pCFG $.
	\end{definition}

\section{Soundness Proof for the GLexRSM maps (Theorem~\ref{thm:genlexrsmmap})}
\label{app:map-soundness}

Let $\pCFG=(\locs,\pvars,\gentransitions,\updates,\guards)$ be a pCFG. The proof proceeds by using the GlexRSM map $\boldsymbol{\lem}$ to define a GLexRSM w.r.t.~$\Term$ in the probability space $(\Omega_\pCFG,\mathcal{F}_\pCFG,\mathbb{P})$ associated to the pCFG. Here, $ \probm $ is a probability measure defined by an arbitrary (but throughout the proof fixed) scheduler and an initial configuration. The proof then uses Theorem~\ref{thm:genlexrsm} to conclude a.s.~termination of the program. 

First, however, we need to define the auxiliary notion of \emph{pre-expectation with respect to a scheduler.}

\begin{definition}[Pre-expectation with respect to a scheduler]
	\label{def:preexpsch}
	Let $S$ be a set of measurable states in $\pCFG$. The \emph{pre-expectation with respect to a scheduler $\sigma$ of a MM $ \lem $} given $S$ is a map $\text{pre}_{\sigma,\lem,S}:\Fpath_\pCFG\rightarrow \mathbb{R}$
	defined as follows. Let $\rho\in\Fpath_\pCFG$ be a finite path ending in $(\loc,\mathbf{x})$. Let $\sigma(\rho)$ denote the distribution over transitions enabled in $(\loc,\mathbf{x})$ defined by the scheduler, and for each transition $ \tau = (\loc, \succDist) \in \support(\sigma(\rho))$ with update element being a nondeterministic assignment from an interval, let $d_\sigma(\rho,\tau)$ be a distribution over the interval defined by $ \sigma $. Then
	\[ \text{pre}_{\sigma,\lem,S}(\rho)=\sum_{\tau=(l,\delta)\in\support(\sigma(\rho))}\sigma(\rho)(\tau)\cdot \text{pre}_{\sigma,\lem,S}^{\tau}(\loc,\vec{x}), \]
	where $ \text{pre}_{\sigma,\lem,S}^{\tau} $ is defined in the same way as the standard pre-expectation $ \text{pre}_{\lem,S}^{\tau} $, \emph{except for the case} when $ \tau = (\loc,\delta)$ carries a non-deterministic assignment, i.e. $ \up(\tau) = (j,u) $ with $ u $ being an interval; in such a case, we put $ \text{pre}_{\sigma,\lem}^{\tau}(\loc,\vec{x}) =  \sum_{\loc' \in \locs} \succDist(\loc')\cdot \lem(\loc',\vec{x}(j\leftarrow \E[d_{\sigma}(\rho,\tau)\cdot \mathbb{I}(S)])$.
	%
	%
	%
\end{definition}

Intuitively, $\text{pre}_{\sigma,\lem,S}$ takes a finite path $\rho\in\Fpath_\pCFG$ as an input, and 
returns the {\em expected value} of $\lem$ in the next step when the integration is performed over program states in $S$, given the program run history and the choices of the scheduler $ \sigma $.



For a run $\rho\in \Omega_{\pCFG}$, let $(\loc^{\rho}_i,\mathbf{x}^{\rho}_i)$ denote the $i$-th configuration along $\rho$, $\tau^{\rho}_i$ the $i$-th transition taken along $\rho$ and $\rho_i$ the prefix of $\rho$ of length $i$. We define an $n$-dimensional stochastic process $(\mathbf{X}_i)_{i=0}^{\infty}$ over $(\Omega_\pCFG,\mathcal{F}_\pCFG,\mathbb{P})$ by setting
\begin{equation*}
\mathbf{X}_i[j](\rho) =
\begin{cases}
\boldsymbol{\eta}_j(\loc^{\rho}_i,\mathbf{x}^{\rho}_i) &\text{if } \loc^{\rho}_i \neq \locterm\\
-1 &\text{otherwise}.
\end{cases}
\end{equation*}
for each $i\in\mathbb{N}_0$, $1\leq j\leq n$ and $\rho\in\Omega_{\pCFG}$. We consider the canonical filtration $(\mathcal{R}_i)_{i=0}^{\infty}$ where $\natfilt_i$ is the smallest sub-sigma-algebra of 
$\mathcal{F}_\pCFG$ such that all the functions $\loc_j^{\cdot}(\rho)=\loc_j^{\rho}$, $ \vec{x}_j^{\cdot}(\rho)=\vec{x}_j^{\rho} $, $0\leq j \leq i$, are $\natfilt_i$-measurable. We also consider the stopping time $\ttime$ with respect to $(\mathcal{R}_i)_{i=0}^{\infty}$ defined by the first hitting time of $\locterm$. For each $i\in\mathbb{N}_0$ we define a partition of $\{\Term>i\}$ into $n$ sets $L^i_1,\dots,L^i_n$ with 
$L^i_j = \{\rho\in\Omega_{\pCFG}\mid \loc^{\rho}_i\neq \locterm,\, \text{ level of } (\loc^{\rho}_i,\mathbf{x}^{\rho}_i) \text{ is } j \}.$
We show that $(\{\vec{X}_{i=0}^{\infty},\{L_1^i,\dots,L_n^i\}_{i=0}^{\infty})$ is an instance of a GLexRSM in the probability space $(\Omega_\pCFG,\mathcal{F}_\pCFG,\mathbb{P})$. We prove this by verifying each of the defining conditions in Definition~\ref{def:genlexrsm}:
\begin{enumerate}
	\item Clearly, each $\mathbf{X}_i[j]$ is $\mathcal{R}_i$-measurable as $\mathbf{X}_i[j]$ is defined in terms of the $i$-th configuration of a program run.
	\item We now show that all conditional expectations required in Definition~\ref{def:genlexrsm} exist. Let $B\in \mathcal{R}_{i+1}$, we need to show that $\mathbb{E}[\mathbf{X}_{i+1}[j] \cdot \mathbb{I}(B)\mid \mathcal{R}_i]$ exists. Fix $\rho\in \Omega_{\pCFG}$, and let $\rho_i$ denote a finite prefix of $\rho$ of length $i$. Then define $\text{proj}(B)(\rho_i)$ as
	\[ \text{proj}(B)(\rho_i) = \{(\loc^{\rho'}_{i+1},\mathbf{x}^{\rho'}_{i+1}) \mid \rho'\in B \land \rho'_i=\rho_i\},\]
	i.e.~it is the set of all $(i+1)$-st states of infinite runs in $B$ whose finite prefix of length $i$ coincides with $\rho_i$. Then we show that
    \[\mathbb{E}[\mathbf{X}_{i+1}[j] \cdot \mathbb{I}(B)\mid \mathcal{R}_i](\rho) = \text{pre}_{\sigma,\boldsymbol{\lem}_j,\text{proj}(B)(\rho_i)}(\rho_i),\]
    i.e.~this conditional expectation can be expressed in terms of the pre-expectation w.r.t.~the scheduler that we define in Definition~\ref{def:preexpsch}. 
    
    To see this, recall that the conditional expectation $\mathbb{E}[\mathbf{X}_{i+1}[j] \cdot \mathbb{I}(B)\mid \mathcal{R}_i]$ is defined as the unique $\mathcal{R}_i$-measurable random variable $Y$ for which $\mathbb{E}[Y\cdot \mathbb{I}(A)]=\mathbb{E}[\mathbf{X}_{i+1}[j] \cdot \mathbb{I}(B\cap A)]$ for any $A\in\mathcal{R}_i$. For fixed $A\in\mathcal{R}_i$, $\mathbb{E}[\mathbf{X}_{i+1}[j] \cdot \mathbb{I}(B\cap A)]$ is equal to the integral of $\mathbf{X}_{i+1}[j]$ when integration is done over all runs in $B$ whose finite prefix of length $i$ ensures that the run belongs to $A$. But this is exactly the value obtained by integrating $\text{pre}_{\sigma,\boldsymbol{\lem}_j,\text{proj}(B)(\rho_i)}(\rho_i)$ over runs $\rho\in A$. This proves that $\text{pre}_{\sigma,\boldsymbol{\lem}_j,\text{proj}(B)(\rho_i)}(\rho_i)$ satisfies all the properties of the conditional expectation for any $\rho\in \Omega_{\pCFG}$. As conditional expectation is a.s.~unique whenever it exists~\cite{Williams:book}, the claim follows.
	\item To see that lexicographic ranking, nonnegativity and boundedness of conditional expectation conditions in Definition~\ref{def:genlexrsm} hold, we need to show that for each $i\in\mathbb{N}_0$, $1\leq j\leq n$ and $\rho \in L^i_j$, we have
	\begin{itemize}
	    \item $\mathbb{E}[\mathbf{X}_{i+1}[j'] \mid \mathcal{F}_i](\rho) \leq \mathbf{X}_i[j'](\rho)$ for each $1\leq j'<j$,
		\item $\mathbb{E}[\mathbf{X}_{i+1}[j] \mid \mathcal{F}_i](\rho) \leq \mathbf{X}_i[j](\rho) - 1$,
		\item $\mathbf{X}_i[j'](\rho)\geq 0$ for each $1\leq j'\leq j$, and
		\item $\mathbb{E}[\mathbf{X}_{i+1}[j']\cdot \mathbb{I}(\cup_{t=0}^{j'-1}L^{i+1}_{t}) \mid \mathcal{F}_i](\rho)\geq 0$ for each $1\leq j'\leq j$ (here, $L^{i+1}_0=\{\Term\leq i+1\}$).
	\end{itemize}
	Before checking these properties, we first prove in Proposition~\ref{poposition:detsch} below that, to show that a program terminates almost-surely, it suffices to consider those schedulers that to each finite program run assign a single transition to be taken. Thus, we without loss of generality assume that $\sigma$ is such a scheduler, and let $\tau$ be the only transition in the support of $\sigma(\rho_i)$. Then since $\rho\in L^i_j$, we have $j=\mathsf{lev}(\tau)$. As $\tau$ is the only transition in the support of $\sigma(\rho_i)$, we have
	\begin{equation*}
	\begin{split}
	\mathbb{E}[\mathbf{X}_{i+1}[j]\mid \mathcal{R}_i](\rho) &= \text{pre}_{\sigma,\boldsymbol{\eta_j}}(\rho_i) = \text{pre}_{\sigma,\boldsymbol{\eta_j}}^{\tau}(\loc^{\rho}_i,\mathbf{x}^{\rho}_i) \leq  \text{max-pre}_{\boldsymbol{\eta_j}}^{\tau}(\loc^{\rho}_i,\mathbf{x}^{\rho}_i)\\
	&\leq \boldsymbol{\eta}_j(\loc^{\rho}_i,\mathbf{x}^{\rho}_i)-1 = \mathbf{X}_i[j](\rho)-1,
	\end{split}
	\end{equation*}
	The inequality $\text{pre}_{\sigma,\boldsymbol{\eta_j}}^{\tau}(\loc^{\rho}_i,\mathbf{x}^{\rho}_i) \leq  \text{max-pre}_{\boldsymbol{\eta_j}}^{\tau}(\loc^{\rho}_i,\mathbf{x}^{\rho}_i)$ holds since $\text{pre}_{\sigma,\boldsymbol{\eta_j}}^{\tau}$ in Definition~\ref{def:preexpsch} and $\text{max-pre}_{\boldsymbol{\eta_j}}^{\tau}$ in Definition~\ref{def:preexp} only differ in the way in which we treat update elements given by nondeterministic assignments where in the first case we take expectation and in the second case the global maximum. The last inequality holds by the $\prank$ condition of GLexRSM maps and since $j=\mathsf{lev}(\tau)$. It follows analogously that $\mathbb{E}[\mathbf{X}_{i+1}[j'] \mid \mathcal{F}_i](\rho) \leq \mathbf{X}_i[j'](\rho)$ for each $1\leq j'<j$. The fact that $\mathbf{X}_i[j'](\rho)\geq 0$ for each $1\leq j'\leq j$ follows from the $\pnneg$ condition of GLexRSM maps. Finally, for each $1\leq j'\leq j$ we have that $\mathbb{E}[\mathbf{X}_{i+1}[j']\cdot \mathbb{I}(\cup_{t=0}^{j'-1}L^{i+1}_{t}) \mid \mathcal{F}_i](\rho) \geq \text{min-pre}_{\lem_{j'},S^{\leq j'-1}_{\mathsf{lev}}}^\tau(\loc^{\rho}_i,\vec{x}^{\rho}_i)$ as $\mathsf{lev}(\tau)=j$, so the expected leftward nonnegativity follows from the $\expneg$ condition of GLexRSM maps.
\end{enumerate}
Therefore, $((\mathbf{X}_i)_{i=0}^{\infty},(L^i_1,\dots,L^i_n)_{i=0}^{\infty})$ is an instance of a GLexRSM w.r.t. the stopping time $\Term$. Thus, and from Theorem~\ref{thm:genlexrsm} we conclude $\mathbb{P}[\Term<\infty]=1$.

\begin{proposition}\label{poposition:detsch}
	Let $\pCFG$ be a pCFG and suppose that there exist a measurable scheduler $\sigma$ and an initial configuration $(\locinit,\vecinit)$ such that $\mathbb{P}^{\sigma}_{\locinit,\vecinit}[\Term = \infty] > 0$. Then there exists a measurable scheduler $\sigma^{\ast}$ which is deterministic in the sense that to each finite path it assigns a single transition with probability $1$, such that $\mathbb{P}^{\sigma^{\ast}}_{\locinit,\vecinit}[\Term = \infty] > 0$.
\end{proposition}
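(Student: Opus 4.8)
The plan is to de-randomize the scheduler's choice of \emph{transition} one ``time layer'' at a time, exploiting the elementary fact that a convex combination of real numbers is bounded above by its largest summand, and then to pass to a limit. Throughout, set $\epsilon := \mathbb{P}^{\sigma}_{\locinit,\vecinit}[\Term = \infty] > 0$; since the transition set $\gentransitions$ is finite, at every finite path $\sigma$ randomizes only over a finite set of enabled transitions, which is what makes the ``pick a best transition'' step possible.

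First I would build a sequence of measurable schedulers $\sigma = \sigma_0, \sigma_1, \sigma_2, \dots$, where $\sigma_n$ assigns a single transition with probability $1$ to every finite path of length $< n$, coincides with $\sigma$ on all finite paths of length $\geq n$, and always keeps $\sigma$'s distributions for interval-valued nondeterministic updates. To pass from $\sigma_n$ to $\sigma_{n+1}$ I would modify $\sigma_n$ only on finite paths $\rho$ of length exactly $n$ that have not yet hit $\locterm$: writing $\tau_1,\dots,\tau_k$ for the transitions enabled at the endpoint of $\rho$ and $c(\rho,\tau_i)$ for the probability of never reaching $\locterm$ along runs whose prefix is $\rho$, whose $(n{+}1)$-st transition is $\tau_i$, and all of whose later transitions are chosen according to $\sigma$, one has, with $\mathcal{R}_n$ the sub-$\sigma$-algebra generated by the first $n$ configurations,
\[
\mathbb{P}^{\sigma_n}_{\locinit,\vecinit}[\Term = \infty \mid \mathcal{R}_n](\rho) \;=\; \sum_{i=1}^{k} \sigma(\rho)(\tau_i)\cdot c(\rho,\tau_i) \;\leq\; \max_{1\leq i\leq k} c(\rho,\tau_i).
\]
I would then let $\sigma_{n+1}(\rho)$ be the Dirac distribution on a transition attaining this maximum (ties broken by least index). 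Since $\sigma_n$ and $\sigma_{n+1}$ agree on all paths of length $<n$, the marginal of $\mathcal{R}_n$ is identical under the two schedulers, so
\[
\mathbb{P}^{\sigma_{n+1}}_{\locinit,\vecinit}[\Term=\infty] \;=\; \mathbb{E}^{\sigma_{n+1}}_{\locinit,\vecinit}\!\Big[\max_i c(\rho,\tau_i)\Big] \;=\; \mathbb{E}^{\sigma_{n}}_{\locinit,\vecinit}\!\Big[\max_i c(\rho,\tau_i)\Big] \;\geq\; \mathbb{P}^{\sigma_{n}}_{\locinit,\vecinit}[\Term=\infty],
\]
and hence, inductively, $\mathbb{P}^{\sigma_{n}}_{\locinit,\vecinit}[\Term=\infty]\geq\epsilon$ for every $n$.

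For each finite path $\rho$, the value $\sigma_n(\rho)$ stabilizes as soon as $n>|\rho|$, so I would define $\sigma^{\ast}(\rho):=\sigma_{|\rho|+1}(\rho)$; this is a well-defined measurable scheduler that assigns a single transition with probability $1$ to every finite path (and still uses $\sigma$'s distributions on interval updates). By construction $\sigma^{\ast}$ agrees with $\sigma_n$ on every path of length $\leq n-1$, and the event $\{\Term>n\}$ depends only on the scheduler's choices at paths of length $\leq n-1$; therefore $\mathbb{P}^{\sigma^{\ast}}_{\locinit,\vecinit}[\Term>n] = \mathbb{P}^{\sigma_n}_{\locinit,\vecinit}[\Term>n] \geq \mathbb{P}^{\sigma_n}_{\locinit,\vecinit}[\Term=\infty] \geq \epsilon$ for all $n$. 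Since $\{\Term>n\}\downarrow\{\Term=\infty\}$, continuity from above yields $\mathbb{P}^{\sigma^{\ast}}_{\locinit,\vecinit}[\Term=\infty]=\lim_n\mathbb{P}^{\sigma^{\ast}}_{\locinit,\vecinit}[\Term>n]\geq\epsilon>0$, as required.

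The step I expect to be the main obstacle is purely measure-theoretic: verifying that $\rho\mapsto c(\rho,\tau_i)$ is $\mathcal{R}_n$-measurable on the paths of length $n$ --- it is an integral of the (measurable) non-termination indicator against the probability kernel obtained from $\sigma$ after forcing $\tau_i$ as the next step, hence measurable by the cylinder construction used to build $\mathbb{P}^{\sigma}_{\locinit,\vecinit}$ --- so that the ``best transition'' selection (a finite-valued argmax) is measurable and each $\sigma_n$, and thus $\sigma^{\ast}$, is a genuine measurable scheduler. The remaining bookkeeping (that the law of $\mathcal{R}_n$ and the event $\{\Term>n\}$ really depend only on the scheduler's behaviour on paths of length $\leq n-1$) is routine from the MDP semantics of pCFGs.
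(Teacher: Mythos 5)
Your proposal is correct and follows essentially the same route as the paper's proof: layer-by-layer determinization of the transition choice by greedily selecting a transition maximizing the conditional non-termination probability, the monotonicity $\mathbb{P}^{\sigma_{n+1}}[\Term=\infty]\geq\mathbb{P}^{\sigma_n}[\Term=\infty]$ obtained by conditioning on the first $n$ steps, and passing to the stabilized limit scheduler $\sigma^{\ast}$; the only (harmless) deviation is the final step, where you use $\mathbb{P}^{\sigma^{\ast}}[\Term>n]=\mathbb{P}^{\sigma_n}[\Term>n]\geq\epsilon$ and continuity from above, which even yields the slightly stronger bound $\mathbb{P}^{\sigma^{\ast}}[\Term=\infty]\geq\epsilon$, whereas the paper argues by contradiction via monotone convergence. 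The measurability of the argmax selection that you flag as the main obstacle is precisely where the paper invests most of its effort, and your sketched justification (measurability of $\rho\mapsto c(\rho,\tau)$ from the cylinder construction, plus a finite-valued argmax with a fixed tie-breaking rule) is the right one.
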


\begin{proof}
	In what follows we fix the initial configuration $(\locinit,\vecinit)$ and omit it from the notation of the probability measure, which we denote by $\mathbb{P}^{\sigma}$. We construct $\sigma^{\ast}$ by constructing a sequence $\sigma=\sigma_0,\sigma_1,\sigma_2,\dots$ of schedulers, where for each $i\in\mathbb{N}_0$ we have that $\sigma_{i+1}$ and $\sigma_i$ agree on histories of length at most $i-1$, and $\sigma_{i+1}$ ''refines'' $\sigma_i$ on histories of length $i$ in such a way that:
	\begin{itemize}
		\item $\sigma_{i+1}$ is deterministic on histories of length at most $i$;
		\item $\sigma_{i+1}$ is measurable;
		\item $\mathbb{P}^{\sigma_{i+1}}[\Term=\infty] \geq \mathbb{P}^{\sigma_i}[\Term=\infty]$.
	\end{itemize}
	Then we define the scheduler $\sigma^{\ast}$ as $\sigma^{\ast}(\rho)=\sigma_i(\rho)$ whenever the length of a finite history $\rho$ is $i$.
	
	The construction of $\sigma_{i+1}$ from $\sigma_i$ proceeds as follows. For finite runs of length different than $i$, we let $\sigma_{i+1}(\rho)=\sigma_i(\rho)$. We are left to define $\sigma_{i+1}$ on histories of length exactly $i$. Fix a finite history $\rho$ in $\pCFG$ of length $i$. For each transition $\tau\in\support(\sigma_i(\rho))$, let $p_i(\rho,\tau)$ denote the probability of non-termination in the probability space of infinite runs starting in the last configuration of $\rho$ under the scheduler $\sigma_i$ (where the history $\rho$ is taken into account to resolve nondeterminism). Then set
	\[ \tau_i(\rho) = \arg \max_{\tau\in\support(\sigma_i(\rho))}p_i(\rho,\tau), \]
	i.e.~the transition in the support of $\sigma(\rho)$ which maximizes this probability. Then we define $\sigma_{i+1}(\rho)$ to be a Dirac-delta distribution which assigns probability $1$ to the transition $\tau_i(\rho)$.
	
	We now check that the constructed scheduler $\sigma_{i+1}$ satisfies each of the $3$ properties above:
	\begin{enumerate}
		\item The fact that, $\sigma_{i+1}$ is deterministic on histories of length at most $i$ immediately follows by induction on $i$.

		\item To see that $\sigma_{i+1}$ is measurable, we again proceed by induction on $i$ and assume that $\sigma_i$ is measurable (base case holds since $\sigma_0=\sigma$). We need to show that, for each program transition $\tau$, we have $\sigma_{i+1}^{-1}(\tau)\in \mathcal{F}_{\mathsf{fin}}$ where $\mathcal{F}_{\mathsf{fin}}$ is the $\sigma$-algebra of finite runs (for the definition of this $\sigma$-algebra, as well as for details on measurable schedulers, see~\cite{AgrawalC018}). Write
		\[ \sigma_{i+1}^{-1}(\tau) = A_{\tau,<i}\cup A_{\tau,=i}\cup A_{\tau,>i}, \]
		where $A_{\tau,<i}$ is the set of all finite runs in $\sigma_{i+1}^{-1}(\tau)$ of length at most $i-1$ and analogously for the other two sets. Since $\sigma_{i+1}$ and $\sigma_i$ coincide on histories of length different than $i$, by measurability of $\sigma_i$ it follows that $A_{\tau,<i}$ and $A_{\tau,>i}$ are in $\mathcal{F}_{\mathsf{fin}}$. Thus it suffices to show that $A_{\tau,=i}\in\mathcal{F}_{\mathsf{fin}}$.
		
		We partition $A_{\tau,=i}$ as $\cup_{\loc\in L}A_{\tau,=i,l}$, where $A_{\tau,=i,l}$ is the set of all finite runs in $A_{\tau,=i}$ with the last location being $l$. It suffices to prove that each $A_{\tau,=i,l}\in\mathcal{F}_{\mathsf{fin}}$.
		
		To show that $A_{\tau,=i,l}\in\mathcal{F}_{\mathsf{fin}}$, we again partition $A_{\tau,=i,l}$ in terms of transitions in the support of $\sigma_i$. Therefore, it suffices to prove for each finite set of transitions $T\subseteq\Delta$ with $\tau\in T$ that $A_{\tau,=i,l,T}\in\mathcal{F}$, where $A_{\tau,=i,l,T}$ is the set of all finite paths of length $i$ and ending in $l$ such that $\support(\sigma_i(\rho))=T$ and $p_i(\rho,\tau)\geq p_i(\rho,\tau')$ for any $\tau'\in T$.
		
		Since $\sigma_i$ is measurable by induction hypothesis, all conditions except the last one define events in $\mathcal{F}_{\mathsf{fin}}$. To see that this is also the case for the condition that $p_i(\rho,\tau)\geq p_i(\rho,\tau')$ for any $\tau'\in T$, observe that this event can be rewritten as a projection onto the set of all finite paths of length $i$ of the set of all infinite runs $\tilde{\rho}$ satisfying the following property
		\begin{equation*}
		\begin{split}
		\mathbb{E}^{\sigma_i}\Big[ \mathbb{I}_{\Term=\infty \land \tau^{\rho}_i=\tau} &| \sigma(\mathcal{F}_i,\mathbb{I}_{\tau^{\rho}_i=\tau}) \Big](\tilde{\rho}) \\
		&\geq \max_{\tau'\in T} \mathbb{E}^{\sigma_i}\Big[ \mathbb{I}_{\Term=\infty \land \tau^{\rho}_i=\tau'} | \sigma(\mathcal{F}_i,\mathbb{I}_{\tau'^{\rho}_i=\tau'}) \Big](\tilde{\rho}),
		\end{split}
		\end{equation*}
		where $\sigma(\mathcal{F}_i,\mathbb{I}_{\tau'^{\rho}_i=\tau'})$ is the smallest $\sigma$-algebra containing $\mathcal{F}_i$ and the $\sigma$-algebra generated by the random variable $\mathbb{I}_{\tau'^{\rho}_i=\tau'}$. All random variables involved in the above inequality are $\mathcal{F}$-measurable by measurability of $\sigma_i$, and so the set of infinite runs satisfying this conditions is in $\mathcal{F}$. This shows that the last condition also defines a measurable event. Thus, $A_{\tau,=i,l}\in\mathcal{F}_{\mathsf{fin}}$ and the claim follows.
		
		\item We need to show that $\mathbb{P}^{\sigma_{i+1}}[\Term=\infty] \geq \mathbb{P}^{\sigma_i}[\Term=\infty]$ for each $i$. To do this, it suffices to show that $\mathbb{P}^{\sigma_{i+1}}[\Term=\infty \land \loc^{\rho}_i=\loc] \geq \mathbb{P}^{\sigma_i}[\Term=\infty \land \loc^{\rho}_i=\loc]$ for each location $\loc\in L$ where the event $\{\loc^{\rho}_i=\loc\}$ denotes that $\loc$ is the $(i+1)$-st location along the program run. If we show this, by taking the sum over all locations on both sides of the inequality, the claim follows.
		
		Fix a location $\loc$. Then
		\begin{equation}\label{eq:bigsequence}
		\begin{split}
		\mathbb{P}^{\sigma_{i+1}}[\Term=\infty \land \loc^{\rho}_i=\loc] &= \mathbb{E}^{\sigma_{i+1}}[\mathbb{I}_{\Term=\infty}\cdot\mathbb{I}_{\loc^{\rho}_i=\loc}]  \\
		&= \mathbb{E}^{\sigma_{i+1}}[\mathbb{E}^{\sigma_{i+1}}[\mathbb{I}_{\Term=\infty}\cdot\mathbb{I}_{\loc^{\rho}_i=\loc} | \mathcal{F}_i]] \\
		&= \mathbb{E}^{\sigma_{i+1}}[\mathbb{I}_{\loc^{\rho}_i=\loc} \cdot \mathbb{E}^{\sigma_{i+1}}[\mathbb{I}_{\Term=\infty} | \mathcal{F}_i]] \\
		&= \mathbb{E}^{\sigma_i}[\mathbb{I}_{\loc^{\rho}_i=\loc} \cdot \mathbb{E}^{\sigma_{i+1}}[\mathbb{I}_{\Term=\infty} | \mathcal{F}_i]].
		\end{split}
		\end{equation}
		The first equality holds since the probability of the event is equal to the expected value of its indicator function. The second equality holds by the definition of conditional expectation. The third equality holds since we may take out from the conditional expectation any variable that is measurable w.r.t.~the $\sigma$-algebra that we condition on~\cite{Williams:book}. The fourth inequality holds since the probability measures defined by $\sigma_{i+1}$ and $\sigma_i$ by construction agree on $\mathcal{F}_i$-measurable sets.
		
		But, by construction we have $\mathbb{E}^{\sigma_{i+1}}[\mathbb{I}_{\Term=\infty} | \mathcal{F}_i](\rho)\geq \mathbb{E}^{\sigma_{i}}[\mathbb{I}_{\Term=\infty} | \mathcal{F}_i](\rho)$ for each infinite run $\rho$, because in the $(i+1)$-st configuration $\sigma_{i+1}$ picks the transition which maximizes the probability of non-termination among all transitions in $\support(\sigma_i(\rho_i))$. Hence, plugging this back into eq.~\eqref{eq:bigsequence} we conclude
		\begin{equation*}
		\begin{split}
		\mathbb{P}^{\sigma_{i+1}}[\Term=\infty \land \loc^{\rho}_i=\loc] &= \mathbb{E}^{\sigma_{i+1}}[\mathbb{I}_{\Term=\infty}\cdot\mathbb{I}_{\loc^{\rho}_i=\loc}]  \\
		&= \mathbb{E}^{\sigma_i}[\mathbb{I}_{\loc^{\rho}_i=\loc} \cdot \mathbb{E}^{\sigma_{i+1}}[\mathbb{I}_{\Term=\infty} | \mathcal{F}_i]] \\
		&\geq \mathbb{E}^{\sigma_i}[\mathbb{I}_{\loc^{\rho}_i=\loc} \cdot \mathbb{E}^{\sigma_i}[\mathbb{I}_{\Term=\infty} | \mathcal{F}_i]] \\
		&= \mathbb{P}^{\sigma_i}[\Term=\infty \land \loc^{\rho}_i=\loc],
		\end{split}
		\end{equation*}
		where the last equality follows by the same sequence of equalities as in eq.~\eqref{eq:bigsequence} with $i+1$ replaced by $i$. This proves the claim.
	\end{enumerate}
	
	Hence, schedulers $\sigma=\sigma_0,\sigma_1,\sigma_2,\dots$ satisfy all the desired properties. Finally, note that by construction $\sigma^{\ast}$ is deterministic too. We are left to show that $\sigma^{\ast}$ is measurable and that $\mathbb{P}^{\sigma^{\ast}}_{\locinit,\vecinit}[\Term = \infty] > 0$.
	
	To see that $\sigma^{\ast}$ is measurable, we need to show that $(\sigma^{\ast})^{-1}(\tau)$ is in the $\sigma$-algebra of finite runs for each transition $\tau$. This follows since, if we write $(\sigma^{\ast})^{-1}(\tau) = \cup_{i\in\mathbb{N}_0}A_i$ where $A_i$ is the set of all finite runs in $(\sigma^{\ast})^{-1}(\tau)$ of length $i$, we have that each $A_i$ is in the $\sigma$-algebra since $\sigma^{\ast}$ coincides with $\sigma_i$ on histories of length at most $i$.
	
	To see that $\mathbb{P}^{\sigma^{\ast}}[\Term=\infty] >0$, suppose that on the contrary $\mathbb{P}^{\sigma^{\ast}}[\Term = \infty] = 0$ so $\mathbb{P}^{\sigma^{\ast}}[\Term < \infty] = 1$. Let $\delta = \mathbb{P}^{\sigma}[\Term = \infty]>0$, where $\sigma$ is the potentially non-deterministic scheduler from the proposition statement. Since we have $\mathbb{P}^{\sigma^{\ast}}[\Term \leq i] \rightarrow \mathbb{P}^{\sigma^{\ast}}[\Term < \infty]$ as $i\rightarrow\infty$ by the Monotone Convergence Theorem~\cite{Williams:book}, there exists $i\in\mathbb{N}_0$ such that $\mathbb{P}^{\sigma^{\ast}}[\Term \leq i]\geq 1-\delta/2$. But
	\[ \mathbb{P}^{\sigma^{\ast}}[\Term \leq i] \leq \mathbb{P}^{\sigma_i}[\Term < \infty] \leq \mathbb{P}^{\sigma_0}[\Term < \infty] = 1-\delta \]
	as $\mathbb{P}^{\sigma_i}[\Term = \infty] \geq \mathbb{P}^{\sigma_0}[\Term = \infty]$ by the above monotonicity property of $(\mathbb{P}^{\sigma_i}[\Term = \infty])_{i=0}^{\infty}$. This gives contradiction, hence $\mathbb{P}^{\sigma^{\ast}}[\Term=\infty] >0$ as claimed.
\end{proof}

\section{Proof of Lemma~\ref{lemma:technical}}
\label{sec:appboundedtechnical}

Let $\boldsymbol{\lem}$ be as in the lemma statement. To prove the lemma claim, we need to show that there exists $K>0$ such that the LEM $\boldsymbol{\lem}'$ of the same dimension as $\boldsymbol{\lem}$, and defined via
\[ \lem'_j(\loc,\mathbf{x}) = \lem_j(\loc,\mathbf{x})+K \]
for each component $j$ and state $(\loc,\mathbf{x})$, is a LinGLexRSM map in $\pCFG$ supported by $I$ (with the level map being the same as for $\boldsymbol{\lem}$).

Note that increasing $\boldsymbol{\lem}$ pointwise by a constant $K>0$ preserves the $\pnneg$, $\prank$ conditions for each transition in $\pCFG$, as well as $\expneg$ for each transition of probabilistic branching. Hence, we are left to show that there exists $K>0$ such that for any transition $\tau$ which is not a transition of probabilistic branching we have
\[ \expneg(\boldsymbol{\lem}',\tau) \equiv \mathbf{x}\in I(\loc)\cap\guards(\tau) \Rightarrow \text{min-pre}_{\lem'_j,S^{\leq j-1}_{\mathsf{lev}}}^\tau(\loc,\vec{x}) \geq 0 \text{ for all } 1 \leq j \leq \mathsf{lev}(\tau). \]

Since the LinPP that induces $\pCFG$ satisfies the BSP and since all non-deterministic assignments are defined by closed intervals, there exists $N>0$ such that $\mathbb{P}_{X\sim d}[|X|>N]=0$ for each distribution $d\in \mathcal{D}$, and $[a,b]\subseteq[-N,N]$ for each interval $[a,b]$ appearing in non-deterministic assignments.

We claim that $K=2\cdot N\cdot \text{max-coeff}(\boldsymbol{\lem})$ satisfies the claim, where $\text{max-coeff}(\boldsymbol{\lem})$ is the maximal absolute value of a coefficient appearing in any expression $\boldsymbol{\lem}(\loc')$ for any location $\loc'$.

To prove this, let $\tau$ be a transition which is not a transition of probabilistic branching, and let $\loc$ and $\loc_1$ be its source and target location, respectively. Let $\mathbf{x}\in I(\loc)\cap\guards(\tau)$ and let $1\leq j\leq \mathsf{lev}(\tau)$. In order to prove that $\text{min-pre}_{\lem'_j,S^{\leq j-1}_{\mathsf{lev}}}^\tau(\loc,\vec{x}) \geq 0$ holds, we distinguish between three cases:
\begin{enumerate}
    \item $\updates(\tau)=\bot$ or $\updates(\tau)=(i,u)$ where $u$ is a linear expression with no sampling instruction. Then $(\loc,\mathbf{x})$ has a single successor state $(\loc_1,\mathbf{x}_1)$ upon executing $\tau$.
    \begin{itemize}
    \item If the level of $(\loc_1,\mathbf{x}_1)$ is at least $j$, then $S^{\leq j-1}_{\mathsf{lev}}$ contains no successor states and $\text{min-pre}_{\lem_j',S^{\leq j-1}_{\mathsf{lev}}}^\tau(\loc,\vec{x})=0$ as the integration is performed over the empty set.
    
    \item Otherwise, $S^{\leq j-1}_{\mathsf{lev}}$ contains $(\loc_1,\mathbf{x}_1)$ and
    \[ \text{min-pre}_{\lem_j',S^{\leq j-1}_{\mathsf{lev}}}^\tau(\loc,\vec{x})=\lem_j'(\loc_1,\mathbf{x}_1)=\lem_j(\loc_1,\mathbf{x}_1)+K=\text{min-pre}_{\lem_j}^\tau(\loc,\vec{x})+K\geq 0, \]
    where the inequality $\text{min-pre}_{\lem_j}^\tau(\loc,\vec{x}) \geq 0$ holds since $\weakexpneg(\boldsymbol{\lem},\tau)$.
    \end{itemize}
    
    \item If $\updates(\tau)=(i,u)$ where $u$ is a linear expression which contains sampling from a distribution $d\in \mathcal{D}$, we may write $u=u'+X$, where $u'$ is the linear expression part of $u$ with no distribution samplings, and $X\sim d$. Then
\begin{equation*}
\begin{split}
    &\text{min-pre}_{\lem'_j,S^{\leq j-1}_{\mathsf{lev}}}^{\tau}(\loc,\vec{x}) = \mathbb{E}_{X\sim d}\Big[\lem'_j(\loc_1,\vec{x}[i\leftarrow u'+X])\cdot \mathbb{I}( \text{next state has level } \leq j-1) \Big] \\
    &= \mathbb{E}_{X\sim d}\Big[(\lem_j(\loc_1,\vec{x}[i\leftarrow u'])+K+\text{coeff}[i]\cdot X) \cdot \mathbb{I}(\text{next state has level } \leq j-1)\Big]\\
    &= \mathbb{E}_{X\sim d}\Big[(\lem_j(\loc_1,\vec{x}[i\leftarrow u'])+\text{coeff}[i]\cdot\mathbb{E}[X]) \cdot \mathbb{I}(\text{next state has level } \leq j-1)\Big]\\
    &\hspace{0.3cm} + \mathbb{E}_{X\sim d}\Big[(K-\text{coeff}[i]\cdot\mathbb{E}[X]+\text{coeff}[i]\cdot X) \cdot \mathbb{I}(\text{next state has level } \leq j-1)\Big]\\
    &\geq \mathbb{E}_{X\sim d}\Big[\text{min-pre}_{\lem_j}^\tau(\loc,\vec{x}) \cdot \mathbb{I}(\text{next state has level } \leq j-1)\Big]\\
    &\hspace{0.3cm} + \mathbb{E}_{X\sim d}\Big[(K-2\cdot N\cdot \text{max-coeff}(\boldsymbol{\lem})) \cdot \mathbb{I}(\text{next state has level } \leq j-1)\Big]\\
    &\geq 0,
\end{split}
\end{equation*}
where $\text{min-pre}_{\lem_j}^\tau(\loc,\vec{x})\geq 0$ holds since $\weakexpneg(\boldsymbol{\lem},\tau)$, and $\mathbb{E}[X]\geq-N$ holds and $X\geq-N$ holds almost-surely since $\mathbb{P}_{X\sim d}[|X|>N]=0$ by the definition of $N$.

\item If $\updates(\tau)=(i,u)$ where $u$ is an interval $[a,b]$ defining a non-deterministic assignment, we have
\begin{equation*}
\begin{split}
    \text{min-pre}_{\lem'_j,S^{\leq j-1}_{\mathsf{lev}}}^{\tau}(\loc,\vec{x}) &= \inf_{X\in[a,b]\land (\loc_1,\vec{x}(i\leftarrow X))\in S^{\leq j-1}_{\mathsf{lev}}}\Big[\lem'_j(\loc_1,\vec{x}[i\leftarrow X])\Big] \\
    &\geq \inf_{X\in[a,b]}\Big[\lem'_j(\loc_1,\vec{x}[i\leftarrow X])\Big] \\
    &= \text{min-pre}_{\lem_j}^\tau(\loc,\vec{x}) + K \\
    &\geq 0,
\end{split}
\end{equation*}
where $\text{min-pre}_{\lem_j}^\tau(\loc,\vec{x})\geq 0$ holds since $\weakexpneg(\boldsymbol{\lem},\tau)$ and $K \geq 0$ by definition.
\end{enumerate}

\section{Details on our Algorithm in Section~\ref{sec:boundeddis}}
\label{app:algo}

\begin{algorithm}[t]
	\SetKwInOut{Input}{input}\SetKwInOut{Output}{output}
	\DontPrintSemicolon
	
	\Input{A LinPP $\pCFG$ with the BSP, linear invariant $I$.}
	\Output{LinGLexRSM map supported by $I$ if it exists, otherwise ''No LinGLexRSM map''}
	$\mathcal{T}$ $\longleftarrow$ all transitions in $\pCFG$; $d$ $\longleftarrow$ $0$\\
	\While{$\mathcal{T}$ is non-empty}{
		$d$ $\longleftarrow$ $d+1$\\
		construct $\mathcal{LP}_{\mathcal{T}}$\label{aline:reduce}\\
		\If{$\mathcal{LP}_{\mathcal{T}}$ is feasible}{
			$\lem_d$ $\longleftarrow$ LEM defined by the optimal solution of $\mathcal{LP}_{\mathcal{T}}$\\
			$\mathcal{T}$ $\longleftarrow$ $\mathcal{T}\backslash \{\tau\in\mathcal{T}\mid \tau \text{ is 1-ranked by } \lem_d\}$}\label{aline:prune}
		\lElse{\Return No LinGLexRSM map}
	}
	$\max \longleftarrow \text{max-coeff}(\boldsymbol{\lem})$\\
	$N \longleftarrow$ constant such that all distributions and intervals supported in $[-N,N]$\\
	\For{$1\leq j\leq d$}{
	    $\lem_j \longleftarrow \lem_j + 2\cdot N\cdot \max$}
	\Return{$(\lem_1,\dots,\lem_d)$}
	\caption{Synthesis of LinGLexRSM maps in LinPPs with the BSP.}
	\label{algo:glexrsm-map}
\end{algorithm}

The pseudocode of our algorithm is presented in Algorithm~\ref{algo:glexrsm-map} (where $N$ and $\text{max-coeff}(\boldsymbol{\lem})$ are defined analogously as in Appendix~\ref{sec:appboundedtechnical}). 

Next, we show how condition 4 from the main text can be encoded using linear constraints. Let $\tau=(\loc,\delta)\in \mathcal{T}$ be a transition of probabilistic branching. Then $\support(\delta)=(\loc_1,\loc_2)$ and $ \updates(\tau) = \bot $, so for any $\mathbf{x}\in I(\loc)\cap \guards(\tau)$ we have that
    \[ \text{pre}_{\lem,S}^\tau(\loc,\vec{x}) = \delta(\loc_1)\cdot\lem(\loc_1,\mathbf{x})\cdot \mathbb{I}(S)(\loc_1,\mathbf{x})+\delta(\loc_2)\cdot\lem(\loc_2,\mathbf{x})\cdot \mathbb{I}(S)(\loc_2,\mathbf{x}), \]
    i.e.~we include the term $\delta(\loc_i)\cdot\lem(\loc_i,\mathbf{x})$ for $i\in\{1,2\}$ whenever $(\loc_i,\mathbf{x})\in S$. Hence, to encode condition 4 for $\tau$, define $\guards_1=\neg(\lor_{\tau=(\loc_1,)\in\mathcal{T}}\guards(\tau))$ and $\guards_2=\neg(\lor_{\tau=(\loc_2,)\in\mathcal{T}}\guards(\tau))$, and encode the following $3$ conditions:
    \begin{itemize}
        \item $\forall \mathbf{x}.\, \mathbf{x}\in I(\loc)\cap \guards(\tau)\cap \guards_1\cap \guards_2 \Rightarrow\delta(\loc_1)\cdot\lem(\loc_1,\mathbf{x})+\delta(\loc_2)\cdot\lem(\loc_2,\mathbf{x}) \geq 0,$
        \item $\forall \mathbf{x}.\, \mathbf{x}\in I(\loc)\cap \guards(\tau)\cap \guards_1\cap \neg\guards_2 \Rightarrow\delta(\loc_1)\cdot\lem(\loc_1,\mathbf{x}) \geq 0,$ and
        \item $\forall \mathbf{x}.\, \mathbf{x}\in I(\loc)\cap \guards(\tau)\cap \neg \guards_1\cap \guards_2 \Rightarrow\delta(\loc_2)\cdot\lem(\loc_2,\mathbf{x}) \geq 0.$
    \end{itemize}
    Each condition can be encoded via linear constraint as in~\cite{ADFG10:lexicographic,AgrawalC018}. Clearly, the size of the encoding is polynomial.
    
Note that the negations in $\guards_1$ and $\guards_2$ might result in strict inequalities appearing in the above constraints. However, it was shown in~\cite{CFNH16:prob-termination} that this is not an issue for the Farkas' lemma (FL) conversion. Indeed, Lemma~1 in~\cite{CFNH16:prob-termination} shows that, whenever a system of linear inequalities on the LHS of a constraint is feasible, the strict inequalities may without loss of generality be replaced by non-strict inequalities. On the other hand, Lemma~2 in~\cite{CFNH16:prob-termination} shows that this feasibility check can be done in polynomial time.
    
The following theorem establishes soundness and completeness of our algorithm in Section~\ref{sec:boundeddis}.

\begin{theorem}[Soundness and completeness]\label{thm:algorithm}
	If Algorithm~\ref{algo:glexrsm-map} computes an LEM $\boldsymbol{\lem}$, then $\boldsymbol{\lem}$ is a LinGLexRSM map supported by $I$ and $\pCFG$ is a.s.~terminating. Moreover, whenever $\pCFG$ admits a LinGLexRSM map supported by $I$, Algorithm~\ref{algo:glexrsm-map} finds such a map of minimal dimension and proves a.s.~termination of $\pCFG$.
\end{theorem}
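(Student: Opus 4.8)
The plan is to prove the two directions separately. \emph{Soundness} will follow almost entirely from Lemma~\ref{lemma:technical} together with Theorem~\ref{thm:genlexrsmmap}; the only work is to read off, from the termination condition of the algorithm, that the constructed LEM satisfies the hypotheses of Lemma~\ref{lemma:technical}. \emph{Completeness} is the genuinely algorithmic part, and it will be an adaptation, to our $\expneg$/condition-4 setting, of the greedy argument for lexicographic ranking synthesis of~\cite{ADFG10:lexicographic,AgrawalC018}.

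For soundness, suppose Algorithm~\ref{algo:glexrsm-map} outputs $\boldsymbol{\lem}=(\lem_1,\dots,\lem_d)$, and let $\boldsymbol{\lem}^{\circ}$ be the LEM built inside the main \textbf{while}-loop, i.e.\ $\boldsymbol{\lem}$ \emph{before} the final pointwise increase by $K=2\cdot N\cdot\text{max-coeff}(\boldsymbol{\lem}^{\circ})$ --- which is exactly the constant used in the proof of Lemma~\ref{lemma:technical}. Define a level map $\mathsf{lev}$ by letting $\mathsf{lev}(\tau)$ be the iteration in which $\tau$ was removed from $\mathcal{T}$ (and $\mathsf{lev}(\tau)=0$ on the self-loop at $\locterm$); this is well-defined as the loop only returns an LEM once $\mathcal{T}=\emptyset$. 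Fix $\tau$ with $\mathsf{lev}(\tau)=k$. For every iteration $j\in\{1,\dots,k\}$ we have $\tau\in\mathcal{T}$, so $\lem^{\circ}_j$ satisfies LP-conditions~1--4 for $\tau$, and since $\tau$ is not $1$-ranked before iteration $k$ it is merely unaffecting for $j<k$, while condition~5 makes $\lem^{\circ}_k$ $1$-rank $\tau$. Reading these across $j=1,\dots,k$ yields: condition~1 gives $\pnneg(\boldsymbol{\lem}^{\circ},\tau)$; conditions~2 ($j<k$) and~5 ($j=k$) give $\prank(\boldsymbol{\lem}^{\circ},\tau)$; condition~3 gives $\weakexpneg(\boldsymbol{\lem}^{\circ},\tau)$ for $\tau\notin\Delta_{PB}$; and for $\tau\in\Delta_{PB}$ (so $\updates(\tau)=\bot$ and $\text{pre}=\text{min-pre}$), since after iteration $j-1$ the removed transitions are exactly those of level $\leq j-1$, the set $S$ in condition~4 equals $S^{\leq j-1}_{\mathsf{lev}}$, so condition~4 across $j=1,\dots,k$ is precisely $\expneg(\boldsymbol{\lem}^{\circ},\tau)$. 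Thus $\boldsymbol{\lem}^{\circ}$ meets all hypotheses of Lemma~\ref{lemma:technical}, so $\boldsymbol{\lem}$ (obtained by adding $K$ to each component) is a LinGLexRSM map supported by $I$, and almost-sure termination follows from Theorem~\ref{thm:genlexrsmmap}.

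For completeness, first observe that for a fixed $\mathcal{T}$ the family of subsets that are $1$-ranked by \emph{some} LEM satisfying LP-conditions~1--4 is closed under union: conditions~1--4 are preserved by pointwise addition of LEMs (using subadditivity of $\sup$ and superadditivity of $\inf$ for the nondeterministic-assignment case, and exact additivity for $\Delta_{PB}$), and $f+g$ $1$-ranks $\tau$ whenever $f$ $1$-ranks $\tau$ and $g$ is unaffecting on $\tau$. Hence $\mathcal{LP}_{\mathcal{T}}$ is feasible iff this family is nonempty, and then it has a unique $\subseteq$-maximal element, which the LP computes by the standard device of one slack variable $\eps_\tau\in[0,1]$ per transition with $\text{max-pre}^{\tau}_{\lem}\le\lem-\eps_\tau$ and objective $\sum_\tau\eps_\tau$. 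Now assume $\pCFG$ admits a LinGLexRSM map $\boldsymbol{\lem}^{\ast}$ of dimension $d^{\ast}$ supported by $I$, with level map $\mathsf{lev}^{\ast}$. We prove by induction on $k$ that after $k$ iterations the removed transitions contain $\{\tau\mid\mathsf{lev}^{\ast}(\tau)\le k\}$. In the step, let $\mathcal{T}_k\subseteq\{\mathsf{lev}^{\ast}\ge k+1\}$ be what remains; if $\mathcal{T}_k=\emptyset$ we are done, else let $m\ge k+1$ be the least value of $\mathsf{lev}^{\ast}$ on $\mathcal{T}_k$ and take the single component $\lem^{\ast}_m$. By $\prank$ of $\boldsymbol{\lem}^{\ast}$, $\lem^{\ast}_m$ is unaffecting on all of $\mathcal{T}_k$ and $1$-ranks the nonempty set $\{\tau\in\mathcal{T}_k\mid\mathsf{lev}^{\ast}(\tau)=m\}$; by $\pnneg$ it is nonnegative on $I(\loc)\cap\guards(\tau)$ there; and by $\expneg$ we have $\text{min-pre}^{\tau}_{\lem^{\ast}_m,\,S^{\leq m-1}_{\mathsf{lev}^{\ast}}}\ge 0$. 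The key point: because $m$ is the \emph{least} remaining level, every transition of level $<m$ is already removed, so the set $S$ in condition~4 at this iteration contains $S^{\leq m-1}_{\mathsf{lev}^{\ast}}$ and its extra states all have $\mathsf{lev}^{\ast}\ge m$, where $\lem^{\ast}_m\ge 0$; splitting the integral gives $\text{pre}^{\tau}_{\lem^{\ast}_m,S}\ge\text{pre}^{\tau}_{\lem^{\ast}_m,\,S^{\leq m-1}_{\mathsf{lev}^{\ast}}}\ge 0$, and the same splitting yields $\text{min-pre}^{\tau}_{\lem^{\ast}_m}\ge 0$ for the $\weakexpneg$ condition (the ``$\expneg\Rightarrow\weakexpneg$'' remark of Section~\ref{sec:boundeddis}). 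Hence $\lem^{\ast}_m$ is feasible for $\mathcal{LP}_{\mathcal{T}_k}$ and $1$-ranks a nonempty set, so the LP is feasible and the maximal ranked set removed in iteration $k+1$ contains $\{\tau\in\mathcal{T}_k\mid\mathsf{lev}^{\ast}(\tau)=m\}$, completing the induction.

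The induction shows the algorithm never returns ``No LinGLexRSM map'' when a witness exists and terminates within $d^{\ast}$ iterations, so the output has dimension $\le d^{\ast}$; since the soundness direction shows the output is itself a LinGLexRSM map supported by $I$, that dimension is minimal. Finally, there are at most $|\Delta|$ iterations, each solving one linear program of size polynomial in $\pCFG,I$ --- the Farkas-lemma encoding of conditions~1--5, with strict inequalities handled as in~\cite{CFNH16:prob-termination} and condition~4 encoded by the case split of Appendix~\ref{app:algo} --- so the procedure runs in polynomial time. I expect the main obstacle to be precisely the completeness step above: verifying that the single restricted component $\lem^{\ast}_m$ stays feasible for $\mathcal{LP}_{\mathcal{T}_k}$ despite the $S$-dependent condition~4. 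This is where using the \emph{least} remaining level $m$ (rather than, say, a uniform shift of $\boldsymbol{\lem}^{\ast}$) is essential, as it is exactly what forces the state set over which condition~4 integrates to lie between $S^{\leq m-1}_{\mathsf{lev}^{\ast}}$ and a set on which $\lem^{\ast}_m$ is already known to be nonnegative.
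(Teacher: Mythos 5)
Your proposal is correct, and its soundness half is essentially the paper's own argument: read the LP conditions of each iteration against the level map ``$\mathsf{lev}(\tau)=$ iteration of removal'', observe that the set $S$ of condition~4 at iteration $j$ is exactly $S^{\leq j-1}_{\mathsf{lev}}$, and invoke Lemma~\ref{lemma:technical} followed by Theorem~\ref{thm:genlexrsmmap}. Where you diverge is completeness: the paper argues by contradiction, with a case split on the first index $j$ at which the hypothetical witness $\boldsymbol{\lem}'$ ranks a transition that the algorithm's $\lem_j$ misses (then $\lem_j+\lem'_j$ would rank strictly more, contradicting the LP's maximality objective), plus a second case where $\lem'_{k+1}$ would still be feasible after the algorithm stops; minimality is then re-proved by an analogous contradiction. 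You instead run a forward induction showing that after $k$ iterations the removed set contains $\{\mathsf{lev}^{\ast}\leq k\}$, using the least remaining witness level $m$ and verifying explicitly that the single component $\lem^{\ast}_m$ is feasible for $\mathcal{LP}_{\mathcal{T}_k}$ --- in particular the sandwiching $S^{\leq m-1}_{\mathsf{lev}^{\ast}}\subseteq S$ with $\lem^{\ast}_m\geq 0$ on the extra states, a point the paper leaves implicit when it silently assumes the witness components satisfy the iteration-dependent conditions 3--4. Both routes rest on the same two pillars (closure of feasible LEMs under pointwise sum, and the fact that the LP optimum $1$-ranks the unique maximal rankable set), and both inherit the paper's own glossed-over points: that the slack-variable objective really yields the maximal set rather than merely the maximal count, and that nonnegativity at \emph{successor} states of level $\geq j$ (used for $\expneg\Rightarrow\weakexpneg$ and for condition~4) tacitly assumes those states lie in the invariant-and-guard region where $\pnneg$ applies. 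Your version buys a cleaner, quantifier-free bookkeeping (no comparison of ``what $\lem'_{j'}$ would rank if computed by the algorithm''), an immediate bound of $d^{\ast}$ on the number of iterations, and hence a one-line minimality argument; the paper's version is shorter on the page but relies on a case analysis whose exhaustiveness and feasibility claims are stated rather than checked.
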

\begin{proof}
We first prove that the algorithm is sound, i.e.~that $\boldsymbol{\lem}$ is a LinGLexRSM map supported by $I$, and thus that $\pCFG$ is a.s.~terminating. Let $k$ be the total number of algorithm iterations, so that $\boldsymbol{\lem}=(\lem_1,\dots,\lem_k)$. Define the level map $\mathsf{lev}:\gentransitions\rightarrow \{0,1\dots,k\}$ with the self loop at $\locterm$ having level $0$, and for any other transition $\tau$ we define $\mathsf{lev}(\tau)$ as the index of algorithm iteration in which it was removed from $\mathcal{T}$. The fact that $\boldsymbol{\lem}$ computed in lines 1-8 in Algorithm~\ref{algo:glexrsm-map} satisfies $\pnneg$, $\prank$, $\expneg$ for transitions of probabilistic branching and $\weakexpneg$ for all other transitions then easily follows from conditions imposed by the algorithm in each iteration. From the proof of Lemma~\ref{lemma:technical}, it then follows that $\boldsymbol{\lem}$ obtained upon increasing each component by a constant term in lines 9-13 satisfies $\expneg$ for every transition. Hence $\boldsymbol{\lem}$ is a LinGLexRSM map supported by $I$ and this concludes the soundness proof.

To prove completeness as well as the minimality of dimension, we observe that a pointwise sum of two LinGLexRSM maps supported by $I$ is also a LinGLexRSM map supported by $I$. This follows by linearity of integration and therefore the pre-expectation operator. The argument is straightforward, thus we omit it. However, this simple observation will be central in the rest of the proof.

Suppose first that the program admits a LinGLexRSM $\boldsymbol{\lem}'=(\lem_1',\dots,\lem_m')$ supported by $I$. We show that Algorithm~\ref{algo:glexrsm-map} then finds one such LinGLexRSM map (up to a constant term), hence the algorithm is complete. We prove this by contradiction. Suppose that the algorithm stops after the $k$-th iteration, after having computed $(\lem_1,\dots,\lem_k)$ but with $\mathcal{T}$ still containing at least one transition. Then $(\lem_1,\dots,\lem_k)$ does not rank every transition in the pCFG. Thus, $\boldsymbol{\lem}'$ ranks strictly more transitions than $(\lem_1,\dots,\lem_k)$. We distinguish two cases:
\begin{enumerate}
\item There exists the smallest $1\leq j\leq \min\{k,m\}$ such that
\begin{itemize}
	\item for each $1\leq j'<j$, $\lem_{j'}$ and $\lem_{j'}'$ would rank exactly the same set of transitions if computed by the algorithm in the $j'$-th iteration, but
	\item $\lem_{j}'$ ranks a transition which is not ranked by $\lem_j$ in the $j$-th iteration of the algorithm.
\end{itemize}
Then the algorithm could have ranked strictly more transitions by computing $\lem_j+\lem_j'$ instead of $\lem_j$, which contradicts the maximality condition for computing new components that is imposed by the algorithm.
\item There is no such index. But then, since $\boldsymbol{\lem}'=(\lem_1',\dots,\lem_m')$ is the LinGLexRSM supported by $I$, it must follow that $m>k$ and that $\lem'_{k+1}$ would satisfy all the conditions imposed by the algorithm in the $(k+1)$-st iteration and it would rank at least $1$ new transition, thus the algorithm couldn't terminate after iteration $k$.
\end{enumerate}
Thus, in both cases we reach contradiction, and the completeness claim on Algorithm~\ref{algo:glexrsm-map} holds.

\smallskip Minimality of dimension is proved analogously as completeness, by contradiction. If there exists a LinGLexRSM map w.r.t.~$S$ supported by $I$ of dimension strictly smaller by that found by the algorithm, we can use it analogously as above to show that at some iteration the algorithm could have ranked a strictly larger number of transitions, contradicting the maximality condition for computing new components that is imposed by the algorithm. Thus the minimality of dimension claim follows.
\end{proof}

\section{Proof of Lemma~\ref{lemma:technical2}}\label{app:technical2}

Let $\boldsymbol{\lem}$ be an LEM whose existence is assumed in the lemma statement. Analogously as in the proof of Lemma~\ref{lemma:technical}, we may increase $\boldsymbol{\lem}$ by a constant term in order to ensure that all transitions satisfy $\expneg$, except for maybe those that in the variable update involve sampling from a distribution of unbounded support. So without loss of generality assume that $\boldsymbol{\lem}$ satisfies $\expneg$ for all other transitions. Denote the set of all transitions in $\pCFG$ that involve sampling from distributions of unbounded support by $\transitions^{\text{unb}}$.

As before, denote by $\text{max-coeff}(\boldsymbol{\lem})$ the maximal absolute value of a coefficient appearing in $\boldsymbol{\lem}$. Also, define $N$ analogously as in the proof of Lemma~\ref{lemma:technical}, i.e. for all distributions of bounded support that appear in sampling instructions and for all bounded intervals appearing in non-deterministic assignments, we have that they are supported in $[-N,N]$. Finally, since we assume that each distribution appearing in sampling instructions is integrable, for each $d\in\mathcal{D}$ we have $\mathbb{E}_{X\sim d}[|X|]<\infty$. Thus, by triangle inequality we also have $\mathbb{E}_{X\sim d}[|X-\mathbb{E}[X]|]<\infty$. Hence, as $\mathbb{E}_{X\sim d}[|X-\mathbb{E}[X]|\cdot \mathbb{I}(|X-\mathbb{E}[X]|< k)]\rightarrow \mathbb{E}_{X\sim d}[|X-\mathbb{E}[X]|]$ as $k\rightarrow\infty$ by the Monotone Convergence Theorem~\cite{Williams:book}, for each $d\in\mathcal{D}$ there exists $k(d)\in\mathbb{N}$ such that
\[ \mathbb{E}_{X\sim d}\Big[|X-\mathbb{E}[X]|\cdot \mathbb{I}\Big(|X-\mathbb{E}[X]|\geq k\Big)\Big]<\frac{1}{2\cdot\text{max-coeff}(\boldsymbol{\lem})}. \]
for all $k\geq k(d)$. Define $K=\max_{d\in\mathcal{D}}k(d)$, which is finite as $\mathcal{D}$ is finite.

Next, define the set $U\subseteq \{1,2,\dots,\dim(\boldsymbol{\lem})\}\times \locs$ of pairs of indices of components of $\boldsymbol{\lem}$ and locations in $\pCFG$ as follows:
\[ U = \{(j,\loc')\mid \exists \tau\in\transitions^{\text{unb}} \text{ s.t. } \mathsf{lev}(\tau)=j \text{ and } \loc' \text{ is the target location of } \tau \}. \]
Thus, $U$ is the set of pairs of indices of components of $\boldsymbol{\lem}$ and locations in $\pCFG$ on which the condition $\unbounded$ imposes additional template restrictions.

We now define an LEM $\boldsymbol{\lem}'$ which is of the same dimension as $\boldsymbol{\lem}$, and for each component $\lem_j'$ we define
\begin{equation*}
    \lem_j'(\loc,\mathbf{x}) = \begin{cases}
        0 &\text{if } (j,\loc)\in U \text{ and } \lem_j(\loc,\mathbf{x})< -C,\\
        2\cdot\lem_j(\loc,\mathbf{x})+2\cdot C &\text{otherwise},
    \end{cases}
\end{equation*}
where $C>0$ is a constant to be determined. We claim that there exists $C>0$ for which $\boldsymbol{\lem}'$ is a piececwise linear GLexRSM map supported by $I$ (with the level map being the same as for $\boldsymbol{\lem}$). The fact that $\boldsymbol{\lem}'$ is piecewise linear for every $C>0$ is clear from its definition, so we are left to verify that there exists $C>0$ for which $\pnneg(\boldsymbol{\lem}',\tau)$, $\prank(\boldsymbol{\lem}',\tau)$ and $\expneg(\boldsymbol{\lem}',\tau)$ hold for each transition $\tau$.

\medskip\noindent{\em Transitions not in $\transitions^{\text{unb}}$.} We show that, for transitions not in $\transitions^{\text{unb}}$, the claim holds for every $C>2\cdot N\cdot \text{max-coeff}(\boldsymbol{\lem})$.

First, suppose that $\tau$ is a transition of probabilistic branching. Let $\loc$ be its source location and $\loc_1$, $\loc_2$ its target locations. Since we assume that $\pCFG$ is induced by a program in LinPP$^\ast$ meaning that neither $\loc_1$ nor $\loc_2$ are target locations of any transition in $\transitions^{\text{unb}}$, by definition of $U$ and $\boldsymbol{\lem}'$ we must have $\lem_j'(\loc_1,\mathbf{x})=2\cdot\lem_j(\loc_1,\mathbf{x})+2\cdot C$ and $\lem_j'(\loc_2,\mathbf{x})=2\cdot\lem_j(\loc_2,\mathbf{x})+2\cdot C$ for each component $\lem_j'$ and each $\mathbf{x}$. On the other hand, the piecewise linear transformation defining $\boldsymbol{\lem}'$ ensures that $\lem_j'(\loc,\mathbf{x})\geq 2\cdot\lem_j(\loc,\mathbf{x})+2\cdot C$ for each component $\lem_j'$ and each $\mathbf{x}$. Hence, it is easy to see that $\pnneg(\boldsymbol{\lem}',\tau)$, $\prank(\boldsymbol{\lem}',\tau)$ and $\expneg(\boldsymbol{\lem}',\tau)$ all hold as they hold for $\boldsymbol{\lem}$. Note that this proof allows any $C>0$.

Next, let $\tau\not\in\transitions^{\text{unb}}$ be a transition which is also not a transition of probabilistic branching. Denote by $\loc$ its source location, $\loc_1$ its target location, $\mathbf{x}\in I(\loc)\cap \guards(\tau)$, and $(\loc_1,\mathbf{x}_1)$ some state that is reachable from $(\loc,\mathbf{x})$ by executing $\tau$. We claim that, for each $1\leq j\leq \mathsf{lev}(\tau)$,
\[ \lem_j(\loc_1,\mathbf{x}_1)\geq -2\cdot N\cdot \text{max-coeff}(\boldsymbol{\lem}), \] 
By definition of $\boldsymbol{\lem}'$, if $C>2\cdot N\cdot \text{max-coeff}(\boldsymbol{\lem})$ this would imply that $\lem_j'(\loc_1,\mathbf{x}_1)=2\cdot\lem_j(\loc_1,\mathbf{x}_1)+2\cdot C$ for each successor state $(\loc_1,\mathbf{x}_1)$. Since $\lem_j'(\loc,\mathbf{x})\geq 2\cdot\lem_j(\loc,\mathbf{x})+2\cdot C$, it is again easy to see that $\pnneg(\boldsymbol{\lem}',\tau)$, $\prank(\boldsymbol{\lem}',\tau)$ and $\expneg(\boldsymbol{\lem}',\tau)$ all remain true as they are true for $\boldsymbol{\lem}$.

To prove the claim, fix $1\leq j\leq \mathsf{lev}(\tau)$ and a successor state $(\loc_1,\mathbf{x}_1)$. By the condition $\weakexpneg(\boldsymbol{\lem},\tau)$, we have $\text{min-pre}^{\tau}_{\lem_j}(\loc,\mathbf{x})\geq 0$.
If the update element of $\tau$ does not contain a sampling instruction, then we must have $\lem_j(\loc_1,\mathbf{x}_1)\geq\text{min-pre}^{\tau}_{\lem_j}(\loc,\mathbf{x})$ by definition of $\text{min-pre}$, and the claim follows. Otherwise, suppose that $\updates(\tau)=(i,u)$ with $u=u'+X$ where $u'$ is a linear expression without sampling instructions and $X\sim d$ where $d$ is a distribution of bounded support (recall, we assumed that $\tau\not\in\transitions^{\text{unb}}$). Then, by linearity of $\lem_j$ we easily see that
\begin{equation*}
\begin{split}
    \lem_j(\loc_1,\mathbf{x}_1) &= \text{min-pre}^{\tau}_{\lem_j}(\loc,\mathbf{x})+\text{coeff}[i]\cdot X-\text{coeff}[i]\cdot \mathbb{E}[X]\\
    &\geq  \text{coeff}[i]\cdot X-\text{coeff}[i]\cdot \mathbb{E}[X]\\
    &\geq -2\cdot |\text{coeff}[i]| \cdot N\\
    &\geq -2\cdot \text{max-coeff}(\boldsymbol{\lem}) \cdot N,
\end{split}    
\end{equation*}
as claimed. The first inequality follows from $\weakexpneg(\boldsymbol{\lem},\tau)$, and the rest follows by definition of $N$ and the assumption that $d$ has bounded support. Here, we used $\text{coeff}[i]$ to denote the coefficient in $\lem_j$ of the variable with index $i$ at location $\loc_1$.

\medskip\noindent{\em Transitions in $\transitions^{\text{unb}}$.} Let $\tau\in \transitions^{\text{unb}}$, let $\loc$ be its source location and $\loc_1$ its target location. We claim that each of the conditions $\pnneg(\boldsymbol{\lem}',\tau)$, $\prank(\boldsymbol{\lem}',\tau)$ and $\expneg(\boldsymbol{\lem}',\tau)$ holds if $C>K\cdot \text{max-coeff}(\boldsymbol{\lem})$.

\begin{enumerate}
    \item $\pnneg(\boldsymbol{\lem}',\tau)$: By definition of $\boldsymbol{\lem}'$, for each component $\lem_j'$ and $\mathbf{x}$ we have $\lem_j'(\loc,\mathbf{x})\geq 2\cdot\lem_j(\loc,\mathbf{x})+C$. Hence, the claim follows since $\pnneg(\boldsymbol{\lem},\tau)$ holds.
    
    \item $\prank(\boldsymbol{\lem}',\tau)$: If $\mathbf{x}\in I(\loc)\cap \guards(\tau)$, we need to show that for $1\leq j< \mathsf{lev}(\tau)$ we have
    $\lem'_j(\loc,\mathbf{x}) \geq \text{max-pre}^{\tau}_{\lem'_j}(\loc,\mathbf{x})$,
    and that
    $\lem'_{\mathsf{lev}(\tau)}(\loc,\mathbf{x}) \geq \text{max-pre}^{\tau}_{\lem'_{\mathsf{lev}(\tau)}}(\loc,\mathbf{x})+1$.
    
    First, fix $1\leq j<\mathsf{lev}(\tau)$. From the $\unbounded$ condition, we know that the coefficient in $\lem_j$ of the variable updated by $\tau$ at $\loc_1$ is $0$. Hence, $\lem_j$ has the same value at each successor state of $(\loc,\mathbf{x})$ upon executing $\tau$ which is thus equal to $\text{max-pre}^{\tau}_{\lem_j}(\loc,\mathbf{x})$. By the $\weakexpneg(\boldsymbol{\lem},\tau)$ this value has to be nonnegative. Hence, the value of $\lem_j'$ is also the same at each successor state and equal to $2\cdot\text{max-pre}^{\tau}_{\lem_j}(\loc,\mathbf{x})+C$. Thus, as $\lem_j'(\loc,\mathbf{x})\geq 2\cdot\lem_j(\loc,\mathbf{x})+C$, the desired inequality holds as $\prank(\boldsymbol{\lem},\tau)$ holds.
    
    We now prove that $\lem'_{\mathsf{lev}(\tau)}(\loc,\mathbf{x}) \geq \text{max-pre}^{\tau}_{\lem'_{\mathsf{lev}(\tau)}}(\loc,\mathbf{x})+1$. Let $\updates(\tau)=(i,u)$ with $u=u'+X$ where $u'$ is a linear expression with no sampling instructions and $X\sim d$ where $d$ has unbounded support. Since $(\mathsf{lev}(\tau),\loc_1)\in U$, we have
    \begin{equation*}
    \begin{split}
        &\text{max-pre}^{\tau}_{\lem'_{\mathsf{lev}(\tau)}}(\loc,\mathbf{x}) = \mathbb{E}_{X\sim d}\Big[\lem'_{\mathsf{lev}(\tau)}(\loc_1,\vec{x}[i\leftarrow u'+X])\Big] \\
        &= \mathbb{E}_{X\sim d}\Big[(2\cdot\lem_{\mathsf{lev}(\tau)}(\loc_1,\vec{x}[i\leftarrow u'+X])+2\cdot C)\cdot\mathbb{I}\Big(\lem_{\mathsf{lev}(\tau)}(\loc_1,\vec{x}[i\leftarrow u'+X])\geq -C\Big) \Big]\\
        &= 2\cdot C+2\cdot \text{max-pre}^{\tau}_{\lem_{\mathsf{lev}(\tau)}}(\loc,\mathbf{x})\\
        &- \mathbb{E}_{X\sim d}\Big[(2\cdot\lem_{\mathsf{lev}(\tau)}(\loc_1,\vec{x}[i\leftarrow u'+X])+2\cdot C)\cdot\mathbb{I}\Big(\lem_{\mathsf{lev}(\tau)}(\loc_1,\vec{x}[i\leftarrow u'+X])< -C\Big) \Big]\\
        &\hspace{1cm} \text{(use that $\text{max-pre}^{\tau}_{\lem_{\mathsf{lev}(\tau)}}(\loc,\mathbf{x})\leq \lem_{\mathsf{lef}(\tau)}(\loc,\mathbf{x})-1$ by $\prank(\boldsymbol{\lem},\tau)$)}\\
        &\leq 2\cdot C+2\cdot \lem_{\mathsf{lef}(\tau)}(\loc,\mathbf{x})-2\\
        &- \mathbb{E}_{X\sim d}\Big[(2\cdot\lem_{\mathsf{lev}(\tau)}(\loc_1,\vec{x}[i\leftarrow u'+X])+2\cdot C)\cdot\mathbb{I}\Big(\lem_{\mathsf{lev}(\tau)}(\loc_1,\vec{x}[i\leftarrow u'+X])< -C\Big) \Big]\\
        &\hspace{1cm} \text{(from definition of $\boldsymbol{\lem}'$, have $2\cdot C+2\cdot \lem_{\mathsf{lef}(\tau)}(\loc,\mathbf{x})\leq \lem'_{\mathsf{lev}(\tau)}(\loc,\mathbf{x})$)}\\
        &\leq \lem'_{\mathsf{lev}(\tau)}(\loc,\mathbf{x})-2\\
        &- \mathbb{E}_{X\sim d}\Big[(2\cdot\lem_{\mathsf{lev}(\tau)}(\loc_1,\vec{x}[i\leftarrow u'+X])+2\cdot C)\cdot\mathbb{I}\Big(\lem_{\mathsf{lev}(\tau)}(\loc_1,\vec{x}[i\leftarrow u'+X])< -C\Big) \Big].
    \end{split}
    \end{equation*}
    Now, $\lem_{\mathsf{lev}(\tau)}(\loc_1,\vec{x}[i\leftarrow u'+X])=\lem_{\mathsf{lev}(\tau)}(\loc_1,\vec{x}[i\leftarrow u'+\mathbb{E}[X])+(X-\mathbb{E}[X])\cdot \text{coeff}[i]=\text{min-pre}^{\tau}_{\lem_{\mathsf{lev}(\tau)}}(\loc,\mathbf{x})+(X-\mathbb{E}[X])\cdot \text{coeff}[i]\geq (X-\mathbb{E}[X])\cdot \text{coeff}[i]$, which holds by linearity of $\boldsymbol{\lem}$ and the last inequality follows from $\weakexpneg(\boldsymbol{\lem},\tau)$. Here, we use $\text{coeff}[i]$ to denote the coefficient at $\loc_1$ of the variable with index $i$ in $\lem_{\mathsf{lev}(\tau)}$. Hence, continuing the above sequence inequalities we have that (note that the integrand is negative on the set over which integration is performed hence, as we have the minus sign outside the integral, the whole expression increases if we further decrease the integrand but enlarge the event over which the integration is performed in a way which keeps the integrand negative)
    \begin{equation*}
        \leq \lem'_{\mathsf{lev}(\tau)}(\loc,\mathbf{x})-2 - \mathbb{E}_{X\sim d}\Big[(2\cdot(X-\mathbb{E}[X])\cdot \text{coeff}[i]+2\cdot C)\cdot\mathbb{I}\Big((X-\mathbb{E}[X])\cdot \text{coeff}[i]< -C\Big) \Big].
    \end{equation*}
    Thus, to conclude that  $\lem'_{\mathsf{lev}(\tau)}(\loc,\mathbf{x}) \geq \text{max-pre}^{\tau}_{\lem'_{\mathsf{lev}(\tau)}}(\loc,\mathbf{x})+1$ it suffices to show
    \[ \mathbb{E}_{X\sim d}\Big[(2\cdot(X-\mathbb{E}[X])\cdot \text{coeff}[i]+2\cdot C)\cdot\mathbb{I}\Big((X-\mathbb{E}[X])\cdot \text{coeff}[i]< -C\Big) \Big]\geq -1. \]
    Now observe that, in order for $(X-\mathbb{E}[X])\cdot \text{coeff}[i]< -C$ to hold we must have $X-\mathbb{E}[X]$ and $\text{coeff}[i]$ be of opposite signs. Therefore, we have
    \begin{equation*}
    \begin{split}
        &\mathbb{E}_{X\sim d}\Big[(2\cdot(X-\mathbb{E}[X])\cdot \text{coeff}[i]+2\cdot C)\cdot\mathbb{I}\Big((X-\mathbb{E}[X])\cdot \text{coeff}[i]< -C\Big) \Big]\\
        &\geq \mathbb{E}_{X\sim d}\Big[(-2|X-\mathbb{E}[X]|\cdot\text{max-coeff}(\boldsymbol{\lem})+2\cdot C)\cdot\mathbb{I}\Big(|X-\mathbb{E}[X]|> C / \text{max-coeff}(\boldsymbol{\lem})\Big) \Big]\\
        &= 2\text{max-coeff}(\boldsymbol{\lem})\cdot\mathbb{E}_{X\sim d}\Big[(-|X-\mathbb{E}[X]|+C / \text{max-coeff}(\boldsymbol{\lem}))\cdot\mathbb{I}\Big(|X-\mathbb{E}[X]|> C / \text{max-coeff}(\boldsymbol{\lem})\Big) \Big]\\
        &\geq 2\text{max-coeff}(\boldsymbol{\lem})\cdot\mathbb{E}_{X\sim d}\Big[(-|X-\mathbb{E}[X]|)\cdot\mathbb{I}\Big(|X-\mathbb{E}[X]|> C / \text{max-coeff}(\boldsymbol{\lem})\Big) \Big]\\
        &\geq -1,
    \end{split}
    \end{equation*}
    where the last inequality holds since $C>K\cdot \text{max-coeff}(\boldsymbol{\lem})$ and by definition of $K$.
    
    \item $\expneg(\boldsymbol{\lem'},\tau)$: Let $\mathbf{x}\in I(\loc)\cap \guards(\tau)$, we show that $\text{min-pre}_{\lem'_j,S^{\leq j-1}_{\mathsf{lev}}}^\tau(\loc,\vec{x}) \geq 0$ for all $ 1 \leq j \leq \mathsf{lev}(\tau)$. For $1\leq j< \mathsf{lev}(\tau)$, by the $\unbounded$ condition we know that, at $\loc_1$, the coefficient in $\lem_j$ of the variable which is updated by $\tau$ is $0$. Hence, the value of $\lem_j$ at all successor states of $(\loc,\mathbf{x})$ upon executing $\tau$ is the same, and is equal to $\text{min-pre}_{\lem_j}^\tau(\loc,\vec{x})$ which is nonnegative by $\weakexpneg(\boldsymbol{\lem},\tau)$.  Therefore, we must have $\lem_j'(\loc_1,\mathbf{x}_1)=2\cdot\lem_j(\loc_1,\mathbf{x}_1)+2\cdot C$ at each state $(\loc_1,\mathbf{x}_1)$ reachable from $(\loc,\mathbf{x})$ by executing $\tau$. Therefore, we also have $\text{min-pre}_{\lem'_j,S^{\leq j-1}_{\mathsf{lev}}}^\tau(\loc,\vec{x}) = 2\cdot \text{min-pre}_{\lem_j,S^{\leq j-1}_{\mathsf{lev}}}^\tau(\loc,\vec{x})+2\cdot C\geq 0$, where the last inequality holds since $\expneg(\boldsymbol{\lem},\tau)$. \\
    For the component $\mathsf{lev}(\tau)$, note that $(\mathsf{lev}(\tau),\loc_1)\in U$. Thus, from our definition of $\lem'$ it follows that $\lem'_{\mathsf{lev}(\tau)}(\loc_1,\mathbf{x}_1)\geq 0$ for every variable valuation $\mathbf{x}_1$. Hence,
    \[ \text{min-pre}_{\lem'_{\mathsf{lev}(\tau)},S^{\leq \mathsf{lev}(\tau)-1}_{\mathsf{lev}}}^\tau(\loc,\vec{x}) \geq 0 \] 
    since it is just an integral of a non-negative function over the set $S^{\leq \mathsf{lev}(\tau)-1}_{\mathsf{lev}}$.
\end{enumerate}

\medskip\noindent{\em Choice of $C$.} From the analysis of all cases above, we see that
\[ C=(2N+K)\cdot\text{max-coeff}(\boldsymbol{\lem})+1 \]
ensures that $\boldsymbol{\lem}'$ is a piecewise linear GLexRSM map, which proves the lemma claim.

\section{Proof of Theorem~\ref{thm:soundness2}}\label{app:soundess2}

We first prove that the algorithm is sound, i.e.~that the LEM $\boldsymbol{\lem}$ which algorithm outputs must satisfy all conditions of Lemma~\ref{lemma:technical2}. Let $k$ be the total number of algorithm iterations, so that $\boldsymbol{\lem}=(\lem_1,\dots,\lem_k)$. Define the level map $\mathsf{lev}:\gentransitions\rightarrow \{0,1\dots,k\}$ with the self loop at $\locterm$ having level $0$, and for any other transition $\tau$ we define $\mathsf{lev}(\tau)$ as the index of algorithm iteration in which it was removed from $\mathcal{T}$. The fact that $\boldsymbol{\lem}$ satisfies $\pnneg$, $\prank$, $\expneg$ for transitions of probabilistic branching and $\weakexpneg$ for all other transitions then easily follows from conditions imposed by the algorithm in each iteration. Furthermore, the way we constructed linear programs $\mathcal{LP}_{\mathcal{T}}^{\text{unb}}$ and $\mathcal{LP}_{\mathcal{T}}^{\tau,\text{unb}}$ for each $\tau\in\transitions^{\text{unb}}\cap\mathcal{T}$ ensures that $\boldsymbol{\lem}$ satisfies $\unbounded$. Hence $\boldsymbol{\lem}$ is an LEM supported by $I$ which satisfies all conditions of Lemma~\ref{lemma:technical2}.

We now prove completeness, i.e.~that for any $(\pCFG,I)$ with $\pCFG$ coming from a program in LinPP$^{\ast}$, Algorithm~\ref{algo:generalprogs} decides the existence of an LEM supported by $I$ which satisfies conditions of Lemma~\ref{lemma:technical2}. First, observe that for any two LEMs supported by $I$ and which satisfy all the conditions in Lemma~\ref{lemma:technical2}, their pointwise sum also satisfies all the conditions in Lemma~\ref{lemma:technical2}. Hence, an argument analogous to that in the proof of Theorem~\ref{thm:algorithm} shows that the algorithm finds an LEM satisfying all the conditions of Lemma~\ref{lemma:technical2} whenever one such LEM exists by observing that whenever an LEM exists but $\mathcal{T}$ is non-empty, either $\mathcal{LP}_{\mathcal{T}}^{\text{unb}}$ or $\mathcal{LP}_{\mathcal{T}}^{\tau,\text{unb}}$ for at least one $\tau\in \transitions^{\text{unb}}\cap\mathcal{T}$ has a solution which $1$-ranks at least one new transition.

Note that due to a fixed ordering of transitions in $\transitions^{\text{unb}}\cap \mathcal{T}$ through which the algorithm iterates, the dimension of the computed LEM which satisfies all the conditions of Lemma~\ref{lemma:technical2} need not be minimal. However, this was not the claim of our theorem.

\end{document}